\documentclass[journal, onecolumn]{IEEEtran}
\usepackage{color, verbatim, hyperref}
\usepackage{amssymb}
\usepackage{amsmath, amsthm}
\usepackage{wrapfig}
\usepackage{epsfig}
\usepackage[usenames,dvipsnames]{xcolor}
\usepackage{mathdots}
\usepackage{tikz}
\usepackage{bbm}
\usepackage{listings, morefloats}
\usetikzlibrary{arrows}
\setlength{\oddsidemargin}{0in}
\setlength{\evensidemargin}{0in}
\headheight=.5in
\headsep=.3in
\setlength{\textwidth}{6.5in}
\textheight=8.7in
\bibliographystyle{IEEEtran}
\ifCLASSINFOpdf
  % \usepackage[pdftex]{graphicx}
  % declare the path(s) where your graphic files are
  % \graphicspath{{../pdf/}{../jpeg/}}
  % and their extensions so you won't have to specify these with
  % every instance of \includegraphics
  % \DeclareGraphicsExtensions{.pdf,.jpeg,.png}
\else
  % or other class option (dvipsone, dvipdf, if not using dvips). graphicx
  % will default to the driver specified in the system graphics.cfg if no
  % driver is specified.
  % \usepackage[dvips]{graphicx}
  % declare the path(s) where your graphic files are
  % \graphicspath{{../eps/}}
  % and their extensions so you won't have to specify these with
  % every instance of \includegraphics
  % \DeclareGraphicsExtensions{.eps}
\fi
\hyphenation{}

\begin{document}
\newtheorem{theorem}{Theorem}
\newtheorem{example}{Example}
\newtheorem{claim}{Claim}
\newtheorem{corollary}{Corollary}
\newtheorem{conjecture}{Conjecture}
\newtheorem{definition}{Definition}
\newtheorem{lemma}{Lemma}
\newtheorem {remark}{Remark}

%
% paper title
% can use linebreaks \\ within to get better formatting as desired
% Do not put math or special symbols in the title.
\title{Critical Graphs in Index Coding}
%
%
% author names and IEEE memberships
% note positions of commas and nonbreaking spaces ( ~ ) LaTeX will not break
% a structure at a ~ so this keeps an author's name from being broken across
% two lines.
% use \thanks{} to gain access to the first footnote area
% a separate \thanks must be used for each paragraph as LaTeX2e's \thanks
% was not built to handle multiple paragraphs
%

\author{Mehrdad Tahmasbi, Amirbehshad Shahrasbi, and Amin Gohari
\\ {\it \small Electrical Engineering Department, Sharif University of Technology, Tehran, Iran.}\\
{\it \small emails: \{tahmasebi\_mehrdad, shahrasbi\}@ee.sharif.edu, aminzadeh@sharif.edu.}}
\maketitle

% As a general rule, do not put math, special symbols or citations
% in the abstract or keywords.
\begin{abstract}
In this paper we define critical graphs as minimal graphs that support a given set of rates for the index coding problem, and study them for both the one-shot and asymptotic setups. For the case of equal rates, we find the critical graph with minimum number of edges for both one-shot and asymptotic cases. For the general case of possibly distinct rates, we show that for one-shot and asymptotic linear index coding, as well as asymptotic non-linear index coding, each critical graph is a union of disjoint strongly connected subgraphs (USCS). On the other hand, we identify a non-USCS critical graph for a one-shot non-linear index coding problem. Next, we identify a few graph structures that are critical. We also generalize some of our results to the groupcast problem. In addition, we show that the capacity region of the index coding is additive for union of disjoint graphs. 
\end{abstract}

% Note that keywords are not normally used for peerreview papers.
%\begin{IEEEkeywords}
%IEEEtran, journal, \LaTeX, paper, template.
%\end{IEEEkeywords}

% For peer review papers, you can put extra information on the cover
% page as needed:
% \ifCLASSOPTIONpeerreview
% \begin{center} \bfseries EDICS Category: 3-BBND \end{center}
% \fi
%
% For peerreview papers, this IEEEtran command inserts a page break and
% creates the second title. It will be ignored for other modes.
\IEEEpeerreviewmaketitle

%%%%%%%%%%%%%%%%%%%%%%%%%%%%%%%%%
% 				   Introduction					% 
%%%%%%%%%%%%%%%%%%%%%%%%%%%%%%%%%

\section{Introduction}
\IEEEPARstart{I}{ntroduced}  by Birk and Kol in \cite{YitzhakTomer}, index coding is the problem of transmitting a set of messages to a number of receivers via a public communication. Each receiver may also have some side information consisting of messages desired by some of the other receivers. This problem has been the subject of several recent studies (e.g. see \cite{SonHoangVitalyYeowMeng}-\cite{HamedViveckSyedAli}) In the most general form of the problem, each message can be desired by more than one destination. However the special case of each message being desired by exactly one receiver admits a graph theoretic representation in terms of directed graphs and thus has received particular attention. More specifically, if there are $m$ receivers, we can construct a graph with $m$ vertices. We draw a directed edge from vertex $i$ to vertex $j$ if and only if receiver $i$ knows the desired message by receiver $j$. For the most part of this paper we work with this graph model for the index coding problem. 
Observe that in the most general case, one has to work with hypergraphs to represent the side information.

It is common to study the index coding problem in terms of an achievable rate region based on the size of the $m$ messages to be decoded by the $m$ receivers (see Section \ref{sec:definitions} for a formal definition). 
Here the rate of a receiver refers to the normalized amount of information transmitted to it. 
The set of all achievable rates, i.e. the capacity region, for index coding problem remains an open problem. Nonetheless, the problem has been solved in some special cases, notably for the equal-rate case under certain graph structures \cite{HamedViveckSyedAli}. In \cite{KarthikeyanAlexandrosMichael}, the capacity region of an index coding problem is related to some graph theoretical features such as local chromatic number. A difference between the performance of linear and non-linear codes is characterized in \cite{EyalUri}.

\subsection{Connections with Network Coding and Wireless Communication}
The index coding problem has significant connections with network coding and wireless communications. It is clear that every instance of index coding can be represented as an instance of a network coding in which a single node desires to send messages via a unit capacity channel and some channels with infinite capacity representing side information. In \cite{MichelleSalimMichelle} it is shown that for both linear and non-linear case, for any instance of networking coding problem, there exists an instance of index coding problem with the same capacity region. In addition, in \cite{SalimAlexCostas} a reduction from an instance of network coding problem to an instance of index coding problem is introduced. They used this reduction to show that the capacity regions for linear and one-shot cases are not equal to capacity region of asymptotic non-linear case. 

In \cite{SyedAli}, the topological interference management problem is introduced for both wired and wireless networks. In the wireless set up, this problem refers to the analysis of degrees of freedom of an interference network with the assumption that all weak interferences  are zero. This natural problem in the wireless networks has a significant relation to the index coding problem. For example, in \cite{SyedAli} it is proved that the set of degrees of freedom which are available through linear schemes in the topological interference management problem is equal to the linear capacity region of an equivalent index coding problem. Moreover, the non-linear degree of freedom region of the interference management problem is related to the non-linear capacity region of the problem.

\subsection{Our contribution}
Given a fixed set of rates, let $\mathcal{G}$ denote the set of all graphs that support the rates. We are interested in minimal members of $\mathcal{G}$ (with respect to containment of the edge set). More specifically, a graph is said to be \emph{critical} (or \emph{edge critical}) if (1) it belongs to $\mathcal{G}$ and (2) deletion of any edge from the graph makes it to fall outside $\mathcal{G}$. It is useful to study critical graphs since it identifies the minimum-cost architectures of the networks supporting a given set of rates. 

To best of our knowledge, critical graphs for index coding have not been studied before.
We present several results in this paper regarding critical graphs. When the rates are all equal, we identify the critical graph with minimum number of edges (Theorem \ref{thm1}). Next we study
the general case of arbitrary rates via an additivity result that we prove about index coding (Theorem \ref{thm3new}; here we basically prove that a simple time division strategy is optimal).
We use this result to show that critical graphs for one-shot and asymptotic linear index coding as well as those of non-linear asymptotic index coding are structured, by proving that they have to be a union of disjoint strongly connected subgraphs (USCS) (Theorem \ref{thm2}). Equivalently, each directed edge in the graph has to be on a cycle in the graph. On the other hand, for non-linear one-shot index coding, we construct a counterexample by finding a critical graph that is not USCS. In addition, using Theorem \ref{thm3new}, we prove criticality of the union of two critical graphs (Theorem \ref{thm3}). Moreover, we show this result holds for symmetric criticality in both one-shot and asymptotic linear case, as well as in asymptotic non-linear case (Theorem \ref{thm3}). In the next step, we provide a comprehensive list of symmetric critical graphs for graphs with at most five nodes, and use this list identify two general classes of critical graphs which explain many of the critical graphs that we had observed (Theorem \ref{thm5}). Finally, we have generalized some of our results to the groupcast index coding setting (Theorem \ref{thm6}).

A potential application of index coding problem is in the study of wireless broadcast networks. For example, in \cite{NeelySaberZhang} side information of nodes in a broadcast wireless network has been employed to make the communication more efficient. In such schemes, study of critical graphs can be helpful as it identifies the side information that cannot make the communication more efficient. For instance, as our results shows, those side information whose corresponding edge in the side information graph do not lie on any cycle, will not improve the efficiency of communication. Hence, these side information can be eliminated. Accordingly, the total storage resources of wireless nodes can be decreased using our results.

Additionally, even though we are mostly interested in critical graphs in this work, our results address the ``index coding problem'' itself. For instance, our result on the additivity of the capacity region of index coding problem (Theorem \ref{thm3new}) finds the index coding capacity of a graph in terms of those of its subgraphs, if the graph has a certain structure. Further we believe that by studying the characteristics of critical graphs, one can use the capacitiy region of some critical subgraphs of the graph $\mathsf{G}$ to find a lower bound for the index coding problem introduced by graph $\mathsf{G}$.

This paper is organized as follows: in Section \ref{sec:definitions}, we introduce the basic notation and definitions used in this paper. The results are provided in Section \ref{Section:StatementOfResults}. In Subsection \ref{subsec:structures}, some results that suggest structures for critical graphs are given. In addition, In Subsection \ref{subsec:groupcast}, an expansion of the former results for groupcast index coding is presented. Appendix \ref{appendixA} contains a few lemmas used in the proofs, Appendix \ref{appendixB} contains the source file for a C program needed to do an exhaustive search to complete the proof of one of the theorems, and Appendix \ref{appendixC} contains a list of all symmetric rate critical graphs on 5 vertices.

%%%%%%%%%%%%%%%%%%%%%%%%%%%%%%%%%
% 				Definitions and Notations				% 
%%%%%%%%%%%%%%%%%%%%%%%%%%%%%%%%%

\section{Definitions and Notations}\label{sec:definitions}
A (unicast) index coding problem comprises of $m$ nodes, $\{1, \cdots, m\}$, and a set of $m$ message $\{W_1, \cdots, W_m\}$ where node $i$ needs to decode the message $W_i$, $i=1, \cdots, m$. The side information of node $i$ is assumed to be a subset of $\{W_1, \cdots, W_{i-1}, W_{i+1},\cdots, W_m\}$. We can illustrate this side information by a directed graph $\mathsf{G} = (\mathcal{V}, \mathcal{E})$, where $\mathcal{V} = \{1, \cdots, m\}$ and node $i$ has an edge to node $j$ (that is, $(i, j) \in \mathcal{E}$) if node $i$ knows $W_j$. For simplicity in the rest of this paper, we use graph as a shorthand for directed graphs. Undirected graphs are referred to by the term ``bidirectional graph".

\begin{definition}
A code for an index coding problem (or an index code) consists of
\begin{enumerate}
\item $m$ alphabet sets $\mathcal{W}_i$, $i=1,2,\cdots,m$ where the message intended by the $i$-th party, $W_i$, belongs to $\mathcal{W}_i$;
\item An encoding function $f$ from $\mathcal{W}_1\times\cdots\times\mathcal{W}_m$ to $\{1,2,\cdots, N\}$ that compresses the messages $(W_1, \cdots, W_m)$ into a symbol in $\{1,2,\cdots, N\}$. $f(W_1, \cdots, W_m)$ is called the public message since it will be made available to all the nodes;
\item A set of $m$ decoding functions at the nodes from $\{1,2,\cdots, N\}\times \prod_{(i, j)\in \mathcal{E}}\mathcal{W}_j$ to $\mathcal{W}_i$ for $i=1,2,\cdots, m$. Every node should be able to decode its message using the public message and its side information.
\end{enumerate}
The rate vector associated with the code is a vector $(r_1, \cdots, r_m)$ where
\begin{eqnarray}\label{eqn:newnewn2}
r_i = \frac{\log(|\mathcal{W}_i|)}{\log(N)}.
\end{eqnarray}
We will use $\overline{\bf r}$ to indicate the rate vector $(r_1, \cdots, r_m)$.

Probability of error associated to the code is the probability that node $i$ fails to correctly decode $W_i$ for some $i=1,2,\cdots, m$, where rvs $W_i$ ($i=1,2,\cdots, m$) are assumed to be uniform on their alphabet set and mutually independent of each other.
\end{definition}

Linear codes form a subclass of codes, and are defined as follows:
\begin{definition}
A linear code for an index coding problem with finite field $\mathbb{F}$ consists of
\begin{enumerate}
\item$m$ positive integers $l_1, \cdots , l_m$ indicating that $W_i \in \mathbb{F}^{l_i}$ is a sequence of length $l_i$ of symbols in $\mathbb{F}$. In other words, the alphabet set for the random variable $W_i$ is $\mathcal{W}_i=\mathbb{F}^{l_i}$;
\item A linear map $f$ from $\mathcal{W}_1\times\cdots\times\mathcal{W}_m$ to $\mathbb{F}^n$ that compresses the messages $(W_1, \cdots, W_m)$ into a sequence of length $n$ of symbols in $\mathbb{F}$;
\item A set of $m$ linear decoding functions from $\mathbb{F}^n\times \prod_{(i, j)\in \mathcal{E}}\mathcal{W}_j$ to $\mathcal{W}_i$ for $i=1,2,\cdots, m$.
\end{enumerate}
The rate vector associated with the code is a vector $\overline{\bf r} = (r_1, \cdots, r_m)$ where
\begin{eqnarray}\label{eqn:newnewn1}
r_i = \frac{l_i}{n}.
\end{eqnarray}
\end{definition}

Now, we introduce two classifications for the index coding problem.
\begin{definition}{\bf Linear and Non-Linear Index Coding}\\
In linear index coding we restrict ourselves to linear codes over an arbitrary finite field $\mathbb{F}$. However in the non-linear index coding we are allowed to use an arbitrary code.
\end{definition}
\begin{definition}{\bf One-Shot and Asymptotic Index Coding}\\
In the one-shot problem, we have fixed message alphabets $\mathcal{W}_1, \cdots, \mathcal{W}_m$ and seek the code with the smallest
alphabet size for the public message that can result in a zero probability of error. On the other hand, in the asymptotic coding scheme we are only given rate vector $\overline{\bf r} = (r_1, \cdots, r_m)$. Then there should exist a sequence of codes with zero error probability whose rate vectors converge to $\overline{\bf r} = (r_1, \cdots, r_m)$.

\end{definition}
\begin{remark}
The asymptotic index coding is generally defined for a vanishing probability of error rather than an exactly zero probability of error. However it is shown in \cite{HavivLangberg} that the two definitions are equivalent.\label{rmk:zeroerror}
\end{remark}

\begin{definition}{\bf Critical Graphs and Symmetric Rate Critical Graphs}\\
Given an index coding problem (linear or non-linear/one-shot or asymptotic) on a graph, we say that the graph is \emph{critical} if removal of any edge from it \emph{strictly} shrinks the rate region (capacity when we are looking at asymptotics) associated to the graph. 

The maximum symmetric rate supported by a graph is the supremum of $r$ such that $\overline{\bf r} = (r,r,\cdots, r)$ is achievable. We say that the graph is \emph{symmetric rate critical} if removal of any edge from it \emph{strictly} reduces the maximum symmetric rate by the graph. Every symmetric rate critical graph is critical, but the reverse is not necessarily true (see Theorem \ref {thm:newstr1}).
\end{definition}

Next we need the following definitions from graph theory:
\begin{definition}{\bf Tur\'{a}n Graph}\\
Tur\'{a}n Graph of order $m$ and $k$, denoted by $T(m, k)$, is a bidirectional  complete $k$-partite graph with $b$ parts of size $a + 1$ and $k - b$ parts of size $a$, where $m=ak+b$ for $a\geq 0, b\in \{0,1,2\cdots, k-1\}$. We denote the number of edges of $T(m, k)$ by $e(m, k)$. In \cite[Ex. 5.2.18]{Douglas}, it is shown that 
\begin{eqnarray}
e(m, k) = \frac{1}{2} \cdot (1 - \frac{1}{k})m^2 - \frac{b(k-b)}{2k}.
\end{eqnarray}
See also Lemma \ref{lemma:Turan} from Appendix \ref{appendixA}.
\end{definition}
\begin{definition}{\bf Strongly Connected Graphs}\\
The graph $\mathsf{G} = (\mathcal{V}, \mathcal{E})$ is strongly connected if there exists a directed path between every pair of distinct vertices.
\end{definition}
It is easy to verify that a graph is strongly connected if and only if every edge of the graphs lies on a (directed) cycle.
\begin{example}
The graph shown in Fig. \ref{fig:strongly-connected-graph}{} is strongly connected. However, the graph shown in Fig. \ref{fig:not-connected-graph} is not strongly connected since there is no directed path between nodes 4 and 2. Here the edge from node $4$ to node $5$ does not lie on a directed cycle.
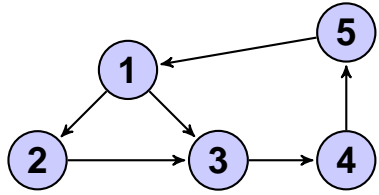
\begin{figure}\centering
\begin{tikzpicture}
[->, >=stealth', shorten >=1pt,auto, node distance=1.7cm, thick,main node/.style={circle,fill=blue!20,draw,font=\sffamily\Large\bfseries}]
\node[main node] (1) {1};
\node[main node] (2) [below left of=1] {2};
\node[main node] (3) [below right of=1] {3};
\node[main node] (4) [right of=3] {4};
\node[main node] (5) [above of=4]{5};
\path[every node/.style={font=\sffamily\small}]
(1) edge node {} (2)
edge node {} (3)
(2) edge node {} (3)
(3) edge node {} (4)
(4) edge node {} (5)
(5) edge node {} (1);
\end{tikzpicture}
\caption{\footnotesize An example of a strongly connected graph.\normalsize } \label{fig:strongly-connected-graph}
\end{figure}
\end{example}
\begin{figure}\centering
\begin{tikzpicture}
[->, >=stealth', shorten >=1pt,auto, node distance=1.7cm, thick, main node/.style={circle,fill=blue!20,draw,font=\sffamily\Large\bfseries}]
\node[main node] (1) {1};
\node[main node] (2) [below left of=1] {2};
\node[main node] (3) [below right of=1] {3};
\node[main node] (4) [right of=3] {4};
\node[main node] (5) [above of=4]{5};
\path[every node/.style={font=\sffamily\small}]
(1) edge node {} (2)
edge node {} (3)
edge node {} (5)
(2) edge node {} (3)
(3) edge node {} (4)
(4) edge node {} (5);
\end{tikzpicture}
\caption{\footnotesize An example of a graph that is not strongly connected.\normalsize } \label{fig:not-connected-graph}
\end{figure}
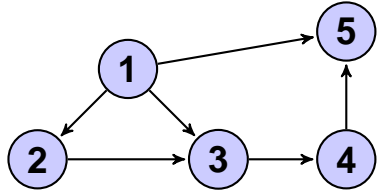
\begin{figure}\centering
\begin{tikzpicture}
[->,>=stealth',shorten >=1pt,auto,node distance=1.7cm, thick,main node/.style={circle,fill=blue!20,draw,font=\sffamily\Large\bfseries}]
\node[main node] (1) {1};
\node[main node] (2) [below left of=1] {2};
\node[main node] (3) [below right of=1] {3};
\node[main node] (4) [right of=3] {4};
\node[main node] (5) [above of=4]{5};
\node[main node] (6) [right of=4]{6};
\path[every node/.style={font=\sffamily\small}]
(1) edge node {} (2)
edge [bend left] node {} (3)
(2) edge node {} (3)
(3) edge [bend left] node {} (1)
(4) edge [bend left] node {} (5)
(5) edge [bend left] node {} (4);
\end{tikzpicture}
\caption{\footnotesize A USCS graph\normalsize } \label{fig:uscs-graph}
\end{figure}
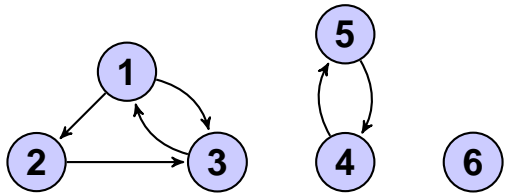
\begin{definition}{\bf Union of Two Disjoint Graphs}\\
The union of $\mathsf{G} = (\mathcal{V}, \mathcal{E})$ and $\mathsf{G}'= (\mathcal{V}', \mathcal{E}')$ is defined as $\mathsf{G} \cup \mathsf{G}' = (\mathcal{V} \cup \mathcal{V}', \mathcal{E} \cup \mathcal{E}')$.
\end{definition}
\begin{definition}{\bf USCS Graphs}\\
Graph $\mathsf{G}$ is {\sl USCS} (Union of Strongly Connected Subgraphs) if there exists a set of disjoint graphs $\{\mathsf{G}_1, \mathsf{G}_2, \cdots , \mathsf{G}_k\}$ such that (1) $\mathsf{G}_i$ is strongly connected and (2)
$
\mathsf{G} = \underset{i}{\bigcup}\mathsf{G}_i
$.
\end{definition}
\begin{example}
Because the graph shown in Fig. \ref{fig:strongly-connected-graph} is strongly connected, it is USCS, too. However, the graph shown in Fig. \ref{fig:not-connected-graph} is not USCS. Next consider the graph shown in Fig. \ref{fig:uscs-graph}. If we define $\mathsf{G}_1$ as the induced subgraph of the set $\{1, 2, 3\}$, $\mathsf{G}_2$ as the induced subgraph of the set$\{4, 5\}$, and $\mathsf{G}_3$ as the induced subgraph of the set $\{6\}$, then we have $\mathsf{G} = \mathsf{G}_1 \cup \mathsf{G}_2 \cup \mathsf{G}_3$. Therefore, due to the fact that $\mathsf{G}_1$, $\mathsf{G}_2$, and $\mathsf{G}_3$ are strongly connected, $\mathsf{G}$ is USCS.
\end{example}

%%%%%%%%%%%%%%%%%%%%%%%%%%%%%%%%%
% 				Main Results 						% 
%%%%%%%%%%%%%%%%%%%%%%%%%%%%%%%%%

\section{Main Results}\label{Section:StatementOfResults}

\begin{theorem}{\bf Minimum Number of Edges for Equal Rates}\\ \label{thm1}
Every $m$-vertex graph supporting a rate vector $\overline{\bf r} = (r, \cdots, r)$ has at least:
\begin{eqnarray}
g(r, m) = m(m-1) - 2\cdot e(m, \left \lfloor \frac{1}{r} \right \rfloor)
\end{eqnarray}
edges, if $\frac{1}{m}\leq r \leq 1$ ($g(r, m)$ is the number of edges in the complement of $T(m, \left \lfloor \frac{1}{r} \right \rfloor)$). Moreover, there is a unique graph, up to isomorphism, that has exactly $g(r, m)$ edges and supports the rate vector $\overline{\bf r} = (r, \cdots, r)$. This theorem holds for all cases (linear or non-linear, one-shot or asymptotic).
\end{theorem}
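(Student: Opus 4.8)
\emph{Overview and construction.} The statement has three ingredients: a graph with exactly $g(r,m)$ edges that supports $(r,\dots,r)$, a matching lower bound $|\mathcal E|\ge g(r,m)$ valid for \emph{every} supporting graph, and uniqueness of the optimum. I would run everything through the complement graph and a directed analogue of Tur\'an's theorem. For the construction, set $k=\lfloor 1/r\rfloor$ and let $\mathsf G^\star$ be the complement of $T(m,k)$; concretely $\mathsf G^\star$ is a disjoint union of $k$ bidirectional cliques of sizes $n_1,\dots,n_k$ with $\sum n_i=m$ and $|n_i-n_j|\le 1$, and it has exactly $m(m-1)-2e(m,k)=g(r,m)$ edges. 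A linear scheme shows it supports $(r,\dots,r)$: let each message be a symbol (or short block) over a field $\mathbb F$, transmit the $\mathbb F$-sum of the messages in each clique, and concatenate the $k$ sums (padding to the prescribed public-message length); each node recovers its message from its clique-sum and side information, the symmetric rate is $1/k\ge r$, and any $r\le 1/k$ follows by shortening messages. Being linear, the scheme works in all four regimes, so the minimum edge count over supporting graphs is $\le g(r,m)$.

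\emph{Lower bound and reduction to a directed Tur\'an problem.} The engine is the maximum-acyclic-induced-subgraph (MAIS) outer bound: if $\mathsf G$ supports a rate vector in any regime, then $\sum_{i\in S}r_i\le 1$ whenever $\mathsf G[S]$ is acyclic --- order $S$ topologically and decode in reverse order, so that each $W_i$ ($i\in S$) is a function of the public message together with $\{W_j:j\notin S\}$; independence of the messages and $|f|\le N$ give $\sum_{i\in S}\log|\mathcal W_i|\le\log N$ (exactly in the one-shot zero-error case, via Fano in the asymptotic case). With all rates equal to $r$ this forces $\mathrm{MAIS}(\mathsf G)\le\lfloor 1/r\rfloor=k$. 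Now observe that $\mathsf G[S]$ is acyclic iff the complement digraph $\bar{\mathsf G}$ (an arc for each non-arc of $\mathsf G$) contains a transitive tournament on $S$: a topological order $v_1,\dots,v_s$ of $\mathsf G[S]$ is precisely an order in which all ``backward'' pairs $(v_b,v_a)$, $a<b$, are arcs of $\bar{\mathsf G}$, i.e.\ a copy of $TT_s$. Hence $\mathrm{MAIS}(\mathsf G)\le k$ is equivalent to $\bar{\mathsf G}$ containing no transitive tournament $TT_{k+1}$, and the theorem reduces to the combinatorial claim: a digraph on $m$ vertices with no $TT_{k+1}$ has at most $2e(m,k)$ arcs, with equality only for the bidirectional Tur\'an graph $T(m,k)$ (whose complement digraph is exactly $\mathsf G^\star$). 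Granting this, $|\mathcal E(\mathsf G)|=m(m-1)-|\mathcal E(\bar{\mathsf G})|\ge m(m-1)-2e(m,k)=g(r,m)$, and equality forces $\bar{\mathsf G}=T(m,k)$, i.e.\ $\mathsf G=\mathsf G^\star$, which is the uniqueness statement.

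\emph{The directed Tur\'an bound (the crux).} I would prove this by a Zykov-type symmetrization applied to a $TT_{k+1}$-free digraph $D$ with the maximum number of arcs. If two vertices $u,v$ are joined by no arc in either direction, replacing the in/out-neighbourhoods of whichever of the two has smaller total degree by copies of those of the other does not decrease the arc count and creates no $TT_{k+1}$ (every pair inside a $TT_{k+1}$ is joined by an arc). Iterating, one may assume ``non-adjacency'' is an equivalence relation, i.e.\ the underlying undirected graph of $D$ is complete $p$-partite with parts $V_1,\dots,V_p$ and no arcs inside parts. If $p\le k$, orienting every inter-part pair in both directions keeps $D$ $TT_{k+1}$-free (its underlying clique number is $p\le k$) and can only increase the arc count, so in this regime the optimum is $2\sum_{i<j}|V_i||V_j|$, which is maximized uniquely by the balanced $k$-partition --- giving $D=$ bidirectional $T(m,k)$ with $2e(m,k)$ arcs.

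\emph{The remaining obstacle.} The hard part is to rule out $p\ge k+1$: then not every inter-part pair can be bidirectional, and one must show the arc count drops strictly below $2e(m,k)$. The mechanism is that any transversal of $k+1$ of the parts induces a semicomplete $TT_{k+1}$-free digraph, which admits very few bidirectional pairs and essentially pins $D$ (after symmetrization) to a blow-up of a small $TT_{k+1}$-free digraph of edge density strictly below $1-\tfrac1k$; the case $k=2$, where such a digraph is forced to be a blow-up of a directed triangle of density $\tfrac13<\tfrac12$, is the model for the general argument and also shows where uniqueness comes from. I expect this $p\ge k+1$ analysis, together with extracting the exact uniqueness, to be the main technical difficulty of the theorem; the rest (the MAIS bound, the complementation identity, the linear achievability) is routine.
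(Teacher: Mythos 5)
Your reduction is the right one and matches the paper's in spirit: the achievability construction (complement of $T(m,\lfloor 1/r\rfloor)$, XOR of each clique) is identical, and the outer bound from the MAIS/acyclic-subset lemma is the same Lemma~\ref{lemma:rates-sum} the paper uses. Passing to the complement digraph $\bar{\mathsf G}$ and observing that $\mathrm{MAIS}(\mathsf G)\le k$ is equivalent to $\bar{\mathsf G}$ being $TT_{k+1}$-free is a correct and clean way to package the combinatorial core: you need the claim that a $TT_{k+1}$-free digraph on $m$ vertices has at most $2e(m,k)$ arcs, with equality only for the bidirectional Tur\'an graph.

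Where you diverge from the paper --- and where your argument is genuinely incomplete --- is in the proof of this directed Tur\'an bound. You propose Zykov symmetrization, handle the case $p\le k$, and then explicitly leave open the case $p\ge k+1$, which you identify as ``the main technical difficulty''. That is a real gap: the $p\ge k+1$ case is not routine, and without it neither the bound nor the uniqueness follows. The paper sidesteps this difficulty entirely with a short trick that you should note. Fix an arbitrary linear order on the $m$ vertices and split the arc set of $\mathsf G$ (equivalently of $\bar{\mathsf G}$) into ``forward'' arcs $\mathcal E^f$ and ``backward'' arcs $\mathcal E^b$ with respect to this order. Any directed cycle must use at least one arc of each type, so the hypothesis ``every $(k+1)$-subset of $\mathcal V$ induces a cycle in $\mathsf G$'' implies that every $(k+1)$-subset contains at least one arc of $\mathcal E^f$ and one of $\mathcal E^b$. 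Hence the \emph{undirected} complements $\tilde{\mathsf G}^f$ and $\tilde{\mathsf G}^b$ are both $K_{k+1}$-free, and ordinary Tur\'an gives $|\mathcal E^f|\ge \binom m2 - e(m,k)$ and $|\mathcal E^b|\ge \binom m2 - e(m,k)$, summing to $g(r,m)$. Equivalently, on the complement side: a $TT_{k+1}$-free digraph has forward and backward parts each $K_{k+1}$-free, so at most $e(m,k)$ arcs each, i.e.\ at most $2e(m,k)$ arcs total --- no symmetrization, no case split on $p$. Uniqueness then also drops out: equality forces both $\tilde{\mathsf G}^f$ and $\tilde{\mathsf G}^b$ to be $T(m,k)$ by the equality case of Tur\'an, and a short argument (choose a transversal of the $\mathsf G^f$-cliques plus one extra vertex and look for the forced cycle) shows the two Tur\'an partitions coincide, pinning $\mathsf G$ to the complement of $T(m,k)$. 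I would recommend replacing your Zykov program with this forward/backward decomposition; it is both shorter and actually closes the argument.
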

\begin{remark}This theorem shows that there is a unique (up to isomorphism) critical graph with minimum number of edges for both one-shot and asymptotic cases.
\end{remark}
\begin{remark} \label{rmk3}
Theorem \ref{thm1} is valid for $\frac{1}{m}\leq r \leq 1$. For the case $r > 1$, there is no graph that supports the rate vector $\overline{\bf r} = (r, \cdots, r)$ since the rate of each node cannot be greater than one. When $r < \frac{1}{m}$,  it is possible to send all messages as the public message, and as a result, there is no need to have any side information. Therefore, the empty graph is sufficient in this case.
\end{remark}

\begin{theorem}{\bf Additivity of index coding capacity region}\label{thm3new}

\textbf{a)} Given a graph $\mathsf{G} = (\mathcal{V}, \mathcal{E})$, suppose that $\mathsf{G'}$ and $\mathsf{G''}$ are subgraphs of $\mathsf{G}$ induced on vertex sets $\mathcal{V'}$ and $\mathcal{V''}$. In addition, assume that $\mathcal{V'}$ and $\mathcal{V''}$ partition $\mathcal{V}$ and there exist no edge like $e = (u, v)$ in $\mathcal{E}$ that starts from $u \in \mathcal{V'}$ and ends up in $v \in \mathcal{V''}$, i.e. no directed edge from  $\mathsf{G}'$ to $\mathsf{G}''$ exists. Then, elimination of all the directed edges from $\mathsf{G}''$ to $\mathsf{G}'$
 will not change the rate region in the one-shot linear, and in the asymptotic non-linear index coding problems.

\textbf{b)} [Optimality of a simple time-division strategy]. Take an index coding problem with graph $\mathsf{G}=\mathsf{G}'\bigcup \mathsf{G}''$, such that there is no edge between $\mathsf{G}'$ and $\mathsf{G}''$.
Let $\mathcal{C}$, $\mathcal{C}'$ and $\mathcal{C}''$ denote the capacity regions of $\mathsf{G}$, $\mathsf{G}'$ and $\mathsf{G}''$ respectively (the three capacities are all either in the sense of asymptotic linear, or all in the sense of asymptotic non-linear). Then $\mathcal{C}=\bigcup_{\alpha\in[0,1]} \alpha \mathcal{C}'\oplus (1-\alpha)\mathcal{C}''$ where $\oplus$ is the direct sum operator. Alternatively, the index coding region for $\mathsf{G}$ is of the form $\overline{\bf r}=(\alpha \overline{\bf r}', (1-\alpha)\overline{\bf r}'')$ for $\alpha\in[0,1]$ and vector $\overline{\bf r}'$ is in the region of $\mathsf{G}'$ and $\overline{\bf r}''$ is in the region of $\mathsf{G}''$, and $(\alpha \overline{\bf r}', (1-\alpha)\overline{\bf r}'')$ is the concatenation of the vectors $\alpha \overline{\bf r}'$ and $(1-\alpha)\overline{\bf r}''$.
\end{theorem}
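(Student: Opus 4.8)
The plan is to prove part~(a) first, extract from its proof the additivity of the optimal broadcast size over disjoint unions, and then read off part~(b); the achievability half of part~(b) is a routine time-sharing construction. In every case one inclusion is for free: deleting edges (part~a), or passing to a disjoint union with fewer edges (part~b), can only shrink the region, since a decoder may always discard side information. So all the content lies in showing that the edges from $\mathsf G''$ to $\mathsf G'$, respectively the coupling between $\mathsf G'$ and $\mathsf G''$ inside a joint code, are worthless beyond permitting a time-division of the public channel.

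\textbf{The linear case of part~(a) (and the linear case of part~(b)).} Take a valid length-$n$ linear code for $\mathsf G$ with encoding matrices $A_k\in\mathbb F^{n\times l_k}$ and decoders $B_j$, recalling that node $j$ decodes iff $B_jA_j=I$ and $B_jA_k=0$ whenever $k\ne j$ and $(j,k)\notin\mathcal E$; set $\mathcal U'=\sum_{k\in\mathcal V'}\mathrm{Col}(A_k)$ and $\mathcal U''=\sum_{k\in\mathcal V''}\mathrm{Col}(A_k)$ (we may restrict the broadcast to $\mathcal U'+\mathcal U''$, which only improves rates). Since no edge leaves $\mathcal V'$ for $\mathcal V''$, every $j\in\mathcal V'$ has $B_jA_k=0$ for all $k\in\mathcal V''$, so $B_j$ kills $\mathcal U''$ and hence $\mathcal U'\cap\mathcal U''$. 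Freezing $W_{\mathcal V''}=0$ therefore gives a valid code for $\mathsf G'$ whose decoders factor through $\mathcal U'/(\mathcal U'\cap\mathcal U'')$, i.e.\ a valid $\mathsf G'$-code of length $\dim\mathcal U'-\dim(\mathcal U'\cap\mathcal U'')$; freezing $W_{\mathcal V'}=0$ gives a valid $\mathsf G''$-code of length $\dim\mathcal U''$. Since $\big(\dim\mathcal U'-\dim(\mathcal U'\cap\mathcal U'')\big)+\dim\mathcal U''=\dim(\mathcal U'+\mathcal U'')\le n$, the direct sum of these two codes, padded with unused coordinates, is a valid length-$n$ code for $\mathsf G'\cup\mathsf G''$ with the same message lengths, hence the same rate vector. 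Running this over each code of an achieving sequence yields the asymptotic-linear version; and in the situation of part~(b), where there are no edges in either direction, the same extraction also compresses the $\mathsf G''$-subcode by $\dim(\mathcal U'\cap\mathcal U'')$, forcing $\dim(\mathcal U'\cap\mathcal U'')=0$ for an optimal code and making the optimal broadcast length exactly additive over $\mathsf G'\cup\mathsf G''$. Rescaling (using that each linear capacity region is closed under scaling down) turns this into $\mathcal C=\bigcup_{\alpha}\alpha\mathcal C'\oplus(1-\alpha)\mathcal C''$.

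\textbf{The non-linear (asymptotic) statements, and achievability of part~(b).} The same skeleton must be pushed through the confusion-graph description of index coding: the confusion graph of $\mathsf G'\cup\mathsf G''$ is the disjunctive (co-normal) product of those of $\mathsf G'$ and $\mathsf G''$, the optimal one-shot broadcast size is its chromatic number, and so what must be shown is that, after vectorization, the optimal broadcast size is (to first exponential order) multiplicative over this product. The structural input that makes this work is that confusion graphs are highly symmetric --- Cayley graphs once the messages are taken to be cyclic groups, hence vertex-transitive --- and for such graphs the relevant fractional relaxation of the chromatic number is multiplicative under the disjunctive product, which upgrades to the asymptotic additivity of the broadcast rate needed here; the absence of cross-edges in part~(a) again lets one first peel off the part of the public message that the $\mathcal V'$-decoders actually use, which then depends only on $W_{\mathcal V'}$. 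Finally, achievability in part~(b) is the obvious time-division code: over a $\approx\alpha n$-fraction of a length-$n$ block run a near-optimal code for $\mathsf G'$ and over the remaining $\approx(1-\alpha)n$ run one for $\mathsf G''$; since no node of $\mathcal V'$ desires or knows any message of $\mathcal V''$ and vice versa, each group decodes off its own sub-block, attaining $(\alpha\overline{\bf r}',(1-\alpha)\overline{\bf r}'')$ in the limit.

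\textbf{The main obstacle.} The hard part is precisely this non-linear converse. Ranks of matrices are super-additive under block-diagonalization, which is what makes the linear argument short; but the chromatic number is \emph{not} super-multiplicative under the disjunctive product of arbitrary graphs, so one cannot argue as in the linear case and must genuinely exploit the vertex-transitivity of confusion graphs and move to the vectorized regime where the fractional chromatic number controls the exponential rate. The auxiliary lemmas collected in Appendix~\ref{appendixA} are aimed at carrying out exactly these steps.
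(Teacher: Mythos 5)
Your linear argument is correct and is a genuinely different (and in some ways cleaner) proof than the paper's. The paper brings the encoding matrix into row echelon form, locates a threshold row index $s$ separating the part of the public message touching $\mathcal{V}''$ from the rest, and zeroes out a sub-block. You instead work basis-free with the column spaces $\mathcal{U}'=\sum_{k\in\mathcal{V}'}\operatorname{Col}(A_k)$ and $\mathcal{U}''=\sum_{k\in\mathcal{V}''}\operatorname{Col}(A_k)$: the hypothesis that no edge leaves $\mathcal{V}'$ forces every decoder $B_j$, $j\in\mathcal{V}'$, to annihilate $\mathcal{U}''$ and hence $\mathcal{U}'\cap\mathcal{U}''$, so it factors through $\mathcal{U}'/(\mathcal{U}'\cap\mathcal{U}'')$, and the identity $\dim\mathcal{U}'-\dim(\mathcal{U}'\cap\mathcal{U}'')+\dim\mathcal{U}''=\dim(\mathcal{U}'+\mathcal{U}'')\le n$ does the bookkeeping the paper does by hand with the echelon staircase. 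This correctly yields both the one-shot linear claim of part~(a) and the linear converse of part~(b), so for the linear cases your proposal is a complete alternative route.

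For the non-linear asymptotic statements the proposal has a genuine gap. Your confusion-graph plan applies cleanly only when $\mathsf{G}$ is a true disjoint union (part~(b)): then the confusion graph is the disjunctive product of the two component confusion graphs and one can try to argue multiplicativity through vertex-transitivity and the fractional chromatic number. But part~(a) is not a disjoint union --- edges from $\mathsf{G}''$ to $\mathsf{G}'$ are still present, so the confusion graph of $\mathsf{G}$ is a proper subgraph of that disjunctive product and the product identity simply does not hold. The sentence ``the absence of cross-edges in part~(a) again lets one first peel off the part of the public message that the $\mathcal{V}'$-decoders actually use, which then depends only on $W_{\mathcal{V}'}$'' is precisely the content that must be proved, and there is no obvious way to ``peel off'' a sub-message from a nonlinear coloring: a coloring is a single index, not a concatenation. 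This is exactly where the paper expends its technical effort: it passes to $n$ i.i.d.\ copies, uses the covering lemma to replace $K^n$ for the $\mathcal{V}'$ decoders by a rate-$I(K;W_{\mathsf{G}'})$ codeword index, and fixes the realization $w_{\mathsf{G}'}$ minimizing $H(K\,|\,W_{\mathsf{G}'}=w_{\mathsf{G}'})$ to build a rate-$H(K\,|\,W_{\mathsf{G}'})$ Shannon-compressed residual $K''$ for the $\mathcal{V}''$ decoders; the identity $I(K;W_{\mathsf{G}'})+H(K\,|\,W_{\mathsf{G}'})=H(K)\le\log|\mathcal{K}|$ then closes the rate accounting. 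Some such information-theoretic splitting argument appears unavoidable here, and your sketch does not supply one. A further minor but telling point: you write that ``the auxiliary lemmas collected in Appendix~\ref{appendixA} are aimed at carrying out exactly these steps,'' but Appendix~\ref{appendixA} contains only the MAIS acyclic-subset bound (Lemma~\ref{lemma:rates-sum}) and Tur\'an's theorem (Lemma~\ref{lemma:Turan}); neither concerns fractional chromatic numbers, vertex-transitivity, or confusion graphs, so the infrastructure you are presupposing is not actually present and would have to be developed from scratch.
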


\begin{theorem}{\bf Critical graphs are USCS}\label{thm2} 

\textbf{a)} Every critical graph for linear index coding (one-shot or asymptotic)
and for asymptotic non-linear index coding is USCS. In particular, removing edges that do not lie on a directed cycle does not change the capacity region in these cases.

\textbf{b)} There exists a critical graph for a one-shot non-linear index coding problem which is not USCS.
\end{theorem}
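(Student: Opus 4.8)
My plan is to deduce part (a) from the additivity statement Theorem~\ref{thm3new}(a) by a contrapositive argument on the condensation of the graph. Suppose $\mathsf{G}$ is critical but not USCS; then some directed edge $e=(u,v)$ lies on no directed cycle, so the strongly connected components containing $u$ and $v$ are distinct. Fix a topological ordering of the condensation DAG of $\mathsf{G}$, let $C$ be the component containing $v$, set $\mathcal{V}'$ to be the union of $C$ together with all components that appear at or after $C$ in this ordering, and set $\mathcal{V}''$ to be the union of all components strictly before $C$. Since condensation edges respect the topological order, there is no edge from $\mathcal{V}'$ to $\mathcal{V}''$; meanwhile the component of $u$ is strictly before $C$, so $u\in\mathcal{V}''$, $v\in\mathcal{V}'$, and hence $e$ is one of the edges running from $\mathcal{V}''$ to $\mathcal{V}'$. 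By Theorem~\ref{thm3new}(a), deleting \emph{all} edges from $\mathcal{V}''$ to $\mathcal{V}'$ leaves the rate region unchanged in one-shot linear and asymptotic non-linear index coding; since the rate region is monotone in the edge set, deleting the single edge $e$ also leaves it unchanged, contradicting criticality. The asymptotic linear case follows from the one-shot linear case, because the asymptotic linear capacity region is the closure of the union over all block lengths of the (scaled) one-shot linear rate regions, each of which is preserved under this edge deletion. This argument also gives the ``in particular'' clause.

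\textbf{Part (b).} Here the plan is to exhibit an explicit counterexample. I would search for a graph $\mathsf{G}$ containing a designated edge $e=(u,v)$ that lies on no directed cycle (so $\mathsf{G}$ is not USCS), together with a choice of message alphabets, such that: (i) there is a one-shot \emph{non-linear} index code for $\mathsf{G}$ realizing a rate vector that is \emph{not} realizable for $\mathsf{G}-e$ — so $e$ is essential; and (ii) every other edge of $\mathsf{G}$ is likewise essential, making $\mathsf{G}$ critical. For (i), the achievability side is a hand-crafted non-linear code that uses the side information $W_v$ available at $u$ to fold the demand of one side of the graph into the transmission serving the other side, while the converse for $\mathsf{G}-e$ is a combinatorial/entropy lower bound on the public-message size (an acyclic-induced-subgraph–type bound, or an ad hoc cut argument tailored to the graph). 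Crucially, Theorem~\ref{thm3new}(a) forbids any such saving for one-shot \emph{linear} coding, so the code in (i) must be genuinely non-linear; the construction is therefore modelled on known linear/non-linear separations, but arranged so that no edge is redundant. Checking (ii) — that no edge can be removed without strictly shrinking the one-shot rate region — is a finite task for a fixed small graph and fixed alphabets (it reduces to comparing chromatic numbers of the associated confusion graphs), and can be carried out by the exhaustive computer search whose source code is provided in Appendix~\ref{appendixB}.

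\textbf{Main obstacle.} Part (a) is essentially bookkeeping once Theorem~\ref{thm3new} is in hand. The real difficulty lies in part (b): locating a graph that is small enough for criticality to be verified in full, yet in which a non-cycle edge is genuinely useful for one-shot non-linear coding — which necessarily exploits the gap between one-shot non-linear coding and one-shot linear (equivalently asymptotic) coding — and then establishing both the non-linear achievable code and a matching converse for the edge-deleted graph.
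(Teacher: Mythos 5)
Your part (a) is correct and essentially the paper's argument. The paper picks $\mathcal V_1$ to be the vertices reachable from $v$ and $\mathcal V_2$ its complement, invokes Theorem~\ref{thm3new}(a) to strip out all $\mathcal V_2\!\to\!\mathcal V_1$ edges, and iterates this until a USCS subgraph is reached; you instead cut along a topological level of the condensation DAG and appeal once to monotonicity of the rate region in the edge set to conclude that deleting $e$ alone is already harmless. Both partitions satisfy the hypothesis of Theorem~\ref{thm3new}(a) and both place $u,v$ on the correct sides, so the two versions are interchangeable; your monotonicity shortcut and your ``closure of the union of one-shot regions'' argument for the asymptotic linear case are if anything slightly cleaner than the paper's iterate-and-subsequence bookkeeping.

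Part (b), however, is a plan rather than a proof, and there is a genuine gap. You never exhibit the graph, the alphabets, the non-linear achievable code, or the converse for the edge-deleted graph; you only describe the shape the construction should take. The paper does all of this concretely: the graph is a bidirected $5$-cycle on $\{1,\dots,5\}$ with an extra vertex $6$ having outgoing edges to all five cycle vertices, with $\mathcal W_i=\{0,1\}$ for $i\le 5$ and $\mathcal W_6=\{0,1,2,3,4\}$; a $32$-symbol non-linear code is hand-built by XORing $W_1\cdots W_5$ with one of five fixed masks selected by $W_6$; and the converse for the $5$-cycle (that $7$ public symbols are required with binary alphabets) is the statement that is verified by exhaustive search in Appendix~\ref{appendixB}. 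Moreover, your plan insists on certifying criticality of every edge of the candidate graph, which is not what the paper does and is not needed. The paper instead argues: since the full graph achieves $32$ symbols while the cycle alone needs at least $35$, any minimal (critical) subgraph of the full graph that still achieves $32$ symbols must retain at least one of the edges out of vertex $6$, and such an edge can never lie on a directed cycle because vertex $6$ has no incoming edges; hence that critical subgraph is automatically non-USCS. This sidesteps the combinatorial explosion of checking all edges, and it also means the computer search is confined to the $5$-cycle converse, not to criticality verification as your plan suggests. In short, your part (b) identifies the right high-level difficulty but supplies neither the example nor the crucial ``take a critical subgraph'' reduction that makes the argument go through without a full criticality check.
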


The condition given in item (a) of Theorem \ref{thm2} are necessary but not necessarily sufficient, i.e. USCS does not necessarily imply criticality. This follows from the fact that if we add an edge to a USCS graph that supports a given set of rates, the resulting graph remains a USCS graph that still supports the given rates. However observe that the resulting graph, with one more additional edge, may indeed support higher rates. This observation may lead one to propose the following modified conjecture:
\begin{conjecture}Take a USCS graph that supports a given set of rates for asymptotic non-linear index coding. Let $e$ be an edge of the graph that lies on a single directed cycle (i.e. it is completing a cycle and its removal breaks that cycle). Then removing the edge $e$ from the graph results in a graph that no longer supports the given set of rates.
\end{conjecture}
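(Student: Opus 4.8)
The plan is to use the additivity result to cut the problem down to a single strongly connected graph, and then to try to exhibit a rate vector supported by that graph but not by its edge-deleted version. Write the USCS graph as $\mathsf{G}=\bigcup_i \mathsf{G}_i$ with each $\mathsf{G}_i$ strongly connected, and let $\mathsf{G}_1$ contain $e$. Since there is no edge between distinct $\mathsf{G}_i$'s, Theorem~\ref{thm3new}(b) gives $\mathcal{C}(\mathsf{G})=\bigcup_{\alpha\in[0,1]}\alpha\,\mathcal{C}(\mathsf{G}_1)\oplus(1-\alpha)\,\mathcal{C}(\bigcup_{i\ge 2}\mathsf{G}_i)$, and the same identity holds for $\mathsf{G}-e$ with $\mathcal{C}(\mathsf{G}_1)$ replaced by $\mathcal{C}(\mathsf{G}_1-e)$. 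Because capacity regions are bounded and closed under reducing rates, a short routine argument (take a boundary point of $\mathcal{C}(\mathsf{G}_1)$ that lies outside $\mathcal{C}(\mathsf{G}_1-e)$ and pad it with zeros) shows $\mathcal{C}(\mathsf{G}_1-e)\subsetneq\mathcal{C}(\mathsf{G}_1)$ implies $\mathcal{C}(\mathsf{G}-e)\subsetneq\mathcal{C}(\mathsf{G})$. So from now on I assume $\mathsf{G}$ is strongly connected and $e=(u,v)$ lies on a unique simple directed cycle $C$, and I write $P$ for the unique simple directed path from $v$ to $u$, so $C=e\cup P$.

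\emph{The bound to aim for.} Set $S:=V(C)$. The cycle $C$ is a spanning directed cycle of the induced problem $\mathsf{G}[S]$, hence $\mathcal{C}(\mathsf{G}[S])$ contains the capacity region of the directed $|S|$-cycle; the latter, being a classical instance, achieves the symmetric rate $1/(|S|-1)$ in every coordinate, so it --- and therefore $\mathcal{C}(\mathsf{G}[S])$ --- contains a point with $\sum_{i\in S}r_i=|S|/(|S|-1)>1$. Padding this point with rate $0$ at every vertex outside $S$ (and using the possibly-larger side information available in $\mathsf{G}$) yields a point of $\mathcal{C}(\mathsf{G})$ with $\sum_{i\in S}r_i>1$. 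Consequently it would suffice to prove that \emph{every} rate vector supported by $\mathsf{G}-e$ obeys $\sum_{i\in S}r_i\le 1$; and this holds whenever the induced subgraph $\mathsf{G}[S]-e$ is acyclic, by the maximal-acyclic-induced-subgraph converse $\sum_{i\in T}r_i\le 1$ (valid for every acyclic induced subgraph on a vertex set $T$, in particular for asymptotic non-linear codes).

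\emph{The obstacle.} Everything thus hinges on the cyclic structure of $\mathsf{G}[S]-e$. Uniqueness of $C$ forbids any chord of $C$ that short-cuts the path $P$, since such a chord would close a second simple cycle through $e$; but a chord oriented opposite to $P$ closes a cycle that avoids $e$, survives the deletion of $e$, and keeps $\mathsf{G}[S]-e$ cyclic. In that case one must instead find a smaller set $T$ with $u,v\in T\subseteq S$ that still carries a spanning directed cycle in $\mathsf{G}$ (so that $\mathsf{G}$ beats $\sum_{i\in T}r_i\le 1$) while $\mathsf{G}[T]-e$ is acyclic, and it is not clear that such a $T$ must exist: already on three vertices one finds strongly connected graphs in which deleting an edge that lies on a unique cycle changes neither the acyclic-subgraph bounds nor the capacity region. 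I therefore expect this to be the genuinely hard point --- the argument above seems to go through only under an additional non-degeneracy hypothesis (for instance that the cycle $C$ be chordless), and determining the weakest hypothesis under which the conjecture is true, or deciding it as literally stated, is what remains open.

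\emph{Warm-up cases.} When $C$ is the $2$-cycle $u\leftrightarrow v$, the graph $\mathsf{G}[\{u,v\}]-e$ is the single edge $(v,u)$, hence acyclic, so $r_u+r_v\le 1$ for $\mathsf{G}-e$, whereas $\mathsf{G}$ supports $(r_u,r_v)=(1,1)$ on $\{u,v\}$ by the XOR code; and when $\mathsf{G}$ is itself a single directed cycle, the conclusion is immediate from the known capacity region of $C_m$ (symmetric rate $1/(m-1)$ versus the simplex bound $\sum_i r_i\le 1$ for the acyclic path $C_m-e$). Combined with the additivity result of Theorem~\ref{thm3new} and the closure of criticality under disjoint union (Theorem~\ref{thm3}), these already establish the criticality of every directed cycle and of every disjoint union of directed cycles, the simplest USCS graphs each of whose edges lies on a unique cycle; the general USCS case is exactly what the plan above leaves unresolved.
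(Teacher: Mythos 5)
The statement you are asked about is labeled a \emph{conjecture} in the paper precisely because the paper goes on to refute it: immediately after stating it, the paper exhibits the $3$-vertex graph of Fig.~\ref{fig:a22} (edges $(1,2),(1,3),(2,3),(2,1),(3,1)$), notes that the edge $2\to 3$ lies on the unique cycle $2\to 3\to 1\to 2$, and then shows by an explicit XOR code that deleting $2\to 3$ still achieves every rate vector allowed by the outer bounds $r_1\le 1$ and $r_2+r_3\le 1$. So the correct ``proof'' is a disproof, not a proof.

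Your proposal is on the right track strategically but stops one step short of what is needed. Your reduction to a single strongly connected component via Theorem~\ref{thm3new}(b) is fine, and your ``obstacle'' paragraph is genuinely insightful: you correctly identify that the whole plan hinges on whether $\mathsf{G}[S]-e$ becomes acyclic, and that a chord of $C$ oriented backwards along $P$ kills the argument without creating a second cycle through $e$. You even write that ``already on three vertices one finds strongly connected graphs in which deleting an edge that lies on a unique cycle changes neither the acyclic-subgraph bounds nor the capacity region.'' That sentence \emph{is} the refutation --- it is, up to labeling, exactly the paper's counterexample. But you then retreat to ``deciding it as literally stated \ldots is what remains open,'' which contradicts what you just said: if you have a strongly connected graph on three vertices where deleting a unique-cycle edge leaves the capacity region unchanged, the conjecture is settled (false), not open. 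The missing move is simply to own that example: write down the graph, exhibit the achievable scheme on the edge-deleted graph matching the outer bound (the paper sends $W_1\oplus (W_2\Vert W_3)$ with $l_1=l_2+l_3$), and conclude. As a side remark, your ``bound to aim for'' step also asserts too much --- $\mathsf{G}[S]$ containing the directed $|S|$-cycle as a subgraph does give an inner bound, but phrasing it as ``$\mathcal{C}(\mathsf{G}[S])$ contains the capacity region of the $|S|$-cycle'' is only a containment of achievable regions, which is what you need; the prose makes it sound like an equality claim. That is minor compared with the main issue, which is that you found the counterexample and then declined to use it.
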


However the above conjecture is also false. Consider the graph in Fig. \ref{fig:a22} with $\mathcal{S}=\{0,1\}$.
\begin{figure}\centering
\begin{tikzpicture}
[->,>=stealth',shorten >=1pt,auto,node distance=2.5cm, thick,main node/.style={circle,fill=blue!20,draw,font=\sffamily\Large\bfseries}]
\node[main node] (1) {1};
\node[main node] (2) [below left of=1] {2};
\node[main node] (3) [below right of=1] {3};
\path[every node/.style={font=\sffamily\small}]
(1) edge node {} (2)
edge [bend left] node {} (3)
(2) edge node {} (3)
edge [bend left] node {} (1)
(3) edge node {} (1);
\end{tikzpicture}
\caption{\footnotesize A counterexample to Conjecture 1. \normalsize} \label{fig:a22}
\end{figure}
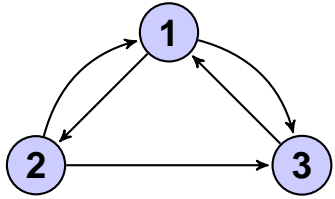 
Using Lemma \ref{lemma:rates-sum} of the Appendix \ref{appendixA}, the sum of the rate of every subset of nodes which contains no cycle should be less or equal to one. Therefore, for every rate vector $\overline{\bf r} = (r_1, r_2, r_3)$ supported by this graph, we have:
\begin{eqnarray}
r_2 + r_3 \leq 1 ,\label{eq:converse-example-equality-1} \\
r_1 \leq 1. \label{eq:converse-example-equality-2}
\end{eqnarray}
The edge from node $2$ to node $3$ lies on a unique cycle $2\rightarrow 3\rightarrow 1\rightarrow 2$. We show that if this edge is removed, all rate vectors satisfying eqs. \eqref{eq:converse-example-equality-1} and \eqref{eq:converse-example-equality-2} will be still supported. It suffices to prove that any $\overline{\bf r} = (r_1, r_2, r_3)$ satisfying $r_1=1$ and $r_2+r_3=1$ is supported by the new graph. If we assume that $W_i$ is a binary string of length $l_i$ where $l_1=l_2+l_3$, we can create $f(W_1, W_2, W_3)$ as follows: we concatenate $W_2$ and $W_3$ to create a binary string of length $l_2+l_3=l_1$ and then XOR it with the binary string of $W_1$. Node $1$ knows both $W_2$ and $W_3$ and hence can recover $W_1$. And both nodes $2$ and $3$ know $W_1$. Hence they can both recover their desired message.

\begin{remark}
There has been some previous work on the effect of edge removal in network coding \cite{HavivLangberg}-\cite{ShirinMichelleTracey}.
However to best of our knowledge there is no previous work on edge removal in the context of index coding.
\end{remark}

%================================================================
\subsection{Structure of Critical Graphs}\label{subsec:structures}

In this section we provide some results on the structure of critical graphs. The first class of critical graphs that are easy to identify are bidirectional graphs:

\begin{theorem} \label{thm:newstr1} Any bidirectional graph is critical (by a bidirectional graph we mean one in which a directed edge from node $i$ to $j$ implies a directed edge from node $j$ to $i$). On the other hand this is not true of symmetric criticality; in particular a bidirectional cycle of size 4 is not symmetric critical.
\end{theorem}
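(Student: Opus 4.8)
The plan is to treat the two assertions separately, using a single explicit rate vector for the first and an explicit coding scheme for the second.

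\emph{Any bidirectional graph is critical.} Fix a bidirectional graph $\mathsf{G}=(\mathcal{V},\mathcal{E})$ and an arbitrary edge $(i,j)\in\mathcal{E}$; by bidirectionality $(j,i)\in\mathcal{E}$ as well. I would exhibit the rate vector $\overline{\bf r}$ with $r_i=r_j=1$ and $r_k=0$ for all $k\neq i,j$, and show it lies in the region of $\mathsf{G}$ but not in that of $\mathsf{G}$ with $(i,j)$ removed, simultaneously in all four settings (one-shot/asymptotic, linear/non-linear). For achievability in $\mathsf{G}$, make every $W_k$ with $k\neq i,j$ a constant, let $W_i,W_j\in\mathbb{F}^n$, and broadcast the single block $W_i\oplus W_j$: node $i$ knows $W_j$ and recovers $W_i$, node $j$ knows $W_i$ and recovers $W_j$, and the remaining nodes have nothing to decode. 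This is a one-shot linear code, hence valid in every setting. For the converse after deleting $(i,j)$, note that the subgraph induced on $\{i,j\}$ now contains only the edge $(j,i)$ and hence no directed cycle, so Lemma \ref{lemma:rates-sum} forces $r_i+r_j\leq 1$ for every supported rate vector, whereas our vector has $r_i+r_j=2$. Since that bound is a converse valid even for asymptotic non-linear index coding, the rate vector is excluded in all four settings, so removing any edge strictly shrinks the region and $\mathsf{G}$ is critical.

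\emph{The bidirectional $4$-cycle is not symmetric rate critical.} Let $\mathsf{G}$ be the bidirectional $4$-cycle on $\{1,2,3,4\}$ with $1\leftrightarrow 2\leftrightarrow 3\leftrightarrow 4\leftrightarrow 1$; equivalently $\mathsf{G}$ is the bidirectional $K_{2,2}$ with parts $\{1,3\}$ and $\{2,4\}$. I would first show that its maximum symmetric rate is exactly $\tfrac12$: the two bidirectional edges $\{1,2\}$ and $\{3,4\}$ cover $\mathcal{V}$, so taking all $l_k=n$ and broadcasting $(W_1\oplus W_2,\,W_3\oplus W_4)$ is a one-shot linear code of symmetric rate $\tfrac12$, while $\{1,3\}$ induces no edge and hence no cycle, so Lemma \ref{lemma:rates-sum} gives $r_1+r_3\leq 1$ and thus symmetric rate at most $\tfrac12$. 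Now delete any edge; by edge-transitivity of the bidirectional $4$-cycle we may assume it is $(1,2)$. In the remaining graph the pairs $\{1,4\}$ and $\{2,3\}$ are still bidirectional edges covering $\mathcal{V}$, so broadcasting $(W_1\oplus W_4,\,W_2\oplus W_3)$ with all $l_k=n$ again achieves symmetric rate $\tfrac12$. Hence the maximum symmetric rate is unchanged, so in fact no single-edge deletion reduces it, and $\mathsf{G}$ is not symmetric rate critical.

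The proof is short; the only points requiring care are administrative. First, for the first assertion one should double-check that the acyclic-subset bound of Lemma \ref{lemma:rates-sum} is indeed valid for asymptotic non-linear index coding, so that the single witness rate vector settles all four variants at once. Second, for the second assertion one should confirm that the automorphism group of the bidirectional $4$-cycle acts transitively on its directed edges, justifying the reduction to the edge $(1,2)$. Neither is a genuine obstacle.
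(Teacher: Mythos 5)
Your proof is correct and follows essentially the same approach as the paper: for criticality you use the witness rate vector supported on a single bidirectional pair $\{u,v\}$ together with the acyclic-subset bound (Lemma \ref{lemma:rates-sum}) applied to $\{u,v\}$ after deletion; for the $4$-cycle you establish maximum symmetric rate $\tfrac12$ and then exhibit an explicit XOR code on two disjoint bidirectional pairs that still achieves $\tfrac12$ after an edge is removed. The only cosmetic difference is that the paper deletes all four edges $(2,3),(3,2),(4,1),(1,4)$ at once to show the rate is preserved, while you delete a single edge and invoke edge-transitivity of the bidirectional $4$-cycle; both are valid and the paper's choice already implies yours.
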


To derive the main results for this section, we first produced all symmetric rate critical graphs for graphs on 5 vertices. This list was compiled using the data available on Young-Han Kim's personal website \cite{YHKpersonalwebsite}, and is given in  Appendix \ref{appendixC}. We then tried to formulate a few theorems that would explain the structure of critical graphs that we observed. 

\begin{theorem}{\bf Union of two critical graphs is critical}\\ \label{thm3}
If $\mathsf{G}$ and $\mathsf{H}$ are two critical graphs, then $\mathsf{G} \cup \mathsf{H}$ is also a critical graph for any of linear/non-linear, one-shot/asymptotic formulations. Further,
if $\mathsf{G}$ and $\mathsf{H}$ are two symmetric rate critical graphs, then $\mathsf{G} \cup \mathsf{H}$ is also a symmetric rate critical in one-shot linear, asymptotic linear, and asymptotic non-linear index coding scenarios. \end{theorem}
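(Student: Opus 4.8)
\emph{Proof proposal.} The plan is to reduce everything to the additivity result of Theorem~\ref{thm3new}, exploiting that for vertex-disjoint $\mathsf{G}$ and $\mathsf{H}$ the graph $\mathsf{G}\cup\mathsf{H}$ has no edge joining the two parts. Any edge to be deleted belongs to exactly one of $\mathsf{G}$ or $\mathsf{H}$; by symmetry say it is an edge $e$ of $\mathsf{G}$, so after deletion we obtain $(\mathsf{G}-e)\cup\mathsf{H}$, again a disjoint union with no connecting edge. Thus the whole question is whether replacing $\mathsf{G}$ by $\mathsf{G}-e$ in the disjoint union strictly shrinks the rate region (for plain criticality) or strictly lowers the maximum symmetric rate (for symmetric criticality). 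Throughout I will use the elementary ``restriction/extension'' fact: since the decoders of the $\mathsf{G}$-nodes never see any $\mathsf{H}$-message as side information, fixing all $\mathsf{H}$-messages to a constant turns any code (linear or not, one-shot or a term of an asymptotic sequence) for $\mathsf{K}\cup\mathsf{H}$ into a code for $\mathsf{K}$ with the same public-message alphabet and the same $\mathsf{K}$-rates; conversely, any code for $\mathsf{K}$ extends to $\mathsf{K}\cup\mathsf{H}$ by assigning singleton message alphabets to the $\mathsf{H}$-nodes. Hence, in every one of the four settings, a point $(\overline{\bf r}',\overline{\bf 0})$ lies in the (closed) rate region of $\mathsf{K}\cup\mathsf{H}$ if and only if $\overline{\bf r}'$ lies in that of $\mathsf{K}$.

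For plain criticality: since $\mathsf{G}$ is critical, deleting $e$ strictly shrinks its rate region, so there is a vector $\overline{\bf r}^\star$ supported by $\mathsf{G}$ but not by $\mathsf{G}-e$; as $\overline{\bf 0}$ is always supported, $\overline{\bf r}^\star\neq\overline{\bf 0}$. By the restriction/extension fact, $(\overline{\bf r}^\star,\overline{\bf 0})$ is supported by $\mathsf{G}\cup\mathsf{H}$ but not by $(\mathsf{G}-e)\cup\mathsf{H}$ (the latter would force $\overline{\bf r}^\star$ to be supported by $\mathsf{G}-e$). Hence deleting $e$ strictly shrinks the rate region of $\mathsf{G}\cup\mathsf{H}$, which settles criticality in all of linear/non-linear and one-shot/asymptotic. (One may also read this off the formula $\mathcal{C}(\mathsf{G}\cup\mathsf{H})=\bigcup_{\alpha\in[0,1]}\alpha\,\mathcal{C}(\mathsf{G})\oplus(1-\alpha)\,\mathcal{C}(\mathsf{H})$ of Theorem~\ref{thm3new}: the point $(\overline{\bf r}^\star,\overline{\bf 0})$ admits a representation only with $\alpha=1$, which then requires $\overline{\bf r}^\star\in\mathcal{C}(\mathsf{G}-e)$.)

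For symmetric-rate criticality, the minimum bound coming from restriction alone, namely $\rho(\mathsf{G}\cup\mathsf{H})\le\min(\rho(\mathsf{G}),\rho(\mathsf{H}))$ where $\rho(\cdot)$ denotes the maximum symmetric rate, is too weak, since decreasing $\rho(\mathsf{G})$ need not decrease a minimum; here one genuinely needs the exact description of the region. Invoking the additivity of Theorem~\ref{thm3new} (in the asymptotic linear and asymptotic non-linear cases, and its one-shot-linear counterpart obtained by splitting the public message into a $\mathsf{G}$-block and an $\mathsf{H}$-block for achievability, and restriction for the converse), the symmetric point $(s,\dots,s)$ lies in $\alpha\,\mathcal{C}(\mathsf{G})\oplus(1-\alpha)\,\mathcal{C}(\mathsf{H})$ precisely when $s\le\alpha\,\rho(\mathsf{G})$ and $s\le(1-\alpha)\,\rho(\mathsf{H})$, and optimizing over $\alpha\in[0,1]$ gives
\[
\rho(\mathsf{G}\cup\mathsf{H})=\frac{\rho(\mathsf{G})\,\rho(\mathsf{H})}{\rho(\mathsf{G})+\rho(\mathsf{H})}=\Bigl(\rho(\mathsf{G})^{-1}+\rho(\mathsf{H})^{-1}\Bigr)^{-1}.
\]
This expression is strictly increasing in each of $\rho(\mathsf{G})$ and $\rho(\mathsf{H})$ (both strictly positive), so if deleting $e\in\mathsf{G}$ strictly decreases $\rho(\mathsf{G})$, which is exactly the hypothesis that $\mathsf{G}$ is symmetric rate critical, then it strictly decreases $\rho(\mathsf{G}\cup\mathsf{H})$ as well. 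This yields symmetric rate criticality of $\mathsf{G}\cup\mathsf{H}$ in the three claimed settings.

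The main obstacle is not conceptual but a matter of making the ingredients precise: (i) checking that fixing the $\mathsf{H}$-messages to a constant preserves linearity and the zero-error property, and handling closures in the asymptotic regimes, so that a sequence of codes for $(\mathsf{G}-e)\cup\mathsf{H}$ approaching $(\overline{\bf r}^\star,\overline{\bf 0})$ truly certifies $\overline{\bf r}^\star$ in the closed region of $\mathsf{G}-e$; (ii) confirming that the achievable symmetric rates of a graph form an interval $[0,\rho(\cdot)]$ (from convexity, closedness and downward-closedness of the region) so that the displayed identity is legitimate; and (iii) establishing the one-shot-linear version of the additivity formula needed for the one-shot-linear part of the symmetric statement (the one-shot non-linear case is deliberately excluded, consistent with the non-USCS one-shot non-linear critical graph of Theorem~\ref{thm2}). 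None of these should present real difficulty, but they are where the care is required.
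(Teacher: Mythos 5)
Your proposal is correct and follows essentially the same route as the paper: zero-padding the $\mathsf{H}$-nodes and invoking criticality of $\mathsf{G}$ for the plain-criticality part, and the harmonic-mean identity $\rho(\mathsf{G}\cup\mathsf{H})^{-1}=\rho(\mathsf{G})^{-1}+\rho(\mathsf{H})^{-1}$ derived from Theorem~\ref{thm3new} for the symmetric part (the paper arrives at this identity and concludes by contradiction, whereas you conclude by strict monotonicity, a purely presentational difference). Your flagged details (i)--(iii) are exactly the points the paper handles, in particular the one-shot linear case where the paper works directly with the minimum public-message lengths $n_1,n_2,n_1+n_2$ and the decomposition from the proof of Theorem~\ref{thm3new}(a).
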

%================================================================

\begin{theorem} {\bf Two structures that are critical}\\ 
\label{thm5}
\textbf{a)} Suppose $\mathsf{G} = (\mathcal{V}, \mathcal{E})$ is a directed cycle of length $n$, where
$$ \mathcal{V} = \{1, \cdots, m\},$$
$$\mathcal{E} = \{(i, i+1): 1\leq i < m\} \cup \{(m, 1)\}.$$
Now, construct a new graph $\mathsf{G}' = (\mathcal{V'}, \mathcal{E'})$ so that $\mathcal{V'} = \mathcal{V} \cup \{m+1\}$ and $\mathcal{E'} = \mathcal{E} \cup \{(m+1, 1), (m+1, i), (j, m+1), (k, m+1)\}$. Then, if $1 \leq j  < i$ and $i \leq k \leq n$, $\mathsf{G}'$ is symmetric rate critical.

\textbf{b)} Suppose $\mathsf{G}'=(\mathcal{V'}, \mathcal{E'})$ is a graph that satisfies the condition of part (a). We construct a new graph $\mathsf{G}''=(\mathcal{V''}, \mathcal{E''})$ by replacing any vertex $u \in \mathcal{V'}$ by a complete graph 
(different vertices can be replaced by complete graphs of different sizes). Then, $\mathsf{G}''$ is critical. More specifically, we replace vertex $u$ with $n_u$ vertices $(u,1), (u,2), \cdots, (u,n_u)$ that are mutually connected to each other. We also draw a directed edge from $(u, i)$ to $(v, j)$ in $\mathsf{G}''$ for $i\in[1:n_u]$ and $j\in[1:n_v]$ if there exists a directed edge from $u$ to $v$ in $\mathsf{G}'$.
\end{theorem}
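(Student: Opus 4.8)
The plan is to handle part (a) by computing the maximum symmetric rate of $\mathsf{G}'$ exactly and then verifying that each edge is essential, and to deduce part (b) by re‑using the codes and acyclic sets constructed for part (a). Throughout write $n$ for the cycle length and $0$ for the added vertex $m+1$, so $\mathcal V'=\{0,1,\dots,n\}$, the edges are the Hamiltonian cycle $1\to2\to\cdots\to n\to1$ together with $0\to1$, $0\to i$, $j\to0$, $k\to0$, and $1\le j<i\le k\le n$. Since every achievability claim below is realized by a one‑shot linear code and every converse is an instance of Lemma~\ref{lemma:rates-sum}, the conclusions will hold simultaneously for one‑shot/asymptotic and linear/non‑linear. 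For the converse in (a): by Lemma~\ref{lemma:rates-sum} an acyclic induced subgraph on a set $S$ forces $\sum_{v\in S}r_v\le1$, so I would show the largest acyclic induced subgraph of $\mathsf{G}'$ has exactly $n-1$ vertices (giving maximum symmetric rate $\le1/(n-1)$). Deleting one cycle vertex gives an acyclic set of size $n-1$; to exclude size $n$, such a set must be $\{0\}\cup(\{1,\dots,n\}\setminus\{v\})$, and the cycles through $0$ — leave $0$ by $0\to1$ or $0\to i$, traverse an arc of the cycle, return by $j\to0$ or $k\to0$ — cannot all be destroyed by deleting a single $v$ because $j<i\le k$ forces their interior arcs to be transversal; the degenerate cases $j=1$ and $i=k$ produce $2$‑cycles at $0$ and are checked directly.

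For achievability in (a) I would exhibit a rank‑$(n-1)$ matrix, i.e. a one‑shot linear code with all message lengths $1$ and block length $n-1$. Pick $\mathbf g_1,\dots,\mathbf g_n\in\mathbb F^{n-1}$ over a large enough field in general position (every $n-1$ of them a basis), let $d=(d_1,\dots,d_n)$, all entries nonzero, span the one‑dimensional space of relations $\sum_t d_t\mathbf g_t=0$, and set $\mathbf g_0:=\sum_{t=j+1}^{k}d_t\mathbf g_t$. For $v\in\{1,\dots,n\}$ the non‑neighbours of $v$ are the cycle‑non‑neighbours of $v$ together with $0$; but $\mathbf g_0$ already lies in $\operatorname{span}\{\mathbf g_w:w\text{ a cycle-non-neighbour of }v\}$ — this works because both endpoints of every link $t\to t+1$ with $t\notin\{j,k\}$ lie on the same side of the arc $\{j+1,\dots,k\}$ — so the required condition $\mathbf g_v\notin\operatorname{span}\{\mathbf g_w:w\text{ non-neighbour of }v\}$ reduces to independence of any $n-1$ of $\mathbf g_1,\dots,\mathbf g_n$. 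For $v=0$ the non‑neighbours are $\{1,\dots,n\}\setminus\{1,i\}$, and $\mathbf g_0$ is not in their span because $1\notin\{j+1,\dots,k\}$ while $i\in\{j+1,\dots,k\}$, so no single scalar can cancel both the $\mathbf g_1$‑ and $\mathbf g_i$‑coefficient of $\mathbf g_0$. Together with the converse this gives maximum symmetric rate exactly $1/(n-1)$. For edge‑essentiality it then suffices, for each edge $e$, to find an $n$‑vertex acyclic induced subgraph of $\mathsf{G}'-e$ (forcing symmetric rate $\le1/n<1/(n-1)$): if $e$ is a cycle edge, take $\{1,\dots,n\}$; if $e$ is one of the four edges at $0$, then in $\mathsf{G}'-e$ the vertex $0$ retains a single out‑neighbour or a single in‑neighbour, and deleting that vertex from the cycle breaks the $n$‑cycle and makes $0$ a source or a sink, so $\{0\}\cup(\{1,\dots,n\}\setminus\{\text{that vertex}\})$ is acyclic of size $n$. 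I would record that each such $n$‑set contains both endpoints of $e$; this is what powers part (b).

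For part (b), first $\mathsf{G}''$ supports $(\tfrac1{n-1},\dots,\tfrac1{n-1})$: assign every copy $(u,a)$ of a vertex $u$ the vector $\mathbf g_u$ from part (a); since the non‑neighbours of $(u,a)$ are exactly the copies of the non‑neighbours of $u$, the transmission $\sum_u(\sum_a W_{(u,a)})\mathbf g_u$ lets $(u,a)$ recover $W_{(u,a)}$ by the same condition that worked for $u$. Now for each edge $e$ of $\mathsf{G}''$ I would exhibit an $n$‑vertex acyclic induced subgraph of $\mathsf{G}''-e$; by Lemma~\ref{lemma:rates-sum} this kills the rate point above and hence strictly shrinks the region. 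If $e=(u,a)\to(v,b)$ is inherited from $u\to v$ in $\mathsf{G}'$, take the $n$‑set $S$ of $\mathsf{G}'-(u\to v)$ from part (a) (it contains $u$ and $v$) and one copy of each of its vertices, using copy $a$ for $u$ and copy $b$ for $v$: the induced subgraph on these $n$ copies is $\mathsf{G}'[S]$ minus the edge $u\to v$, hence acyclic. If $e=(u,a)\to(u,a')$ is an edge inside the clique replacing $u$, take an $(n-1)$‑vertex acyclic induced subgraph $A$ of $\mathsf{G}'$ with $u\in A$ (delete from the cycle the two out‑neighbours of $0$ if $u=0$, or one other cycle vertex if $u\neq0$) and take one copy of each vertex of $A\setminus\{u\}$ together with both $(u,a)$ and $(u,a')$: the only edge between the two copies of $u$ is $(u,a')\to(u,a)$, so any directed cycle would project to a cycle of $\mathsf{G}'[A]$, which does not exist. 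Hence $\mathsf{G}''$ is critical in every formulation.

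The main obstacle is the achievability part of (a): choosing $\mathbf g_0$ so that all $n+1$ independence conditions hold at once. This is precisely where the hypotheses $j<i\le k$ are used, and where one must be careful with the wrap‑around link $n\to1$ and with the degenerate configurations $j=1$ and $i=k$. Everything after that — the converse computation, the edge‑case analysis in (a), and all of part (b) — is routine bookkeeping with acyclic vertex sets and the cut‑set bound of Lemma~\ref{lemma:rates-sum}.
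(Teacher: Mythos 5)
Your overall architecture matches the paper's: achievability of the symmetric rate $\tfrac{1}{n-1}$ by a one-shot linear code, then, for each edge $e$, exhibit an $n$-vertex acyclic induced subgraph of the graph with $e$ removed and invoke Lemma~\ref{lemma:rates-sum} to force the symmetric rate down to $\le \tfrac{1}{n}$. Your remark that one-shot linear achievability plus a non-linear-asymptotic converse settles all four formulations simultaneously is exactly the implicit logic of the paper. The one genuine difference is the achievability construction for part (a). The paper transmits, for each cycle vertex $l$, the XOR of $W_l$ with its side information (adding $W_{m+1}$ into $f_j$ and $f_k$), observes that the $m$ bits sum to zero so one can be dropped, and lets the hub decode from the partial prefix sum $\bigoplus_{l<i}f_l = W_1\oplus W_i\oplus W_{m+1}$. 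This is entirely over $\mathbb{F}_2$ and needs no field-size hypothesis. You instead take $\mathbf g_1,\dots,\mathbf g_n$ in general position in $\mathbb{F}^{n-1}$ over a large enough field and set $\mathbf g_0=\sum_{t=j+1}^{k}d_t\mathbf g_t$; the condition $j<i\le k$ shows up as the requirement that the arc $\{j+1,\dots,k\}$ contain $i$ but not $1$, and be crossed exactly at $j$ and $k$. Your verification of the interference-alignment conditions is correct; the price is the large-field hypothesis, which the paper avoids. Both codes give the same rate and both are one-shot linear, so either suffices.

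On the edge-essentiality side, your acyclic sets for part (a) are the same as the paper's (delete the surviving out- or in-neighbour of $0$, making $0$ a sink or a source and breaking the Hamiltonian cycle). For part (b), in the clique-edge case $e=\bigl((u,a),(u,a')\bigr)$ your construction — take an $(n-1)$-vertex acyclic set $A$ of $\mathsf{G}'$ containing $u$, lift one copy of every vertex of $A\setminus\{u\}$ and both copies of $u$ — is a mild but welcome improvement over the paper's written choice of $A$ for $1\le u\le m$: as stated the paper's set $\{(u,s),(u,t)\}\cup\{((u+l)\bmod m,1):1\le l\le m-1\}$ includes lifts of all $m$ cycle vertices and therefore still contains the $m$-cycle through $(u,s)$ (the intended range is presumably $2\le l\le m-1$, dropping $u+1$ as well so that $(u,s)$ and $(u,t)$ become sinks). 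Your version sidesteps this issue cleanly by starting from an acyclic $(n-1)$-set before doubling $u$. The paper's set for $u=m+1$, namely all cycle lifts except $(j,1),(k,1)$ together with $(u,s),(u,t)$, is also acyclic and has the same flavour as your choice. In short: correct proof, same skeleton, a different (large-field, MDS-flavoured) achievability code in place of the paper's binary XOR code, and a slightly tidier handling of the clique-edge case in (b).
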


\begin{remark}\label{rmk5a}
The criticallity of graphs Fig.\ref{fig:893}, Fig.\ref{fig:771}, Fig.\ref{fig:2312}, Fig.\ref{fig:2120}, Fig.\ref{fig:2404}, Fig.\ref{fig:8625}, Fig.\ref{fig:8495}, Fig.\ref{fig:8285}, Fig.\ref{fig:7026}, and Fig.\ref{fig:8847} can be shown by Theorem \ref{thm5}. 
\end{remark}

\subsection{Extension to Groupcast}\label{subsec:groupcast}

The index coding problem that we considered so far is called unicast index coding problem. A generalization of the unicast index coding is the groupcast index coding. In groupcast index coding, the desired messages of receivers are not necessarily disjoint, i.e. a group of receivers can desire the same message.

\begin{definition}{\bf Groupcast Index Coding}\\
Assuming a set of $m$ messages $\{W_1, W_2, \cdots, W_m\}$, a groupcast index coding problem can be modeled with a directed hypergraph on $m$ vertices with node $i$ representing $W_i$. Each receiver can be represented as a directed hyperedge starting from its desired message and ending at its side information. In other words, if receiver $i$ wants to know $W_{d_i}$ while having $A_i\subset \{W_1, W_2, \cdots, W_m\} \backslash W_{d_i}$ like $A_i$ as its side information, we add a directed hyperedge from $\{W_{d_i}\}$ to $A_i$. The number of receivers will be equal to the number of hyperedges.

A hypergraph is said to be critical if eliminating any member of the side information set of any receiver strictly reduces the set of rates supported by the hypergraph.

\end{definition}

\begin{definition}{\bf Underlying Digraph of a Directed Hypergraph}\\
Let $\mathsf{H}=(\mathcal{V}, \mathcal{E})$ be a directed hypergraph. Then we call $\mathsf{G}=(\mathcal{V}, \mathcal{E}_\mathsf{G})$ the underlying digraph (directed graph) of $\mathsf{H}$, where:
$$\mathcal{E}_\mathsf{G} = \{(u, v) \  | \  \exists P,Q \subseteq \mathcal{V}: ~~  u\in P, v\in Q,~~  (P, Q)\in \mathcal{E}\}$$
\end{definition}
\begin{remark}
Since groupcast index coding problem is a generalization of the unicast index coding problem, we can define the side information hypergraph for the unicast index coding problem too. It can be easily verified that the underlying digraph of this hypergraph is equal to the directed graph we used to model unicast index coding problem.
\label{rmk:gic-hypergraph}
\end{remark}

\begin{theorem}{\bf Groupcast Critical Graphs are USCS too} \label{thm6}

\textbf{a)} The underlying graph of every critical hypergraph for linear groupcast index coding (one-shot or asymptotic), and for asymptotic non-linear groupcast is USCS.

\textbf{b)} There exists a critical hypergraph for a one-shot non-linear groupcast index coding problem which is not USCS.
\end{theorem}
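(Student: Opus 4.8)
The plan is to deduce both parts from the unicast theory already developed. Part (a) will follow the route of Theorem \ref{thm2}(a): I would first establish a hypergraph version of Theorem \ref{thm3new}(a), and then use a condensation argument to exhibit, in any non-USCS underlying digraph, a piece of side information whose removal does not shrink the rate region. Part (b) should be essentially free: it is a restatement of Theorem \ref{thm2}(b) through the obvious embedding of unicast into groupcast.

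\emph{A hypergraph version of Theorem \ref{thm3new}(a).} Suppose $\mathsf{H}=(\mathcal{V},\mathcal{E})$ is a groupcast hypergraph and $\mathcal{V}=\mathcal{V}'\sqcup\mathcal{V}''$ is a partition such that the underlying digraph $\mathsf{G}$ has no edge from $\mathcal{V}'$ to $\mathcal{V}''$; equivalently, every receiver whose desired message lies in $\mathcal{V}'$ has all of its side information inside $\mathcal{V}'$. I would show that deleting, from every receiver whose desired message lies in $\mathcal{V}''$, all of its side-information members that lie in $\mathcal{V}'$, leaves the rate region unchanged in one-shot linear and in asymptotic (linear or non-linear) groupcast index coding. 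The inclusion that ``removing side information shrinks the region'' is trivial. For the reverse I would rerun the achievability argument behind Theorem \ref{thm3new}(a): that argument uses only the fact that each receiver decodes a \emph{single} message, which is still true in the groupcast model (a hyperedge has the form $(\{W_{d_i}\},A_i)$ with a singleton head). Informally, after freezing the $\mathcal{V}'$-messages to a fixed constant, the corresponding part of the public codeword already serves the $\mathcal{V}''$-receivers using only their $\mathcal{V}''$-side information; the $\mathcal{V}'$-messages are then coded ``on top,'' and no $\mathcal{V}'$-receiver is affected since it never uses a $\mathcal{V}''$-message. The asymptotic linear claim follows from the one-shot linear one by the same limiting argument used in Theorem \ref{thm2}.

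\emph{Proof of part (a).} Assume for contradiction that $\mathsf{H}$ is critical but its underlying digraph $\mathsf{G}$ is not USCS, so $\mathsf{G}$ has an edge $(u,v)$ lying on no directed cycle; equivalently there is no directed path from $v$ to $u$. Contracting the strongly connected components of $\mathsf{G}$ gives a DAG; let $\mathcal{V}'$ be the union of all components reachable (inclusively) from the component of $v$, and let $\mathcal{V}''$ be the rest. Then $v\in\mathcal{V}'$, $u\in\mathcal{V}''$, and $\mathsf{G}$ has no edge from $\mathcal{V}'$ to $\mathcal{V}''$. By the definition of the underlying digraph, some receiver $i$ desires $W_u$ and has $W_v\in A_i$. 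Deleting only $W_v$ from $A_i$ is one of the deletions covered by the hypergraph version of Theorem \ref{thm3new}(a) above (since $W_u\in\mathcal{V}''$ and $W_v\in\mathcal{V}'$), so the region obtained this way sits between the full region of $\mathsf{H}$ and the region after performing all of those deletions; the latter two coincide, hence so does the former, contradicting criticality. This goes through simultaneously for one-shot linear, asymptotic linear, and asymptotic non-linear groupcast.

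\emph{Proof of part (b), and the main obstacle.} A unicast instance on a digraph $\mathsf{G}=(\mathcal{V},\mathcal{E})$ is the groupcast hypergraph with one hyperedge $(\{W_i\},\{W_j:(i,j)\in\mathcal{E}\})$ per node (cf.\ Remark \ref{rmk:gic-hypergraph}); its underlying digraph is $\mathsf{G}$ itself, and ``eliminating a member of some receiver's side information'' is precisely ``deleting an edge of $\mathsf{G}$.'' Hence this hypergraph is critical for one-shot non-linear groupcast exactly when $\mathsf{G}$ is critical for one-shot non-linear unicast, and taking $\mathsf{G}$ to be the non-USCS critical graph from Theorem \ref{thm2}(b) produces the required example. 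The only genuine difficulty in the whole argument is the hypergraph version of Theorem \ref{thm3new}(a): one has to check that the achievability scheme from the unicast proof never exploits distinctness of the desired messages, only that each receiver wants a single message, and that freezing the $\mathcal{V}'$-messages yields a legitimate layered code for the reduced hypergraph. The condensation step, the monotonicity sandwich, and part (b) are all routine.
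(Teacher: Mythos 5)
Your proposal matches the paper's own proof essentially step for step: part (b) by the unicast-into-groupcast embedding of Remark \ref{rmk:gic-hypergraph} with the example from Theorem \ref{thm2}(b), and part (a) by partitioning the vertex set via reachability from the head of the acyclic edge and then observing that the achievability argument behind Theorem \ref{thm3new}(a) carries over to hypergraphs because it only uses the fact that receivers decoding a $\mathcal{V}'$-message have no $\mathcal{V}''$-side information (the paper's $\mathcal{V}_1,\mathcal{V}_2$ are your $\mathcal{V}',\mathcal{V}''$). The extra phrasing in terms of SCC condensation and the ``monotonicity sandwich'' are cosmetic refinements of the same argument, so nothing further needs to be said.
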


\section{Future Work}\label{Section:future}
Consider the index coding for a random graph where  directed edges exists between any two nodes with probability $p$ and independent of other edges. Computing index coding for this class of random graphs can be of interest. Theorem \ref{thm2}  can be used to find a lower bound on the expected number of edges that we can remove from this graph such that it does not affect the capacity region. A lower bound is the expected value of number of edges that do not lie on a directed cycle, which is equal to $2{n\choose 2}$ times the probability that a directed edge from node $1$ to node $2$ exists which does not lie on a directed cycle. The expected value will be equal to $2{n\choose 2}p(1-q)$ where $q$ is the probability that there is a directed path from node $1$ to node $2$; we have multiplied $p$ with $1-q$ as they correspond to independent events. Computing $q$, pair connectedness in directed random graphs, is a studied topic in percolation theory \cite{percolation-paper} but we were not able to find a closed form formula for it.

%%%%%%%%%%%%%%%%%%%%%%%%%%%%%%%%%
% 				Proofs 						% 
%%%%%%%%%%%%%%%%%%%%%%%%%%%%%%%%%

\section{Proofs}
\subsection {Proof of Theorem \ref{thm1}}
We begin by proving the given lower bound on the minimum number of edges. It suffices to prove it for the non-linear asymptotic case since it implies that for all other cases. Suppose that a given graph $\mathsf{G} = (\mathcal{V}, \mathcal{E})$ supports the rate vector $\overline{\bf r} = (r, \cdots, r)$ for non-linear asymptotic case. We aim to construct two new graphs and with the help of Lemma \ref{lemma:rates-sum} and \ref{lemma:Turan} find some bounds on the number of edges in these two graphs. Then we use these bounds to find a bound on the number of edges in $\mathsf{G}$. Using Lemma \ref{lemma:rates-sum}, every subset of $\mathcal{V}(\mathsf{G})$ whose size is bigger than $\left \lfloor \frac{1}{r}\right \rfloor$, has a directed cycle, because the sum of the rates of the vertices in this subset is greater than or equal to $r\times(\left \lfloor\frac{1}{r} \right \rfloor + 1) > 1$. Then, we consider an arbitrary order for the vertices of $\mathsf{G}$ such as ${1}, \cdots, {m}$ and construct two new graphs (called ``forward" and ``backward" graphs) as follows: $\mathsf{G}^f = (\mathcal{V}^f, \mathcal{E}^f)$ and $\mathsf{G}^b = (\mathcal{V}^b, \mathcal{E}^b)$ where $\mathcal{V}^f = \mathcal{V}^b = \mathcal{V}$, and $\mathcal{E}^f,\mathcal{E}^ b$ is a partition of $\mathcal{E}$ into two sets as follows: $\mathsf{G}^f$ contains those edges of $\mathsf{G}$ whose direction agrees with the mentioned order, that is, $\mathcal{E}^f = \{(x, y) \in \mathcal{E}| x<y\}$. $\mathsf{G}^b$ contains the following edges: $\mathcal{E}^b = \{(x, y) \in \mathcal{E}| x>y\}$. Now, because every cycle in $\mathsf{G}$ should contain at least one edge from both $\mathsf{G}^f$ and $\mathsf{G}^b$, every subset of size more than $\left \lfloor \frac{1}{r}\right \rfloor$ has at least one edge in both $\mathsf{G}^f$ and $\mathsf{G}^b$.

Now let us construct a bidirectional graph $\widetilde{\mathsf{G}}^f$ on the same set of vertices as follows: $x$ is connected to $y$ in $\widetilde{\mathsf{G}}^f$ for $x\neq y$ if an only if $({\min(x,y)}, {\max(x,y)})\notin \mathcal{E}^f$. Observe that $\widetilde{\mathsf{G}}^f$ is like the complement of $\mathsf{G}^f$ if we ignore the edge arrows of $\mathsf{G}^f$. Similarly, $\widetilde{\mathsf{G}}^b$ is constructed as the complement of $\mathsf{G}^b$ if we ignore the direction of arrows in it. Since every subset of size more than $\left \lfloor \frac{1}{r}\right \rfloor$ has at least one edge in both $\mathsf{G}^f$ and $\mathsf{G}^b$, we can conclude that $\widetilde{\mathsf{G}}^f$ and $\widetilde{\mathsf{G}}^b$ do not have a clique of size $\left \lfloor \frac{1}{r}\right \rfloor+1$. Using Lemma \ref{lemma:Turan}, the number of edges of both $\mathsf{G}^f$ and $\mathsf{G}^b$ is at least
\begin{eqnarray}
\binom{m}{2} - e(m, \left \lfloor\frac{1}{r}\right \rfloor)=\frac{g(r, m)}{2}.
\end{eqnarray}
Hence, $\mathsf{G}$ itself has at least $g(r, m)$ edges.\\

Next, we will show that the complement of $T(m, \left \lfloor\frac{1}{r}\right \rfloor)$ supports the rate $\overline{\bf r}$. It suffices to show this for one-shot linear coding and it implies that for all cases there exists a graph which supports the rate $\overline{\bf r}$. 
Let $m=a\left \lfloor \frac{1}{r}\right \rfloor+b$ for some $a\geq 0, b\in \{0,1,2\cdots, \left \lfloor \frac{1}{r}\right \rfloor-1\}$.
 Then  we construct $\mathsf{G}$ so that it consists of $b$ cliques of size $a+1$, and $\left \lfloor \frac{1}{r}\right \rfloor - b$ cliques of size $a$.\footnote{A clique is a graph where every vertex has a directed edge to every other vertex.} Then one can verify that $\mathsf{G}$ has $g(r,m)$ edges. In addition, if every node desires only one bit and we transmit the XOR of the bits in every clique, every vertex can decode its message, and the rate of every message equals to $\frac{1}{\lfloor \frac{1}{r}\rfloor} \geq r$. Furthermore, it is obvious that this is a one-shot linear coding. Thus we have shown that there exists a graph which supports the rate $\overline{\bf r}$. \\

Lastly, to show that no other graph with exactly $g(r, m)$ edges supports $\overline{\bf r}$,  consider a graph $\mathsf{G}$ that has $g(r, m)$ edges and supports the rate vector $\overline{\bf r} = (r, \cdots, r)$ in non-linear asymptotic case (it suffices to show this for the non-linear asymptotic case and it will imply other cases). If we construct $\mathsf{G}^f$ and $\mathsf{G}^b$ as discussed before, each of them should have exactly $\frac{g(r,m)}{2}$ edges and they should have the structure mentioned in Lemma \ref{lemma:Turan}. So, the only remaining step is to show that the cliques in $\mathsf{G}^f$ and $\mathsf{G}^b$ coincide on each other. Suppose this does not hold, that is, there are two vertices where there is an edge between them in $\mathsf{G}^f$, but not in $\mathsf{G}^b$. Let us call these two vertices $u$ and $v$. Choose one vertex from each of the $\lfloor \frac{1}{r} \rfloor$ components of $\mathsf{G}^f$ such that $u$ is chosen and let us denote this set by $X$. Then we claim that $X \cup \{v\}$ does not contain any cycle in $\mathsf{G}$. Note that if a cycle exists, it should include the edge between $u$ and $v$, because it is the only edge in $X \cup \{v\}$ in $\mathsf{G}^f$ and the cycle should have at least one edge from $\mathsf{G}^f$. Now the other edges in the cycle form a path from $v$ to $u$ in $\mathsf{G}^b$ . As every component of $\mathsf{G}^b$ is a clique then $u$ and $v$ should have an edge, which contradicts our assumption that $u$ and $v$ are disconnected in $\mathsf{G}^b$. \mbox{}\hspace*{\fill}\nolinebreak\mbox{$\rule{0.6em}{0.6em}$}

\subsection{Proof of Theorem \ref{thm3new}} 
\subsubsection{Proof of part (a)}\-

\begin{proof}[Proof of part (a) for asymptotic non-linear index coding]
Consider an arbitrary code on the original graph with zero probability of error. Let $K=f(W_1, W_2, \cdots, W_m)$ be the public message. The rate of this code is $\overline{\bf r} = (r_1, r_2, \cdots, r_m)$ where
$$r_i=\frac{ \log(|\mathcal{W}_i|)}{\log(|\mathcal{K}|)}.$$

The union of $\mathsf{G'}$ and $\mathsf{G''}$ corresponds to the graph $\mathsf{G}$ after elimination of  directed edges from $\mathsf{G}''$ to $\mathsf{G}'$. Take an arbitrary $\epsilon>0$. We create a code for the union of $\mathsf{G'}$ and $\mathsf{G''}$ that achieves the rate vector $\overline{\bf r}' = (r'_1, r'_2, \cdots, r'_m)$ where $r'_i\geq r_i-\epsilon$, with the probability of error being less than $\epsilon$. This concludes the proof (see Remark \ref{rmk:zeroerror} on index coding with a vanishing probability of error).

We can conceive $n$ i.i.d. repetitions of the given code with $(W_1^n, W_2^n, \cdots, W_m^n)$ and public message $K^n$. The rate of the i.i.d. code is the same as the original one since
$$\log(|\mathcal{W}_i^n|)=n \log(|\mathcal{W}_i|),\qquad \log(|\mathcal{K}^n|)=n\log(|\mathcal{K}|).$$
Since the original code had zero error probability, the i.i.d. code has also a zero probability of error.

We define $W_\mathsf{G'}$ as a shorthand for $W_i, i\in \mathsf{G'}$, and $W^n_\mathsf{G'}$ as a shorthand for $W_i^n, i\in \mathsf{G'}$. We define a new code that uses $(K', K'')$ instead of $K^n$ where $K'$ is used by nodes in $\mathsf{G'}$ and $K''$ is used by nodes in $\mathsf{G''}$:
\begin{itemize}
\item Size of the alphabet of $K'$, i.e. $|\mathcal{K}'|$, is less than or equal to $2^{n(I(K;W_\mathsf{G'})+\delta)}$. Furthermore, the nodes in $\mathsf{G'}$ can use $K'$ and their side information (which is inside $\mathsf{G'}$) to recover their message with probability $1-\epsilon$.
\item Size of the alphabet of $K''$, i.e. $|\mathcal{K}''|$, is less than or equal to $2^{n(H(K|W_\mathsf{G'})+\delta)}$. Furthermore, the nodes in $\mathsf{G''}$ can use $K''$ and part of their side information of messages inside $\mathsf{G''}$ to recover their message with probability $1-\epsilon$.
\end{itemize}
This would finish the proof since $\log(|\mathcal{K}'|\cdot |\mathcal{K}''|)$ is equal to $n(\log(|\mathcal{K}|)+2\delta)$ and by choosing $\delta$ small enough we can ensure that the rate of the new code is within $\epsilon$ of the original code.

\emph{Construction of $K''$:}

We have $\min_{w_\mathsf{G'}}H(K|W_\mathsf{G'}=w_\mathsf{G'})\leq H(K|W_\mathsf{G'})$. Thus, it suffices to construct $K''$ whose alphabet size is less than or equal to $2^{n(H(K|W_\mathsf{G'}=w_\mathsf{G'})+\delta)}$ where $w_\mathsf{G'}$ is the one that minimizes $H(K|W_\mathsf{G'}=w_\mathsf{G'})$.

Let us first assume in the original problem that $W_\mathsf{G'}=w_\mathsf{G'}$ has occurred and the nodes in $\mathsf{G''}=\mathsf{G}-\mathsf{G'}$ are all aware of this (thus, if some of the nodes in $\mathsf{G''}$ had partial information about messages of nodes in $\mathsf{G'}$, we are giving all of them a full access to $W_\mathsf{G'}$ and this should only help them in decoding their message). Thus the nodes in $\mathsf{G''}$ should be able to recover their intended messages using $K$ and their side information inside $\mathsf{G''}$ with probability one, when $W_\mathsf{G'}=w_\mathsf{G'}$ is fixed. We can use the conditional joint pmf $p(K,W_{\mathsf{G''}}|W_\mathsf{G'}=w_\mathsf{G'})$ as a joint pmf on $q(k,w_{\mathsf{G''}})$ on $K, W_{\mathsf{G''}}$ and think of it as 
 an index code 
on nodes in $\mathsf{G''}$ (since $W_{\mathsf{G''}}$ is independent of $W_\mathsf{G'}$, the marginal distribution of $q(w_{\mathsf{G''}})$ is uniform and coordinatewise mutually independent). The public message in the index coding problem on $\mathsf{G''}$ would be produced according to $q(k)=p(k|W_\mathsf{G'}=w_\mathsf{G'})$ and it leads to zero error probability.

If we have $n$ i.i.d. copies of the pmf $q$ (still a code with zero error probability), the corresponding public message can be compressed using Shannon's source coding theorem and sent to the parties, where nodes in $\mathsf{G''}$ can first decompress it and then use it to run their decoding algorithm. Compression can be achieved at a rate of $H_q(K)+\delta=H(K|W_H=w_H)+\delta$ bits at the cost of a probability of error of $\epsilon$, which is tolerated.

Note that the public message $K''$ is only meant for the use of subgraph $\mathsf{G''}$; to construct the code for $\mathsf{G''}$ we have pretended that $W_\mathsf{G'}=w_\mathsf{G'}$ has happened in each copy of $\mathsf{G'}$. It is clear that $K''$ contains no useful information about $W_\mathsf{G'}^n$ that has actually occurred, and nodes in $\mathsf{G'}$ can ignore $K''$.

\emph{Construction of $K'$:}

Let $p(k, w_\mathsf{G'})$ denote the joint distribution of $K$ and $W_\mathsf{G'}$ in the original code. The decoding function used by node $i\in \mathsf{G'}$ can be expressed as the conditional pmf $p(\hat{w}_i|k, (w_j)_{j:(i,j)\in\mathcal{E}})$ where $\hat{W}_i$ is the reconstruction of node $i$. Of course $\hat{W}_i=W_i$ since perfect reconstruction is assumed. Therefore the joint pmf
$$p(k, w_\mathsf{G'}, \hat{w}_\mathsf{G'})=p(k,w_\mathsf{G'})\prod_{i\in \mathsf{G'}}p(\hat{w}_i|k, (w_j)_{j:(i,j)\in\mathcal{E}})$$
has the property that the marginal distribution on $W_\mathsf{G'}$ and $\hat{W}_\mathsf{G'}$ is equal to \begin{align}p(W_\mathsf{G'}=w_\mathsf{G'}, \hat{W}_\mathsf{G'}=\hat{w}_\mathsf{G'})=\prod_{i\in \mathsf{G'}}\textbf{1}[w_i=\hat{w}_i].\label{eqn:amsm1}\end{align}

We use the covering lemma (rate-distortion coding) to create a code for nodes in $\mathsf{G'}$. Let $\delta>0$ be an arbitrary small positive real.

 \emph{Codebook generation:} Assume that the transmitter and the receivers initially 
share a codebook of $2^{n(I(K;W_\mathsf{G'})+\delta)}$ sequences $$K^n(1), K^n(2), \cdots, K^n(2^{n(I(K;W_\mathsf{G'})+\delta)})$$ each being an i.i.d. sequence according to $p(k)$. 

\emph{Encoding:} Having $W_\mathsf{G'}^n$ at the transmitter, it finds an index $j$ such that $K^n(j)$ is jointly typical with $W_\mathsf{G'}^n$ (i.e. $(K^n(j), W_\mathsf{G'}^n)\in\mathcal{T}_{\delta}^n(p(k,w_\mathsf{G'}))$), where we use the notion of typicality given in \cite[2.4]{AbbasYoungHan}. Since the number of generated $K^n(\cdot)$ sequences is larger than $2^{n(I(K;W_\mathsf{G'})+\delta)}$ by the covering lemma \cite[Lemma 3.3]{AbbasYoungHan}, this can be done with high probability. The transmitter then sends the index $j$ as $K'$ to the receiver (the cardinality of the alphabet of $K'$ allows it to send the index $j$). 

\emph{Decoding:} Having received $K'=j$, nodes $i\in \mathsf{G'}$ create $\hat{W}_i^n$ as a function of $K^n(j)$ 
and their side information (they use the same decoding functions of the original code). More precisely, if we denote the joint pmf of $K^n(j)$ and $W_\mathsf{G'}^n$ by $q_{K^n(j), W_\mathsf{G'}^n}(k, w_\mathsf{G'}^n)$, the joint pmf of the constructed rv's is equal to
$$q_{K^n(j), W_\mathsf{G'}^n}(k, w_\mathsf{G'}^n)\prod_{i\in \mathsf{G'}}\prod_{s=1}^np(\hat{w}_{is}|k_s, (w_{js})_{j:(i,j)\in\mathcal{E}})$$
If $(K^n(j), W_\mathsf{G'}^n)\in\mathcal{T}_{\delta}^n(p(k,w_\mathsf{G'}))$, with high probability we will have $(K^n(j), W_\mathsf{G'}^n,\hat{W}_\mathsf{G'}^n)\in\mathcal{T}_{\delta'}^n(p(k,w_\mathsf{G'},\hat{w}_\mathsf{G'}))$ for any $\delta'>\delta$, 
as we have passed $K^n(j), W_\mathsf{G'}^n$ through the i.i.d. conditional pmf of $p(\hat{w}_\mathsf{G'}|k,w_\mathsf{G'})$ (Conditoinal typicality lemma \cite[2.5]{AbbasYoungHan}).
 Therefore 
 $K^n(j), W_\mathsf{G'}^n, \hat{W}_\mathsf{G'}^n$ will be joint typical with high probability. Thus for any $i\in \mathsf{G'}$, with high probability $(W_i^n, \hat{W}_i^n)$ will be jointly typical. We claim that two sequences $(W_i^n, \hat{W}_i^n)$ jointly typicality in the sense of \cite[2.4]{AbbasYoungHan} is equivalent with their equality. Equation \eqref{eqn:amsm1} implies that 
$p(W_\mathsf{G'}=w_\mathsf{G'}, \hat{W}_\mathsf{G'}=\hat{w}_\mathsf{G'})>0$ if and only if $w_\mathsf{G'}=\hat{w}_\mathsf{G'}$, and hence for any pair  $(w_\mathsf{G'},\hat{w}_\mathsf{G'})$ where $w_\mathsf{G'}\neq \hat{w}_\mathsf{G'}$ we have (using notation of \cite{AbbasYoungHan}) that$$\big|\Pi(w_\mathsf{G'},\hat{w}_\mathsf{G'}|W_i^n, \hat{W}_i^n)-p(w_\mathsf{G'}, \hat{w}_\mathsf{G'})\big|\leq \delta' \cdot p(w_\mathsf{G'}, \hat{w}_\mathsf{G'})=0.$$
Hence $\Pi(w_\mathsf{G'},\hat{w}_\mathsf{G'}|W_i^n, \hat{W}_i^n)=p(w_\mathsf{G'}, \hat{w}_\mathsf{G'})=0$ for any $w_\mathsf{G'}\neq \hat{w}_\mathsf{G'}$, implying that $W_i^n=\hat{W}_i^n$. Therefore with high probability the decoders will successfully decode their intended messages.
\end{proof}

It should be noted that the above proof does not work for asymptotic linear index coding because $K'$ will not necessarily be linear if $K$ is a linear index code.

\begin{proof}[Proof of part (a) for one-shot linear index coding]
Assume that there exists a valid one-shot linear coding scheme for a graph $\mathsf{G}$ with $|\mathcal{V}| = m$ vertices such that:
\begin{eqnarray}
W_i = (w_{i1}, w_{i2}, \cdots , w_{il_i}),
\end{eqnarray}
where $w_{ij} \in \mathbb{F}$ for some field $\mathbb{F}$. Additionally, assume that:
\begin{equation}
f(W_1, W_2, \cdots , W_m) = (t_1, t_2, \cdots , t_n)
\end{equation}
where $t_k$ is equal to
\begin{equation}
t_k = \sum_{i=1}^{m} \sum_{j=1}^{l_i} c_{ijk} \cdot w_{ij},\qquad \forall 1\leq k \leq n,
\end{equation}
for some coefficients $ c_{ijk}$ in the field $\mathbb{F}$. In other words, the following matrix is used for the linear map:
$$\mathsf{C}=
\left[ \begin{matrix}
c_{111} & c_{121} & \cdots & c_{1l_11} & c_{211} & \cdots & c_{ml_m1}\\
&&&\vdots&&&\\
c_{11n} & c_{12n} & \cdots & c_{1l_1n} & c_{21n}  & \cdots & c_{ml_mn}
\end{matrix} \right].
$$

Without loss of generality we can assume that $\mathsf{C}$ is in the row echelon form, since elementary row operation on $\mathsf{C}$ is equivalent to using \emph{invertible} linear combinations of $t_1, t_2, \cdots , t_n$ instead of these variables. The row echelon form can be represented by a sequence of indices \begin{align}(\mathtt{i}_k, \mathtt{j}_k),\qquad k=1,2,\cdots, n, \mathtt{j}_k\leq l_{\mathtt{i}_k}\label{eqn:aaaa1}\end{align}
that are increasing in a lexicographical order, i.e. either $\mathtt{i}_k<\mathtt{i}_{k+1}$ holds or both $\mathtt{i}_k=\mathtt{i}_{k+1}$ and $\mathtt{j}_k< \mathtt{j}_{k+1}$ hold. Further we must have $c_{ijk}=0$ if $(i,j)$ is less than $(\mathtt{i}_k, \mathtt{j}_k)$ in the lexical order.

Since all nodes are able to decode their messages via $(t_1, \cdots , t_n)$ and their side information, there should exist coefficients $\alpha_{ij1}, \alpha_{ij2}, \cdots , \alpha_{ijn}$ for each message $w_{ij}$ ($1\le j\le l_i$) so that:
\begin{eqnarray}
\sum_{k=1}^n \alpha_{ijk} t_k \label{eqn:parta1}
\end{eqnarray}
is equal to $w_{ij}$ plus a linear combination of $w_{i'j'}$ that are available to node $i$ as side information, i.e.
\begin{eqnarray}
\sum_{k=1}^n \alpha_{ijk} t_k=w_{ij}+\sum_{i',j': (i, {i'})\in\mathcal{E}}w_{i'j'}\cdot \gamma_{i'j'},\label{eqn:parta2}
\end{eqnarray}
for some coefficients $\gamma_{i'j'}$. 

Now, we turn to the proof of the lemma. Without loss of generality, suppose that the vertices of $\mathsf{G'}$ are $m-|\mathcal{V}'|+1, m-|\mathcal{V}'|+2, \cdots, m$. 
Note that this assumption and the assumption that $\mathsf{C}$ is in the row echelon form do not contradict the generality together. One can simply label the vertices of $\mathsf{G}$ such that nodes in $\mathsf{G'}$ be labeled with $m-|\mathcal{V}'|+1, m-|\mathcal{V}'|+2, \cdots, m$ and then applies some elementary row operations to find $\mathsf{C}$ in row echelon form. 
The statement of the theorem basically asks us to show that there is no need for nodes in $\mathsf{G''}$ to know (as side information) any of the messages for nodes in $\mathsf{G'}$, i.e. $W_i, i\in \mathsf{G'}$. To show this, we first define a new encoding linear map $f'$ and then prove that it enables nodes in $\mathsf{G''}$ to recover their intended messages without any need to have access to $W_i, i\in \mathsf{G'}$. Nodes in $\mathsf{G'}$ are also shown to be still able to decode their messages with the encoding function $f'$ using their side information (nodes in $\mathsf{G'}$ do not know any of the messages of nodes in $\mathsf{G''}$ since $\mathsf{G'}$ does not have any outgoing edge). Thus, the edges between $\mathsf{G'}$ and $\mathsf{G''}$ can be removed.

\emph{Part 1: definition of a new linear encoding function $f'$:}
Let
\begin{eqnarray}
s = \min \{k\ |\ c_{ijk} = 0, \quad \forall 1\le i \le m-|\mathcal{V}'|, \quad 1 \le j \le l_i \}.
\end{eqnarray}
Fig. \ref{fig:echelon} clarifies the definition of the $s$.
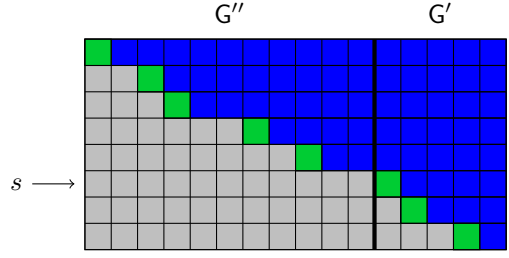
\begin{figure}\centering
\begin{tikzpicture}[scale = .7]
\filldraw[color=black, fill=gray!50!white](0,0) rectangle (8, 4);
\filldraw[color=black, fill=blue](0.0,3.5) rectangle (8, 4.0);
\filldraw[color=black, fill=blue](1.0,3.0) rectangle (8, 3.5);
\filldraw[color=black, fill=blue](1.5,2.5) rectangle (8, 3.0);
\filldraw[color=black, fill=blue](3.0,2.0) rectangle (8, 2.5);
\filldraw[color=black, fill=blue](4.0,1.5) rectangle (8, 2.0);
\filldraw[color=black, fill=blue](5.5,1.0) rectangle (8, 1.5);
\filldraw[color=black, fill=blue](6.0,0.5) rectangle (8, 1.0);
\filldraw[color=black, fill=blue](7.0,0.0) rectangle (8, 0.5);

\filldraw[color=black, fill=green!80!blue](0.0,3.5) rectangle (0.5, 4.0);
\filldraw[color=black, fill=green!80!blue](1.0,3.0) rectangle (1.5, 3.5);
\filldraw[color=black, fill=green!80!blue](1.5,2.5) rectangle (2.0, 3.0);
\filldraw[color=black, fill=green!80!blue](3.0,2.0) rectangle (3.5, 2.5);
\filldraw[color=black, fill=green!80!blue](4.0,1.5) rectangle (4.5, 2.0);
\filldraw[color=black, fill=green!80!blue](5.5,1.0) rectangle (6.0, 1.5);
\filldraw[color=black, fill=green!80!blue](6.0,0.5) rectangle (6.5, 1.0);
\filldraw[color=black, fill=green!80!blue](7.0,0.0) rectangle (7.5, 0.5);
\draw (0,0) rectangle (8, 4);
\draw(0.5, 0) -- (0.5, 4);
\draw(1.0, 0) -- (1.0, 4);
\draw(1.5, 0) -- (1.5, 4);
\draw(2.0, 0) -- (2.0, 4);
\draw(2.5, 0) -- (2.5, 4);
\draw(3.0, 0) -- (3.0, 4);
\draw(3.5, 0) -- (3.5, 4);
\draw(4.0, 0) -- (4.0, 4);
\draw(4.5, 0) -- (4.5, 4);
\draw(5.0, 0) -- (5.0, 4);
\draw(5.5, 0) -- (5.5, 4);
\draw(6.0, 0) -- (6.0, 4);
\draw(6.5, 0) -- (6.5, 4);
\draw(7.0, 0) -- (7.0, 4);
\draw(7.5, 0) -- (7.5, 4);

\draw(0, 0.5) -- (8, 0.5);
\draw(0, 1.0) -- (8, 1.0);
\draw(0, 1.5) -- (8, 1.5);
\draw(0, 2.0) -- (8, 2.0);
\draw(0, 2.5) -- (8, 2.5);
\draw(0, 3.0) -- (8, 3.0);
\draw(0, 3.5) -- (8, 3.5);

\draw [ultra thick] (5.5, 0) -- (5.5, 4);
\node at (6.75, 4.5){$\mathsf{G'}$};
\node at (2.75, 4.5){$\mathsf{G''}$};

\node at (-1.3, 1.25){$s$};
\draw [->] (-1, 1.25) -- (-0.2, 1.25);
\end{tikzpicture}
\caption{\footnotesize A pictorial representation of the row echelon form of $\mathsf{C}$ that clarifies the definition of $s$. Gray elements are zero and green elements are non-zero.\normalsize } \label{fig:echelon}
\end{figure}
Note that $\{k\ |\ c_{ijk}= 0, \quad \forall 1\le i \le m-|\mathcal{V}'|, \quad 1 \le j \le l_i \}$ cannot be empty. We prove this by contradiction. Suppose that the mentioned set is empty, then for each $1 \le k \le n$ there exists a $c_{ijk}\not = 0$ that $i \le m - |\mathcal{V}'|$. As a result, 
\begin{equation}
\forall 1 \le k \le n: \mathtt{i}_k \le m-|\mathcal{V}'|.
\label{eqn:ux}
\end{equation}
Now assume that ${\theta}$ is the smallest number that $\alpha_{m1{\theta}} \not = 0$, then:
\begin{align}
\sum_{k=1}^n \alpha_{m1k} \cdot t_k &= \sum_{k=1}^{{\theta}-1} \alpha_{m1k} \cdot t_k + \sum_{k={\theta}}^n \alpha_{m1k} \cdot t_k\\
&= \sum_{k=1}^{{\theta}-1} 0 \cdot t_k + \sum_{k={\theta}}^n \alpha_{m1k} t_k\\
&= \sum_{k={\theta}}^n \alpha_{m1k} t_k\\
&= \sum_{k={\theta}}^n \alpha_{m1k} \cdot(\sum_{p=1}^{m} \sum_{q=0}^{l_p} c_{pqk} \cdot w_{pq})
\end{align}
Note that the coefficient of $w_{\mathtt{i}_{\theta}\mathtt{j}_{\theta}}$ in the above statement is:
\begin{eqnarray}
\sum_{k={\theta}}^n \alpha_{m1k} \cdot c_{\mathtt{i}_{\theta}\mathtt{j}_{\theta}k}
\end{eqnarray}
Because $(\mathtt{i}_{\theta}, \mathtt{j}_{\theta})$ is lexicographically smaller than $(\mathtt{i}_{k}, \mathtt{j}_{k})$ for any $k > {\theta}$:
\begin{eqnarray}
\forall k > {\theta}: c_{\mathtt{i}_{{\theta}}\mathtt{j}_{{\theta}}k} = 0.
\end{eqnarray}
Then
\begin{eqnarray}
\sum_{k={\theta}}^n \alpha_{m1k} \cdot c_{\mathtt{i}_{\theta}\mathtt{j}_{\theta}k} = \alpha_{m1{\theta}} \cdot c_{\mathtt{i}_{\theta}\mathtt{j}_{\theta}{\theta}} \not = 0
\end{eqnarray}

Note that $\mathtt{i}_{\theta} \le m-|\mathcal{V}'|$ by eq. \eqref{eqn:ux}, so $\mathtt{i}_{\theta} \in \mathsf{G''}$. Hence, $w_{\mathtt{i}_{\theta}\mathtt{j}_{\theta}}$ is not provided as side information to node $m$, which is in $\mathsf{G'}$. As a result, the non-zero coefficient of $w_{\mathtt{i}_{\theta}\mathtt{j}_{\theta}}$ in $\sum_{k=1}^n \alpha_{m1k} \cdot t_k$ is in contradiction to eq. \eqref{eqn:parta2}. So we have proved that $\{k\ |\ c_{ijk}= 0, \quad \forall 1\le i \le m-|\mathcal{V}'|,\quad 1 \le j \le l_i \}$ is a non-empty set and thus $s$ is well-defined. 

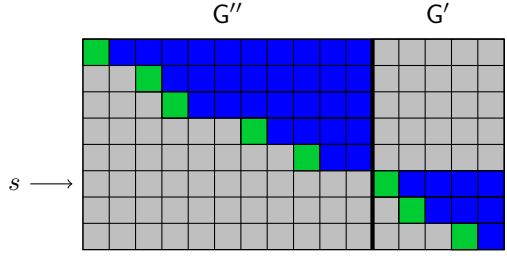
\begin{figure}\centering
\begin{tikzpicture}[scale=.7]
\filldraw[color=black, fill=gray!50!white](0,0) rectangle (8, 4);
\filldraw[color=black, fill=blue](0.0,3.5) rectangle (8, 4.0);
\filldraw[color=black, fill=blue](1.0,3.0) rectangle (8, 3.5);
\filldraw[color=black, fill=blue](1.5,2.5) rectangle (8, 3.0);
\filldraw[color=black, fill=blue](3.0,2.0) rectangle (8, 2.5);
\filldraw[color=black, fill=blue](4.0,1.5) rectangle (8, 2.0);
\filldraw[color=black, fill=blue](5.5,1.0) rectangle (8, 1.5);
\filldraw[color=black, fill=blue](6.0,0.5) rectangle (8, 1.0);
\filldraw[color=black, fill=blue](7.0,0.0) rectangle (8, 0.5);

\filldraw[color=black, fill=gray!50!white](5.5,1.5) rectangle (8, 4);a

\filldraw[color=black, fill=green!80!blue](0.0,3.5) rectangle (0.5, 4.0);
\filldraw[color=black, fill=green!80!blue](1.0,3.0) rectangle (1.5, 3.5);
\filldraw[color=black, fill=green!80!blue](1.5,2.5) rectangle (2.0, 3.0);
\filldraw[color=black, fill=green!80!blue](3.0,2.0) rectangle (3.5, 2.5);
\filldraw[color=black, fill=green!80!blue](4.0,1.5) rectangle (4.5, 2.0);
\filldraw[color=black, fill=green!80!blue](5.5,1.0) rectangle (6.0, 1.5);
\filldraw[color=black, fill=green!80!blue](6.0,0.5) rectangle (6.5, 1.0);
\filldraw[color=black, fill=green!80!blue](7.0,0.0) rectangle (7.5, 0.5);
\draw (0,0) rectangle (8, 4);
\draw(0.5, 0) -- (0.5, 4);
\draw(1.0, 0) -- (1.0, 4);
\draw(1.5, 0) -- (1.5, 4);
\draw(2.0, 0) -- (2.0, 4);
\draw(2.5, 0) -- (2.5, 4);
\draw(3.0, 0) -- (3.0, 4);
\draw(3.5, 0) -- (3.5, 4);
\draw(4.0, 0) -- (4.0, 4);
\draw(4.5, 0) -- (4.5, 4);
\draw(5.0, 0) -- (5.0, 4);
\draw(5.5, 0) -- (5.5, 4);
\draw(6.0, 0) -- (6.0, 4);
\draw(6.5, 0) -- (6.5, 4);
\draw(7.0, 0) -- (7.0, 4);
\draw(7.5, 0) -- (7.5, 4);

\draw(0, 0.5) -- (8, 0.5);
\draw(0, 1.0) -- (8, 1.0);
\draw(0, 1.5) -- (8, 1.5);
\draw(0, 2.0) -- (8, 2.0);
\draw(0, 2.5) -- (8, 2.5);
\draw(0, 3.0) -- (8, 3.0);
\draw(0, 3.5) -- (8, 3.5);

\draw [ultra thick] (5.5, 0) -- (5.5, 4);
\node at (6.75, 4.5){$\mathsf{G'}$};
\node at (2.75, 4.5){$\mathsf{G''}$};

\node at (-1.3, 1.25){$s$};
\draw [->] (-1, 1.25) -- (-0.2, 1.25);
\end{tikzpicture}
\caption{\footnotesize A schematic representation of $c'_{ijk}$. \normalsize } \label{fig:definition-c-prime}
\end{figure}

Further let
\begin{eqnarray}
c'_{ijk} = \left\{ \begin{matrix} 0 & {k < s \ \ \ \mbox{and} \ \ \ i > m - |\mathcal{V}'|} \\ c_{ijk} &\mbox{otherwise}\end{matrix}\right. \label{eqn:cprime-def}\label{eqn:eqaaaaaa}
\end{eqnarray}
and
\begin{eqnarray}
t'_k = \sum_{i=1}^{m} \sum_{j=1}^{l_i} c'_{ijk} \cdot w_{ij}.
\end{eqnarray}
for all $1\leq k \leq n$. Set
\begin{eqnarray}
f'(W_1, W_2, \cdots , W_m) = (t'_1, t'_2, \cdots , t'_n)
\end{eqnarray}
A schematic representation of $c'_{ijk}$ is given in Fig. \ref{fig:definition-c-prime}.
Observe that eq. \eqref{eqn:eqaaaaaa} implies that $t'_k=t_k$ for all $k\geq s$.

\emph{Part 2: showing that nodes in $\mathsf{G'}$ are able to decode their message by using $(t'_1, t'_2, \cdots , t'_n)$ and their side information:}
Consider the coefficients $\alpha_{ijk}$ for decoding of the original linear mapping given in eq. \eqref{eqn:parta1}. We claim for any $r \in \mathsf{G'}$ (i.e. $r > m - |\mathcal{V}'|$) that $\alpha_{rj1} = \cdots= \alpha_{rj(s-1)} = 0$ for every $1 \le j \le l_r$. This completes the proof since for every $r \in \mathsf{G'}$:
\begin{eqnarray}
\sum_{k=1}^n \alpha_{rjk} t_k = \sum_{k=s}^n \alpha_{rjk} t_k = \sum_{k=s}^n \alpha_{rjk} t'_k. \label{eqn:parta3}
\end{eqnarray}
Equations \eqref{eqn:parta2} and \eqref{eqn:parta3} illustrate that every node $r \in \mathsf{G'}$ is able to obtain $w_{rj}$ in the new coding scheme by calculating $\sum_{k=s}^n \alpha_{rjk} t'_k$.

We prove $\alpha_{rj1} = \cdots= \alpha_{rj(s-1)} = 0$ for every $1 \le j \le l_r$ by contradiction. Suppose that $x$ is the smallest index that $\alpha_{rjx} \not = 0$ and $x < s$. By the definition of $s$, $\mathtt{i}_x \le m-|\mathcal{V}'|$ ($\mathtt{i}_x \in \mathsf{G''}$). It is also clear that the definition of $x$ results in:
\begin{eqnarray}
\alpha_{rjy} = 0,
\end{eqnarray}
for any $y < x$. Because $(\mathtt{i}_k, \mathtt{j}_k), k=1,2,\cdots,n$ is strictly increasing:
\begin{eqnarray}
c_{\mathtt{i}_x\mathtt{j}_xy} = 0,
\end{eqnarray}
for all $y > x$. As
\begin{eqnarray}
\sum_{k=1}^n \alpha_{rjk} \cdot t_k =\sum_{k=1}^n \alpha_{rjk} \cdot (\sum_{p=1}^{m} \sum_{q=1}^{l_p} c_{pqk} \cdot w_{pq}),
\end{eqnarray}
The coefficient of $w_{\mathtt{i}_x,\mathtt{j}_x}$ in $\sum_{k=1}^n \alpha_{rjk} \cdot t_k$ is:
\begin{align}
\sum_{k=1}^n \alpha_{rjk} \cdot c_{\mathtt{i}_x\mathtt{j}_xk}&= \sum_{k=1}^{x-1} \alpha_{rjk} \cdot c_{\mathtt{i}_x\mathtt{j}_xk} + \alpha_{rjx} \cdot c_{\mathtt{i}_x\mathtt{j}_xx}
+ \sum_{k=x+1}^n \alpha_{rjk} \cdot c_{\mathtt{i}_x\mathtt{j}_xk}\\
&= \sum_{k=1}^{x-1} 0 \cdot c_{\mathtt{i}_x\mathtt{j}_xk} + \alpha_{rjx} \cdot c_{\mathtt{i}_x\mathtt{j}_xx} 
+ \sum_{k=x+1}^n \alpha_{rjk} \cdot 0\\
&= \alpha_{rjx} \cdot c_{\mathtt{i}_x\mathtt{j}_xx} \\&\not = 0,
\end{align}
which is in contradiction to the independency of $\sum_{k=1}^n \alpha_{rjk} t_k$ from $w_{\mathtt{i}_x\mathtt{j}_x}$, that was guaranteed by eq. \eqref{eqn:parta2}. (Note that $r \in \mathsf{G'}$ and $\mathtt{i}_x \in \mathsf{G''}$, so $w_{\mathtt{i}_x\mathtt{j}_x}$ is not provided as side information to $r$)\\

\emph{Part 3: showing that under $f'$ decoding is possible without the need for nodes in $\mathsf{G''}$ to know messages for nodes in $\mathsf{G'}$:}

For every $i \in \mathsf{G''}$, let
\begin{eqnarray}
\beta_{ijk} = \left\{ \begin{matrix} 0 & \qquad{k \ge s}; \\ \alpha_{ijk} &\qquad {k < s}. \end{matrix}\right.
\end{eqnarray}
We claim that for every $i \in \mathsf{G''}$:
\begin{eqnarray}
\sum_{k=1}^n \beta_{ijk} t'_k=w_{ij}+\sum_{i',j': (i, {i'})\in\mathcal{E} \footnotesize{\mbox{ and }} i' \not \in \mathsf{G'}}w_{i'j'}\cdot \gamma_{i'j'},
\label{eqn:eqaabbcc}
\end{eqnarray}
where $\gamma_{i'j'}$ is given in eq. \eqref{eqn:parta2}. 
This shows that nodes at $\mathsf{G''}$ are able to decode their messages using $(t'_1, t'_2, \cdots, t'_n)$ and their side information in $\mathsf{G''}$ (excluding side information from nodes at $\mathsf{G'}$). We have:
\begin{align}
\sum_{k=1}^n \beta_{ijk} t'_k &= \sum_{k=1}^{s-1} \beta_{ijk} t'_k + \sum_{k=s}^n \beta_{ijk} t'_k\\
&= \sum_{k=1}^{s-1} \alpha_{ijk} t'_k + \sum_{k=s}^n 0 \cdot t'_k\\
&= \sum_{k=1}^{s-1} \alpha_{ijk} \cdot (\sum_{p=1}^{m} \sum_{q=0}^{l_p} c'_{pqk} \cdot w_{pq})\\
&= \sum_{k=1}^{s-1} \alpha_{ijk} \cdot(\sum_{p=1}^{m-|\mathcal{V}'|} \sum_{q=0}^{l_p} c'_{pqk} \cdot w_{pq})\label{eqn:part3-1} \\ 
&= \sum_{k=1}^{s-1} \alpha_{ijk} \cdot(\sum_{p=1}^{m-|\mathcal{V}'|} \sum_{q=0}^{l_p} c_{pqk} \cdot w_{pq})\label{eqn:part3-2},
\end{align}
where eqs. \eqref{eqn:part3-1} and \eqref{eqn:part3-2} follow from the definition of $c'_{ijk}$ in eq. \eqref{eqn:cprime-def}. Note that the expression of eq. \eqref{eqn:part3-2} does not include any of $w_{ij}$ for $i > m - |\mathcal{V}'|$. Moreover, the coefficient of $w_{ij}$ for $i\leq m-|\mathcal{V}'|$ are the same as those in $\sum_{k=1}^n \alpha_{ijk} t_k$. This establishes eq. \eqref{eqn:eqaabbcc}.
\end{proof}

\subsubsection{Proof of part (b)}\-

The proof has two parts: first we show that 
$\bigcup_{\alpha\in[0,1]} \alpha \mathcal{C}'\oplus (1-\alpha)\mathcal{C}'' \subseteq \mathcal{C}$
and then we will finish the proof by showing that 
$\mathcal{C}\subseteq\bigcup_{\alpha\in[0,1]} \alpha \mathcal{C}'\oplus (1-\alpha)\mathcal{C}''$.

Before starting the proof, let us label the vertices of $\mathsf{G}$ so that the vertices of $\mathsf{G}'$ come first.

\emph{Proving $\bigcup_{\alpha\in[0,1]} \alpha \mathcal{C}'\oplus (1-\alpha)\mathcal{C}'' \subseteq \mathcal{C}$}:
Take an arbitrary vector $\overline{\bf r}'$ in $\mathcal{C}'$. Then we can allocate all of our resources for $\mathsf{G}'$ and do not send anything for $\mathsf{G}''$. This shows that $(\overline{\bf r}', 0)$ is in $\mathcal{C}$. Similarly for any $\overline{\bf r}''$ in $\mathcal{C}''$, we have that $(0, \overline{\bf r}'')$ is in $\mathcal{C}$. Using the standard time-sharing techniques, one can show that the capacity region of the index coding problem is a convex set. Therefore for any $\alpha\in[0,1]$ the rate $(\alpha \overline{\bf r}', (1-\alpha)\overline{\bf r}'')\in \mathcal{C}$. This completes the proof. 

\emph{Proving $\mathcal{C} \subseteq \bigcup_{\alpha\in[0,1]} \alpha \mathcal{C}'\oplus (1-\alpha)\mathcal{C}''$}:
For any rate vector $\overline{\bf r}\in\mathcal{C}$, there exist a sequence of codes like $C_1, C_2, \cdots$ whose rates converge to $\overline{\bf r}$. Take some $\epsilon>0$ and a code described by encoding function $f$ whose rate $\overline{\bf r}_{\epsilon}$ is within $\epsilon$ distance of $\overline{\bf r}$.
\begin{itemize}
\item \textbf{Linear Case}: Suppose that 
$f:\mathcal{W}_{1}\times \mathcal{W}_{2}\times \cdots \times\mathcal{W}_{|\mathcal{V}'|+|\mathcal{V}''|} \rightarrow \mathbb{F}^{n}$.
In the proof of the part (a), we showed that there exist encoding functions 
$f':\mathcal{W}_{1}\times \mathcal{W}_{2}\times \cdots \times\mathcal{W}_{|\mathcal{V}'|} \rightarrow \mathbb{F}^{n'}$
 and $f'':\mathcal{W}_{|\mathcal{V}'|+1}\times \mathcal{W}_{|\mathcal{V}'|+2}\times \cdots \times\mathcal{W}_{|\mathcal{V}'| + |\mathcal{V}''|} \rightarrow \mathbb{F}^{n''}$ which are respectively valid for $\mathsf{G}'$ and $\mathsf{G}''$. Additionally, the size of the range of the concatenation of $f'$ and $f''$ equals the size of the range of $f$, i.e. $n = n' + n''$. Hence, if we call the rates of $f'$ and $f''$, $\overline{\bf r}'$ and $\overline{\bf r}''$, we will have:
\begin{eqnarray}
\overline{\bf r}_{\epsilon} &=& \left(\frac{\log_\mathbb{F}|\mathcal{W}_{1}|}{n'+n''}, \frac{\log_\mathbb{F}|\mathcal{W}_{2}|}{n'+n''}, \cdots, \frac{\log_\mathbb{F}|\mathcal{W}_{|\mathcal{V}'|+|\mathcal{V}''|}|}{n'+n''}\right)	\\
&=& \biggr(\frac{n'}{n'+n''}\left(\frac{\log_\mathbb{F}|\mathcal{W}_{1}|}{n'}, \cdots, \frac{\log_\mathbb{F}|\mathcal{W}_{|\mathcal{V}'|}|}{n'}\right), \nonumber\\
&&\ \ \frac{n''}{n'+n''}\left(\frac{\log_\mathbb{F}|\mathcal{W}_{|\mathcal{V}'|+1}|}{n''}, \cdots, \frac{\log_\mathbb{F}|\mathcal{W}_{|\mathcal{V}'|+|\mathcal{V}''|}|}{n''}\right)\biggr)	\\
&=& \biggr(\frac{n'}{n'+n''}\overline{\bf r}', (1-\frac{n'}{n'+n''})\overline{\bf r}''\biggr)
\end{eqnarray}

Since $\overline{\bf r}'\in \mathcal{C}'$ and $\overline{\bf r}''\in \mathcal{C}''$, above statement results in the fact that $\overline{\bf r}_{\epsilon}$ lies in $\bigcup_{\alpha\in[0,1]} \alpha \mathcal{C}'\oplus (1-\alpha)\mathcal{C}''$. By the definition of the asymptotic capacity region, $\mathcal{C}'$ and $\mathcal{C}''$ are closed sets. We are done with the proof by noting that $\overline{\bf r}_{\epsilon}$  can be made arbitrarily close to $\overline{\bf r}$.

\item \textbf{Non-linear Case}: Suppose that 
$f:\mathcal{W}_{1}\times \mathcal{W}_{2}\times \cdots \times\mathcal{W}_{|\mathcal{V}'|+|\mathcal{V}''|} \rightarrow \{1, 2, \cdots, N\}$.
In the proof of the part (a), we showed that we can find encoding functions 
$f':\mathcal{W}_{1}^n\times \mathcal{W}_{2}^n\times \cdots \times\mathcal{W}_{|\mathcal{V}'|}^n \rightarrow \{1, 2, \cdots, 2^{n(K'+\delta)}\}$
 and $f'':\mathcal{W}_{|\mathcal{V}'|+1}^n\times \mathcal{W}_{|\mathcal{V}'|+2}^n\times \cdots \times\mathcal{W}_{|\mathcal{V}'| + |\mathcal{V}''|}^n \rightarrow \{1, 2, \cdots, 2^{n(K''+\delta)}\}$ in which $K'$ and $K''$ satisfy $N = 2^{K'+K''}$, $\delta$ is an arbitrary positive real number, and an appropriate $n$ can be found for any fixed $\delta$ so that such functions exist. Moreover, $f'$ and $f''$ are respectively valid for $\mathsf{G}'$ and $\mathsf{G}''$ over the alphabet sets of 
$\mathcal{W}_{1}^n, \mathcal{W}_{2}^n, \cdots, \mathcal{W}_{|\mathcal{V}'|}^n$ and 
$\mathcal{W}_{|\mathcal{V}'|+1}^n, \mathcal{W}_{|\mathcal{V}'|+2}^n, \cdots, \mathcal{W}_{|\mathcal{V}'| + |\mathcal{V}''|}^n$. Hence, if we call the rates of $f'$ and $f''$, $\overline{\bf r}'$ and $\overline{\bf r}''$, we will have:
\begin{eqnarray}
\overline{\bf r}_{\epsilon}&=& \left(\frac{\log|\mathcal{W}_{1}|}{\log N}, \frac{\log|\mathcal{W}_{2}|}{\log N}, \cdots, \frac{\log|\mathcal{W}_{|\mathcal{V}'|+|\mathcal{V}''|}|}{\log N}\right)	\\
&=& \left(\frac{\log|\mathcal{W}_{1}^n|}{n\log N}, \frac{\log|\mathcal{W}_{2}^n|}{n\log N}, \cdots, \frac{\log|\mathcal{W}_{|\mathcal{V}'|+|\mathcal{V}''|}^n|}{n\log N}\right)	\\
&=& \left(\frac{\log|\mathcal{W}_{1}^n|}{n(K'+K'')}, \frac{\log|\mathcal{W}_{2}^n|}{n(K'+K'')}, \cdots, \frac{\log|\mathcal{W}_{|\mathcal{V}'|+|\mathcal{V}''|}^n|}{n(K'+K'')}\right)	\\
&=& \Biggr(\frac{K'+\delta}{K'+K''}\left(\frac{\log|\mathcal{W}_{1}^n|}{n(K'+\delta)}, \cdots, \frac{\log|\mathcal{W}_{|\mathcal{V}'|}^n|}{n(K'+\delta)}\right), \nonumber\\
&&\ \ \frac{K''+\delta}{K'+K''}\left(\frac{\log|\mathcal{W}_{|\mathcal{V}'|+1}^n|}{n(K''+\delta)}, \cdots, \frac{\log|\mathcal{W}_{|\mathcal{V}'|+|\mathcal{V}''|}^n|}{n(K''+\delta)}\right)\Biggr)\\
&=&\left(\frac{K'+\delta}{K'+K''} \overline{\bf r}', \frac{K''+\delta}{K'+K''}\overline{\bf r}''\right)\\
&=&\frac{K'+K''+2\delta}{K'+K''}\left(\frac{K'+\delta}{K'+K''+2\delta} \overline{\bf r}', \frac{K''+\delta}{K'+K''+2\delta}\overline{\bf r}''\right)\\
&=&\frac{K'+K''+2\delta}{K'+K''}\left(\frac{K'+\delta}{K'+K''+2\delta} \overline{\bf r}', (1-\frac{K'+\delta}{K'+K''+2\delta})\overline{\bf r}''\right)
\end{eqnarray}
Thus,
\begin{equation}
\frac{K'+K''}{K'+K''+2\delta} \overline{\bf r}_{\epsilon} = \left(\frac{K'+\delta}{K'+K''+2\delta} \overline{\bf r}', (1-\frac{K'+\delta}{K'+K''+2\delta})\overline{\bf r}''\right)
\end{equation}
As $\overline{\bf r}' \in \mathcal{C}'$ and $\overline{\bf r}'' \in \mathcal{C}''$, $\frac{K'+K''}{K'+K''+2\delta}\overline{\bf r}_{\epsilon}$ lies in $\bigcup_{\alpha\in[0,1]} \alpha \mathcal{C}'\oplus (1-\alpha)\mathcal{C}''$ for any $\delta > 0$. Since $\bigcup_{\alpha\in[0,1]} \alpha \mathcal{C}'\oplus (1-\alpha)\mathcal{C}''$ is a closed set and we can make $\delta$ and $\epsilon$ as close to zero as we want, we will be done.
\end{itemize} \mbox{}\hspace*{\fill}\nolinebreak\mbox{$\rule{0.6em}{0.6em}$}
%===================================================

\subsection{Proof of Theorem \ref{thm2}}
We prove two parts of Theorem \ref{thm2} in the following two subsections.
\subsubsection{Proof of part (a)} \-

{\sl Linear one-shot case:} If $\mathsf{G}$ is USCS, then proof is finished. Otherwise, $\mathsf{G}$ contains an edge like $e = (u, v)$ which is not located in any cycles. Let $\mathcal{V}_1$ be the set of vertices that can be reached from $v$. Morever, let $\mathcal{V}_2$ be the set of vertices who cannot be reached from $v$. It is easy to verify that there will be no edge that starts from $\mathcal{V}_1$ and finishes in $\mathcal{V}_2$. Using part (a) of Theorem \ref{thm3new}, we can remove all edges between $\mathcal{V}_1$ and $\mathcal{V}_2$ including $e$, so that the rate region does not shrink. As the number of the edges of $\mathsf{G}$ is finite, by repeating this process we can find a USCS subgraph of $\mathsf{G}$ like $\mathsf{G}'$ whose rate region equals the rate region of $\mathsf{G}$. Hence, if $\mathsf{G}$ is a critical graph, it should be equal to $\mathsf{G}'$ which is USCS. In other words, any critical graph for one-shot linear index coding is USCS.

The proof for {\sl Non-linear asymptotic index coding} using part (a) of Theorem \ref{thm3new} is similar.

{\sl Linear asymptotic case:} This follows from the one-shot case. For any code $C$ of length $n$, there is another code $C'$ with the same rate vector on a subgraph $\mathsf{G}'$ of $\mathsf{G}$ that is USCS. Now, given any arbitrary sequence of codes $C_1, C_2, \cdots$ whose rate vector converges to a given rate vector $\overline{\bf r} = (r_1, r_2, \cdots, r_m)$, we can find a sequence of codes $C'_1, C'_2, \cdots$ on subgraphs $\mathsf{G}'_1, \mathsf{G}'_2, \cdots$ whose rate vector converges to the same rate vector $\overline{\bf r} = (r_1, r_2, \cdots, r_m)$. Since any graph $\mathsf{G}$ has only a finite number of subgraphs, we can find indices $\mathtt{i}_1<\mathtt{i}_2<\cdots$ such that $\mathsf{G}'_{\mathtt{i}_k}=\tilde{\mathsf{G}}$ are identical. The subsequence of the codes $C'_{\mathtt{i}_k}$ is defined on the USCS graph $\tilde{\mathsf{G}}$ and has a rate vector that converges to $\overline{\bf r} = (r_1, r_2, \cdots, r_m)$. Since $\mathsf{G}$ and is critical and $\tilde{\mathsf{G}}$ is a subgraph of $\mathsf{G}$, we conclude that $\mathsf{G}=\tilde{\mathsf{G}}$ implying that $\mathsf{G}$ is USCS.\\
 \mbox{}\hspace*{\fill}\nolinebreak\mbox{$\rule{0.6em}{0.6em}$}

\subsubsection{Proof of part (b)}
To prove this part we need to show that a critical graph exists for one-shot non-linear case  that is not USCS. 

Consider the graph given in Fig. \ref{Fig5}. We call this graph $\mathsf{G}=(\mathcal{V}, \mathcal{E})$.
 Assume that
\begin{align*}
\mathcal{W}_i&= \{0, 1\},\qquad 1\leq i \leq 5,\\
\mathcal{W}_6&= \{0, 1, 2, 3, 4\}.
\end{align*}
We have the following claim:

\begin{claim}Sending a symbol from $\{1,2,\cdots, 32\}$ as the public message suffices for every node to decode its message.
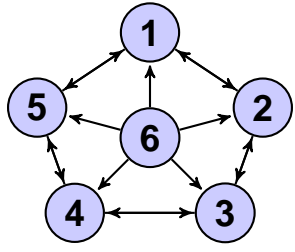
\begin{figure}\centering
\begin{tikzpicture}
[->, >=stealth', shorten >=1pt,auto,node distance=1.7cm, thick, main node/.style={circle,fill=blue!20,draw,font=\sffamily\Large\bfseries}]
\node[main node] (6)at(0,0) {6};
\node[main node] (1)at(0, 1.4)  {1};
\node[main node] (2) at(1.5, 0.4) {2};
\node[main node] (5) at(-1.5, 0.4){5};
\node[main node] (4) at(-1, -1) {4};
\node[main node] (3) at(1, -1) {3};
\path[every node/.style={font=\sffamily\small}]
(1) edge node {} (2)
edge node {} (5)
(2) edge node {} (1)
edge node {} (3)
(3) edge node {} (2)
edge node {} (4)
(4) edge node {} (3)
edge node {} (5)
(5) edge node {} (1)
edge node {} (4)
(6) edge node {} (1)
edge node {} (2)
edge node {} (3)
edge node {} (4)
edge node {} (5);
\end{tikzpicture}
\caption{\footnotesize In the index coding problem associated with this graph, removing the edges which belong to no cycle implies a larger public message rate.\normalsize \label{Fig5}}
\end{figure}
However, if we remove the edges connected to node $6$, which do not belong to any cycle, we need at least 35 symbol to have a successful transmission of the messages. 
\end{claim}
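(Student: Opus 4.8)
The claim has two halves and I would prove them separately: first, achievability for $\mathsf{G}$ (a public symbol drawn from a $32$-element set suffices); second, a converse for the graph $\mathsf{G}_0$ obtained from $\mathsf{G}$ by deleting the five out-edges of node $6$ (which is exactly deleting all edges at node $6$ that lie on no cycle), showing that at least $35$ symbols are then needed.

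For the achievability half I would exhibit one explicit one-shot code $f(W_1,\dots,W_6)$ and verify decodability receiver by receiver. The governing observation is that node $6$ knows $W_1,\dots,W_5$ in full and can therefore locally recompute any function of them; in particular it already ``knows'' the public message that a stand-alone code for the pentagon on $\{1,\dots,5\}$ would broadcast. Hence the $32$ symbols need not be split as ``pentagon part $\times$ $W_6$ part''; instead I would (i) pick a small-alphabet index code $Q=Q(W_1,\dots,W_5)$ for the bidirectional $5$-cycle on nodes $1,\dots,5$ (e.g. a low-rank linear code, or suitable parities of consecutive bits), (ii) fold $W_6\in\{0,1,2,3,4\}$ into $Q$ to form a single symbol $f$ taking at most $32$ values, designed so that for every fixed value of $Q$ the assignment $W_6\mapsto f$ is injective, and (iii) check that each of nodes $1,\dots,5$ still recovers its bit from $f$ together with its two cyclic neighbours, and that node $6$ recovers $W_6$ from $f$ together with $(W_1,\dots,W_5)$. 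Bounding the number of distinct values of $f$ then gives $32$.

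For the converse half, note that in $\mathsf{G}_0$ node $6$ has no side information whatsoever, so any zero-error public message $f$ must pin down $W_6$ on its own: whenever $W_6\ne W_6'$ we need $f\ne f'$ irrespective of $(W_1,\dots,W_5)$ and $(W_1',\dots,W_5')$. Thus the range of $f$ splits into $|\mathcal{W}_6|=5$ pairwise disjoint blocks, one per value of $W_6$. Since the side information of nodes $1,\dots,5$ is identical in $\mathsf{G}_0$ and in $\mathsf{G}$, the restriction of $f$ to each block must itself be a valid one-shot index code for the bidirectional $5$-cycle with one bit per node; equivalently it must properly colour the confusion graph $H$ of that problem. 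Writing $N_5=\chi(H)$ for the minimum public-alphabet size of the pentagon problem, we obtain $|\mathrm{range}(f)|\ge 5N_5$, and it remains only to certify $N_5\ge 7$. I would get this from the confusion-graph lower bound $\chi(H)\ge |V(H)|/\alpha(H)=32/\alpha(H)$ after computing $\alpha(H)$ (the largest family of binary $5$-tuples whose pairwise differences, viewed as subsets of $C_5$, have no isolated vertex), or from a direct check that no public alphabet of size $6$ can serve all five receivers of the cycle. Either way, $|\mathrm{range}(f)|\ge 5N_5\ge 35>32$.

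The main obstacle is the achievability construction: one must pack the pentagon sub-code and the five-valued message $W_6$ into only $32$ symbols, which forces them to genuinely share the public-message alphabet while still keeping all six decoders unambiguous, and pinning down the right combining map $f$ (rather than ending up with a larger alphabet) is where the real effort goes. By contrast, once the ``$W_6$ must be decoded blind'' observation is made, the converse is essentially bookkeeping, its only non-routine ingredient being the pentagon lower bound $N_5\ge 7$.
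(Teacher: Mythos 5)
Your converse half is essentially the paper's argument, and if anything is stated a bit more crisply: the paper uses a pigeonhole formulation (``if at most $34$ symbols, then some value of $W_6$ is served by at most $6$ public symbols''), which tacitly relies on exactly the disjointness you make explicit, namely that since node~$6$ has no side information in $\mathsf{G}_0$ it must decode $W_6$ from $f$ alone, so the five level sets $\{f(\cdot,w_6)\}$ are pairwise disjoint and each must be a valid one-shot code for the binary bidirectional $5$-cycle. Both you and the paper then defer the pentagon bound $N_5\ge 7$ to a finite check; the paper records this as Lemma~\ref{lemma:cycle-five}, proved by the exhaustive C++ search in Appendix~\ref{appendixB}, which amounts to verifying that the confusion graph on $\{0,1\}^5$ has independence number at most~$5$ — precisely the $\chi(H)\ge |V(H)|/\alpha(H)$ route you mention. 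So the converse is fine modulo actually carrying out that computation.

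The gap is in the achievability half, and your own last paragraph correctly identifies where. Your sketched plan — take a minimum-alphabet code $Q$ for the pentagon and then ``fold'' $W_6$ into it so that $W_6\mapsto f$ is injective for each fixed $Q$ — cannot succeed as stated: injectivity per $Q$-value forces at least $|\mathcal{W}_6|\cdot N_5 = 5\cdot 7 = 35$ distinct symbols, which already exceeds $32$. To get below that, the pentagon decoding and the $W_6$ decoding must interfere constructively rather than factor, and you have not shown how. The paper does something structurally different: it does not start from a small pentagon code at all. It broadcasts the full $5$-bit string $(W_1,\dots,W_5)$ XORed with one of five fixed masks $\{00000,\,10001,\,01111,\,01100,\,10111\}$ indexed by $W_6$, so $f$ takes exactly $32$ values by construction. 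Node~$6$ strips off $(W_1,\dots,W_5)$ to read the mask and hence $W_6$. Each node $i\in\{1,\dots,5\}$ compares the two masked side-information bits against its known neighbours' bits; the masks are chosen so that this projection either pins down $W_6$ exactly or, when it is ambiguous (e.g.\ $W_6\in\{1,4\}$ for node~$1$), the ambiguous masks agree on the bit at position $i$, so $W_i$ is recovered regardless. That per-coordinate consistency property of the five masks is the actual content of the achievability claim, and it is the piece your proposal leaves open.
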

This claim establishes the desired result, since if $\mathsf{G}$ is critical it would be an instance of a non-USCS graph that is critical. If $\mathsf{G}$ is not critical, there is a subgraph $\mathsf{G}'$ of it (obtained by removing edges from $\mathsf{G}$) that is critical; that is the graph $\mathsf{G}'$ is such that sending a symbol from $\{1,2,\cdots, 32\}$ as the public message suffices for every node to decode its message. However any further removal of edges from $\mathsf{G}'$ results in a graph that does not have this property. By the above claim, the minimal graph $\mathsf{G}'$ should contain at least one of the edges connected to the node $6$; since if not, $\mathsf{G}'$ would be a subgraph of the graph shown in the claim to need at least 35 symbols. Therefore, $\mathsf{G}'$ contains an edge that is not on any cycle. Hence it is a non-USCS and critical graph.

We now turn to the proof of the claim. In order to construct the coding scheme using 32 symbols for $\mathsf{G}$, first note that $W_1W_2W_3W_4W_5$ forms a binary sequence of the length 5. Based on the value of $W_6$, we XOR this sequence with one the following sequence: $00000$, $10001$, $01111$, $01100$, $10111$, that is, if $W_6$ is 0 we XOR the sequence with $00000$, if it is 1 we XOR it with $10001$, and so on. Then, we transmit the result as the public message (the public message has 32 different possibilities and can be transmitted). Let us denote the 5-bit public message by $\tilde{W}_1\tilde{W}_2\tilde{W}_3\tilde{W_4}\tilde{W}_5$. It is sufficient to show that every node can decode its message with the help of the public message and its side information. First of all, because the node $6$ knows the message of 1 to 5, it can XOR their message by the public message and from the XOR decode its message. For the other nodes, note that $W_i\oplus \tilde{W}_i$ for $1\leq i \leq 5$ is a function of the side information of node $i$, and therefore, node $i$ can decode its message. We explain the decoding process for node $1$; the decoding process for other nodes is similar. Node $1$ knows $W_2$ and $W_5$ . By comparing these two bits with $\tilde{W}_2$ and $\tilde{W}_5$, node $1$ can exactly recover $W_6$ if it is equal to $0$, $2$ or $3$. If $W_6$ is equal to $1$ or $4$, node $1$ cannot find the exact value of $W_6$. However in both cases of $W_6=1,4$ we have $\tilde{W}_1=\neg W_1$, and by flipping $\tilde{W}_1$ the first node can recover its intended bit.
\iffalse
\begin{center}
\begin{tabular}{|c | c| c|}
\hline
\textbf{$i$} & \textbf{When $W_i$ does not change} & \textbf{When $W_i$ changes}
\\ \hline
2 & $\{W_1\}$ and $\{W_1, W_3\}$ & $\emptyset$, $\{W_1\}$, and $\{W_3\}$
\\ \hline
3 & $\{W_2, W_4\}$, $\{W_2\}$, and $\{W_4\}$ & $\emptyset$
\\ \hline
4 & $\{W_3\}$ & $\{W_3, W_5\}$ and $\emptyset$
\\ \hline
5 & $\{W_1\}$ & $\emptyset$, $\{W_1, W_4\}$, and $\{W_4\}$
\\ \hline
\end{tabular}
\end{center}
We can check that in every row, two columns do not have common subset. Thus, according to this table and the aforementioned argument, it is clear that every node can decode its message.\\
\fi

In order to prove that if we remove the edges connected to node $6$, at least 35 symbols are needed, suppose that there exists a coding scheme which requires at most 34 symbols. According to Pigeonhole Principle, we can conclude that there exists $w_6\in\{0,1,2,3,4\}$ so that if $W_6=w_6$, public message gets at most 6 distinct different values when we vary the $w_1, w_2, \cdots, w_5$, i.e. the cardinality of the set
$$\big\{f(w_1, w_2, \cdots, w_6): w_1,w_2, \cdots, w_5\in \{0,1\}\big\}$$
is at most 6. For this value of $w_6$, consider the following function over five variables $w_1, w_2, \cdots, w_5$:
$$\tilde{f}(w_1, w_2, \cdots, w_5)=f(w_1, w_2, \cdots, w_5, w_6).$$
Since $W_6$ was independent of $(W_1, \cdots, W_5)$ and we have zero probability of error, the function $\tilde{f}$ is a valid encoding function for a cycle of length 5. This contradicts Lemma \ref{lemma:cycle-five} below. \mbox{}\hspace*{\fill}\nolinebreak\mbox{$\rule{0.6em}{0.6em}$}
\begin{lemma}
\label{lemma:cycle-five}
The bidirectional cycle of length 5 with $\mathcal{W}_i=\{0,1\}$ needs a public message of alphabet size 7 to achieve a zero probability of error for the one-shot problem.
\end{lemma}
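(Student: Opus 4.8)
The plan is to recast the lemma as a graph-colouring statement and then settle it by analysing the independent sets of the relevant ``confusion'' graph. Label the five nodes by $\mathbb{Z}_5$, so node $i$ holds $W_{i-1},W_{i+1}$ and must recover $W_i$. For a one-shot code only the encoder $f:\{0,1\}^5\to[N]$ matters: valid decoders exist if and only if, whenever $f(w)=f(w')$, there is no $i$ with $w_{i-1}=w'_{i-1}$, $w_{i+1}=w'_{i+1}$ and $w_i\neq w'_i$. Let $H$ be the graph on $\{0,1\}^5$ in which $w\sim w'$ precisely when such an $i$ exists; then the smallest public alphabet equals $\chi(H)$, and it suffices to prove $\chi(H)=7$. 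Writing $D(w,w')\subseteq\mathbb{Z}_5$ for the set of coordinates on which $w,w'$ disagree, one checks that $w\sim w'$ iff $D(w,w')$ contains a coordinate whose two cyclic neighbours both lie outside $D(w,w')$ --- equivalently, iff $D(w,w')$ is \emph{not} a cyclic arc of $\mathbb{Z}_5$ of length $0,2,3,4$ or $5$. Hence $S\subseteq\{0,1\}^5$ is independent in $H$ iff every pairwise difference $D(w,w')$ is such an arc.

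For the lower bound $\chi(H)\ge 7$ I will show $\alpha(H)=5$; then, since any $6$-colouring covers at most $6\cdot 5=30<32$ vertices, $\chi(H)\ge 7$. That $\alpha(H)\ge 5$ is witnessed by the set whose members differ from the all-zero word in the coordinate sets $\emptyset,\{1,2\},\{3,4\},\mathbb{Z}_5\setminus\{2\},\mathbb{Z}_5\setminus\{3\}$: its ten pairwise symmetric differences are all arcs (five of length $2$, five of length $4$). For $\alpha(H)\le 5$, fix an independent set $S$, pick $w^{0}\in S$, and put $\mathcal{F}=\{\,D(w^{0},w):w\in S\,\}$; then $|\mathcal F|=|S|$, $\emptyset\in\mathcal F$, and (using $D(w,w')=D(w^0,w)\triangle D(w^0,w')$) $P\triangle Q$ is an arc for all $P,Q\in\mathcal F$. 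Sort $\mathcal F$ by arc length and record: (a) if $\mathbb{Z}_5\in\mathcal F$ then no length-$4$ arc can occur (its symmetric difference with $\mathbb{Z}_5$ has length $1$), and one gets $|\mathcal F|\le 4$; (b) two length-$2$ arcs in $\mathcal F$ must be disjoint (a shared vertex leaves a non-adjacent pair), so at most two occur, and two disjoint length-$2$ arcs have distinct complements, hence no length-$3$ arc is compatible with both; (c) a length-$2$ arc is compatible with exactly one length-$3$ arc, namely its complement; (d) the ``compatibility'' graph on the five length-$3$ arcs is a $5$-cycle, so at most two occur; (e) two length-$4$ arcs $\mathbb{Z}_5\setminus\{i\},\mathbb{Z}_5\setminus\{j\}$ are compatible iff $i,j$ are cyclically adjacent, so at most two occur. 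Combining: if $\mathcal F$ has two length-$2$ arcs then by (b) it has no length-$3$ arc, so $|\mathcal F|\le 1+2+0+2=5$; if exactly one, then by (c) at most one length-$3$ arc, so $|\mathcal F|\le 1+1+1+2=5$; if none, $|\mathcal F|\le 1+0+2+2=5$; and the case $\mathbb{Z}_5\in\mathcal F$ is already $\le 4$. Hence $\alpha(H)=5$.

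For the upper bound $\chi(H)\le 7$ it remains to exhibit seven independent sets covering $\{0,1\}^5$. Since translations $w\mapsto w\oplus v$ are automorphisms of $H$, the natural attempt is seven translates $v_t\oplus S_0$ of the size-$5$ set $S_0$ above: choosing the shifts so that $\{v_1,\dots,v_7\}\oplus S_0=\{0,1\}^5$ gives a proper $7$-colouring, and since $7\cdot 5=35$ while $|\{0,1\}^5|=32$ this only requires arranging that exactly three of the pairwise shifts $v_s\oplus v_t$ fall in the difference set $S_0\oplus S_0$. The existence of such shifts --- or of any partition of $\{0,1\}^5$ into compatible sets of sizes $5,5,5,5,4,4,4$ --- is a finite verification, consistent with the exhaustive-search flavour of Appendix~\ref{appendixB}.

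The crux is the bound $\alpha(H)\le 5$. The two steps needing care are the translation from ``node $i$ is confused'' into ``the difference set is a non-arc'', and the bookkeeping in the case analysis: the naive count $1+2+2+2$ would give $7$, and it is exactly the cross-constraints (b) and (c) --- two length-$2$ arcs forbidding every length-$3$ arc, and a length-$2$ arc being compatible with a unique length-$3$ arc --- that knock this down to $5$.
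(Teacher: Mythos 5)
Your proof is correct and takes a genuinely different route from the paper. The paper's argument is: by Pigeonhole a 6-symbol code forces a color class of $\ge 6$ words, these must be pairwise distinguishable (their XORs must induce a subgraph of $C_5$ with no isolated vertex), and an exhaustive $C{+}{+}$ search in Appendix~\ref{appendixB} verifies that no such family of size 6 exists. You instead observe that the minimum alphabet is the chromatic number of the confusion graph $H$ on $\{0,1\}^5$, prove $\alpha(H)=5$ by hand, and conclude $\chi(H)\ge\lceil 32/5\rceil=7$. Your characterization of adjacency in $H$ (the XOR support must be an arc of $C_5$ of length $0,2,3,4,5$) matches the paper's notion of a ``good'' sequence, and your case analysis via the translated family $\mathcal F$ of arcs correctly accounts for the key cross-constraints (two length-$2$ arcs exclude all length-$3$ arcs; a single length-$2$ arc admits only its complement as a length-$3$ arc; length-$3$ and length-$4$ arcs each form a $5$-cycle under compatibility, bounding each by $2$), landing at $|\mathcal F|\le 5$ in every case, with the size-$5$ witness verifying tightness. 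This replaces the paper's machine search by a short, self-contained combinatorial argument, which is an improvement in transparency. Two small remarks: the sketch of the upper bound $\chi(H)\le 7$ via translates of $S_0$ is not actually carried out (the overlap bookkeeping is more delicate than stated), but the paper itself proves only the lower bound, which is all the subsequent claim requires; and the equivalence ``no isolated vertex iff arc'' is special to $C_5$ (on longer cycles disconnected arc-free sets exist), so it is worth noting you are implicitly using that $C_5$ admits no induced subgraph with $\ge 2$ components each of size $\ge 2$.
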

\begin{proof}
We prove this lemma by contradiction. Assume otherwise that there exists a coding scheme that uses a public message with 6 possibilities. From the Pigeonhole Principle, we conclude that the encoding function maps at least 6 combinations of the messages to one symbol, i.e. there are six sequences of $(w_{1i}, w_{2i}, \cdots, w_{5i})\in\{0,1\}^5$, $i=1,2,\cdots, 6$, whose $f(w_{1i}, w_{2i}, \cdots, w_{5i})$ are equal, i.e. their corresponding public message is the same. Thus, the nodes should be able to recover their own messages using their side information. In other words, for instance for node $1$, if $w_{1i}\neq w_{1i'}$ for some $i$ and $i'$, then we should have $(w_{2i}, w_{5i})\neq (w_{2i'}, w_{5i'})$. Thus the six sequences should be distinguishable, where we call two sequences $(w_{1}, w_{2}, \cdots, w_{5})$ and $(w'_{1}, w'_{2}, \cdots, w'_{5})$ distinguishable if for each $i\in\{1,2,\cdots, 5\}$ either $w_i=w'_i$ or the $w_{j}\neq w'_j$ for some $j: (i,j)\in\mathcal{E}$.

Given a sequence $(w_{1}, w_{2}, \cdots, w_{5})$, consider the graph induced on the set of vertices $\{j: w_j=1\}$. We call the sequence $(w_{1}, w_{2}, \cdots, w_{5})$ ``good" if the induced graph does not contain of a vertex of degree zero (i.e. is connected). For instance, in a cycle of size $5$ if we take $(w_{1}, w_{2}, \cdots, w_{5})=(1,1,1,0,0)$, the induced graph would be on nodes $1,2,3$ which is connected. However $(w_{1}, w_{2}, \cdots, w_{5})=(1,1,0,1,0)$ corresponds to the induced graph on nodes $1,2,4$ which is not connected since node $4$ is not connected to nodes $1$ and $2$.
It is easy to verify that $(w_{1}, w_{2}, \cdots, w_{5})$ and $(w'_{1}, w'_{2}, \cdots, w'_{5})$ distinguishable if and only if their bitwise XOR is good. For instance $(1,1,0,1,0)$ and $(0,0,0,0,0)$ are not distinguishable (by node 4) since their XOR, $(1,1,0,1,0)$ is not good.

Now, we know that the XOR of any two of $(w_{1i}, w_{2i}, \cdots, w_{5i})$, $i=1,2,\cdots, 6$ is good. We show that this cannot happen. Without loss of generality, we can assume that one of the six sequences is the all zero sequence. Therefore, we should look for 5 sequences that are individually good, and their pairwise bitwise XOR is also good. In Appendix \ref{appendixB}, we provide a code in $C++$ which checks all possible cases and shows that such a set of sequences does not exist. 
\end{proof}
\subsection{Proof of Theorem \ref{thm:newstr1}}
Suppose that $\mathsf{G}=(\mathcal{V}, \mathcal{E})$ is a bidirectional graph where $\mathcal{V} = \{1, 2, \cdots, m\}$. To show that $\mathsf{G}$ is critical, we need to find a rate vector, $\overline{\bf r} = (r_1, r_2, \cdots, r_m)$, for every $e = (u, v)\in \mathcal{E}$ that is achievable in $\mathsf{G}$, but  it is not achievable in $\mathsf{G}-e$. We define $\overline{\bf r}$ in the following manner:
\\
\[ r_i = \left\{ 
  \begin{array}{l l}
    1 & \quad \text{$if$ $i=u$ $or$ $i=v$}\\
    0 & \quad \text{$otherwise$}
  \end{array} \right.\]
To show that $\overline{\bf r}$ is achievable in $\mathsf{G}$, suppose that $W_i$ is the message of node $i$, and $W_i \in \mathcal{W}_i$ where:
\[ \left\{ 
  \begin{array}{l l}
    \mathcal{W}_i = \{0, 1\} & \quad \text{if $i=u$ or $i=v$}\\
    \mathcal{W}_i = \{0\} & \quad \text{$otherwise$}
  \end{array} \right.\]
 Now, if we send $W_u \oplus W_v$ as public message, then $u$ and $v$ can decode their message, because they have the message of each other as side information and the sum of their message. As $\mathcal{W}_i$ has only one element for $i \neq u, v$, the other vertices can trivially decode their message. Therefore, $\overline{\bf r}$ is supported by $\mathsf{G}$.

Additionally, since the set $\{u, v\}$ in $\mathsf{G} - e$ has no directed cycle, Lemma \ref{lemma:rates-sum}  implies that for every ${\overline{\bf r'}} = (r'_1, r'_2, \cdots, r'_m)$ supported by $\mathsf{G} -e$, $r_u'+r_v'\leq 1$. Thus, $\overline{\bf r}$ cannot be supported by $\mathsf{G}-e$.

Next we show that a cycle of size four with vertices $\{1, 2, 3, 4\}$ and edges $\{(1, 2), (2, 1), (2, 3), (3, 2), (3, 4), (4, 3),$ $(4, 1), (1, 4)\})$ is not symmetric rate critical. If this graph supports the rate vector $\overline{\bf r} = (r, r, r, r)$, as the set $\{1, 3\}$ has no directed cycle, Lemma \ref{lemma:rates-sum} gives that $r\leq \frac{1}{2}$. In addition, consider the subgraph $\mathsf{H}$ of the cycle with edges$\{(1, 2), (2, 1), (3, 4), (4, 3)\}$. If we send two bits $(W_1 \oplus W_2, W_3\oplus W_4)$ as public message, then all nodes can decode their message. Hence, the rate $(\frac{1}{2}, \frac{1}{2}, \frac{1}{2}, \frac{1}{2})$ is achievable in $\mathsf{H}$. Now, since removing edges $(2, 3), (3, 2), (4, 1), (1, 4)$ do not change the symmetric capacity region of cycle of size four, it is not symmetric rate critical.
\mbox{}\hspace*{\fill}\nolinebreak\mbox{$\rule{0.6em}{0.6em}$}

%===================================================
\subsection {Proof of Theorem \ref{thm3}}
\subsubsection{Criticality of $\mathsf{G} \cup \mathsf{H}$}
In order to show the criticality of $\mathsf{G} \cup \mathsf{H}$ we need to show that by eliminating every edge like $e$ from $\mathsf{G} \cup \mathsf{H}$, the capacity region of the index coding problem related to $\mathsf{G} \cup \mathsf{H}$ shrinks strictly.
Without loss of generality assume that $e \in \mathcal{E}_{\mathsf{G}}$. As $\mathsf{G}$ is a critical graph, there exists a rate vector like $\overline{\bf r}$ that supports $\mathsf{G}$, but not $\mathsf{G}'(\mathcal{V}_{\mathsf{G}}, \mathcal{E}_{\mathsf{G}} - \{e\})$. Now, consider a rate vector for the index coding problem introduced by $\mathsf{G} \cup \mathsf{H}$ in which the rates of nodes in $\mathsf{H}$ are all zero and rates of the nodes in $\mathsf{G}$ equals $\overline{\bf r}$. This rate vector is evidently admissible for $\mathsf{G} \cup \mathsf{H}$, but not for $\mathsf{G}' \cup \mathsf{H}$ (which is $\mathsf{G} \cup \mathsf{H}$ after elimination of $e$).
\subsubsection{Symmetric Criticality of $\mathsf{G} \cup \mathsf{H}$ in Asymptotic Scenarios}
Showing that the maximal symmetric rate also reduces after we remove an edge from $\mathsf{G} \cup \mathsf{H}$ is more challenging. 
Let $r_1$ and $r_2$ be the maximal symmetric rate for $\mathsf{G}$ and $\mathsf{H}$ respectively. It is clear that concatenation of these two coding functions with proportion of $\frac{r_2}{r_1+r_2}$ and $\frac{r_1}{r_1+r_2}$ for $\mathsf{G}$ and $\mathsf{H}$ respectively, results in a coding function for $\mathsf{G} \cup \mathsf{H}$ with the symmetric rate of $r = \frac{r_1r_2}{r_1+r_2}$. 

We claim that this symmetric rate would not be achievable if any edge like $e$ is removed from $\mathsf{G} \cup \mathsf{H}$. This will prove the symmetric criticality of $\mathsf{G} \cup \mathsf{H}$. We are going to prove this claim by contradiction. 
Without loss of generality, assume that $e$ is an edge of $\mathsf{G}$. We refer to the graph obtained by $\mathsf{G}$ after elimination of $e$ as $\mathsf{G}'$. Suppose that there exists a coding function like $f$ for $\mathsf{G}' \cup \mathsf{H}$ with the symmetric rate of $r$. From Theorem \ref{thm3new} then, there exist some $\alpha\in[0,1]$ and symmetric rates $r'_1$ and $r'_2$ for $\mathsf{G}'$ and $\mathsf{H}$ such that $r=\alpha r'_1=\bar{\alpha}r'_2$. This implies that $$\frac{r'_1r'_2}{r'_1 + r'_2}=
\frac{
\frac{r^2}{\alpha\bar{\alpha}}}{\frac{r}{\alpha}+\frac{r}{\bar{\alpha}}}=r=\frac{r_1r_2}{r_1+r_2}$$
Thus,
\begin{align}
&\frac{1}{r'_1} + \frac{1}{r'_2} =\frac{1}{r_1} + \frac{1}{r_2}	\label{eqn:rates_thm4}
\end{align}
However, by the symmetric criticality of $\mathsf{G}$ and by the definitions of $r_1$ and $r_2$, we have that 
\begin{eqnarray}
r'_1 < r_1,&\quad &r'_2 \le r_2\nonumber	\\
\Rightarrow \frac{1}{r_1} < \frac{1}{r'_1},&\quad& \frac{1}{r_2} \le \frac{1}{r'_2}\nonumber\\
\Rightarrow  \frac{1}{r'_1} + \frac{1}{r'_2} & > &  \frac{1}{r_1} + \frac{1}{r_2}	\label{eqn:rates_thm4_2}
\end{eqnarray}
Equations \eqref{eqn:rates_thm4} and \eqref{eqn:rates_thm4_2} are in contradiction with each other. This contradiction completes the proof.

\subsubsection{Symmetric Criticality of $\mathsf{G} \cup \mathsf{H}$ in One-Shot Linear Scenario}
Consider the symmetric one-shot linear index coding problems defined over $\mathsf{G}$, $\mathsf{H}$, and $\mathsf{G} \cup \mathsf{H}$. Assume that the alphabet of each node in each of these problems is $\mathbb{F}^l$ for some finite field $\mathbb{F}$. Let $n_1$ be the minimum possible positive integer number such that there exists a valid linear coding function with the output size of $n_1$ symbols over $\mathbb{F}$. Define $n_2$ for $\mathsf{H}$ in the same manner. It is clear that there exists an encoding function for the problem related to $\mathsf{G}\cup \mathsf{H}$ that uses a public message of $n_1 + n_2$ symbols by concatenation of the encoding functions that use $n_1$ symbols for $\mathsf{G}$ and $n_2$ symbols for $\mathsf{H}$. Hence, the symmetric rate of $\frac{l}{n_1+n_2}$ is achievable for the index coding problem introduced by $\mathsf{G} \cup \mathsf{H}$.

We are going to show that if $\mathsf{G}$ and $\mathsf{H}$ are both symmetric critical, then $\mathsf{G} \cup \mathsf{H}$ is symmetric critical too. To prove the symmetric criticality of $\mathsf{G} \cup \mathsf{H}$, we will prove that any valid coding function for $\mathsf{G} \cup \mathsf{H}$ needs at least $n_1 + n_2 + 1$ public symbols after removal of any edge like $e$ from $\mathsf{G} \cup \mathsf{H}$. Without loss of generality, we assume that $e$ is removed from the $\mathsf{G}$ component of $\mathsf{G} \cup \mathsf{H}$. We refer to the graph obtained by $\mathsf{G}$ after the removal of $e$ as $\mathsf{G}'$. Then, the graph obtained from $\mathsf{G} \cup \mathsf{H}$ after the removal of $e$ would be $\mathsf{G}' \cup \mathsf{H}$. Let $f$ be a valid encoding function for $\mathsf{G}' \cup \mathsf{H}$, we have shown in the proof of part (a) of Theorem \ref{thm3new} that there exist two valid encoding functions $f'$ and $f''$ for $\mathsf{G}'$ and $\mathsf{H}$ so that the concatenation of $f'$ and $f''$ has the same output size to $f$. As $f''$ is a valid encoding function for $\mathsf{H}$, its output size is at least $n_2$. In addition, because of the criticality of $\mathsf{G}$, we know that every valid encoding function for $\mathsf{G}'$, including $f'$, needs at least $n_1 + 1$ symbols. Accordingly, the concatenation of $f'$ and $f''$ has an output size of at least $n_1 + n_2 + 1$. Consequently, the output size of the $f$ is at least $n_1 + n_2 + 1$. This means that $\mathsf{G} \cup \mathsf{H}$ is a symmetric critical graph because it cannot support the symmetric rate of $\frac{l}{n_1+n_2}$ after removal of any of its edges.
\mbox{}\hspace*{\fill}\nolinebreak\mbox{$\rule{0.6em}{0.6em}$}
%===================================================
\subsection {Proof of Theorem \ref{thm5}}
\subsubsection{Proof of part (a)}
In the first step of the proof, we will show that  the new graph $\mathsf{G}'$ supports the symmetric rate of $\overline{\bf r} = (\frac{1}{m-1}, \frac{1}{m-1}, \cdots, \frac{1}{m-1})$. In the next step, we will show that this rate will not be achievable if any edge is eliminated. These two steps will clearly prove this theorem.

\emph{(I) Achievability:}
Let $W_i$ denote the message of node $i$, and $W_i \in \mathcal{W}_i = \{0, 1\}$. Consider the encoding function  
$f(W_1, \cdots, W_m) = (f_1, \cdots, f_m) = (W_1 \oplus W_2, \cdots, W_{j-1}\oplus W_j, W_j \oplus W_{j+1} \oplus W_{m+1}, W_{j+1}\oplus W_{j+2}, \cdots, W_{k-1}\oplus W_k, W_k \oplus W_{k+1}  \oplus W_{m+1}, W_{k+1}\oplus W_{k+2}, \cdots, W_{m-1} \oplus W_m, W_m \oplus W_1).$

The $l^{th}$ element of $f$ is the sum of the side information of node $l$ and $W_l$; therefore node $l$ (for $1 \leq l \leq m$) can decode its message. Node $m+1$ can cosider:
\begin{align}
\displaystyle\bigoplus_{l=1}^{i-1}f_l
=&\displaystyle\bigoplus_{l=1, l \neq j}^{i-1}(W_l \oplus W_{l+1}) \oplus (W_j \oplus W_{j+1} \oplus W_{m+1})\\
=&W_1 \oplus W_i \oplus W_{m+1}
\end{align}
Since node $m+1$ has $W_1$ and $W_i$ as side information, it can decode its message with the help public message.

As $f(W_1, \cdots, W_m) \in \{0, 1\}^m$, it shows the achievability of rate $(\frac{1}{m}, \frac{1}{m}, \cdots, \frac{1}{m})$. To prove the achievability of rate $\overline{\bf r}$, notice that for $t \neq j, k$, we have
$$\bigoplus_{l=1}^{m}f_l=0.$$
Therefore, we can only send $(f_2, \cdots, f_m)$; $f_1$ can be omitted from public message and instead recovered from the rest of $f_i$'s. Thus, the  rate $\overline{\bf r}$ is achievable, too.

\emph{(II) Unachievability after Edge Removal:}
To show that $\mathsf{G}'$ is critical, it suffices to prove that after removing any edge of $\mathcal{E'}$, we will need at least $m$ bits of public message. Lemma \ref{lemma:rates-sum} implies that if there exists a subset of length $m$ in a graph which does not contain any cycle, the rate $\overline{\bf r}$ would not be achievable in the graph (otherwise the sum of the rates would be $\frac{m}{m-1}$ which is greater that 1). Thus, it suffices to show that for every $e \in \mathcal{E'}$, there exists a subset of $\mathcal{V'}$, say $A$, of length at least $m$ such that the induced subgraph of $A$ in $\mathsf{G}'-e$ has no directed cycle. 

First, suppose that $e \in \mathcal{E}$. Then we can choose $A = \mathcal{V}$. As $A$ contains exactly $m$ vertices and the induced graph is a directed path which contains no cycle then these edges are critical. For $e = (m+1, 1)$, $A$ can be chosen as $\mathcal{V} \cup \{m+1\} \setminus \{i\}$. Same argument can be made for $e = (m+1, i)$. For $e = (j, m+1)$, $A$ can be chosen as $\mathcal{V} \cup \{m+1\} \setminus \{k\}$. Same argument can be made for $e = (k, m+1)$.

\subsubsection{Proof of part (b)}
We use the same approach from part (a) to show that the rate $\overline{\bf r} = (\frac{1}{m-1}, \frac{1}{m-1}, \cdots, \frac{1}{m-1})$ is achievable in $\mathsf{G}''$, and by removing every edge the rate would not be achievable. 

\emph{(I) Achievability:}
Suppose that the message of node $(u, i)\in \mathcal{V''}$ is $W_{u, i}$. Then, define:
$$W_u = \bigoplus_{i=1}^{n_u}W_{u, i}$$
Similar to part (a), consider the encoding function $f = (f_1, f_2, \cdots, f_m) = (W_1 \oplus W_2, \cdots, W_{j-1}\oplus W_j, W_j \oplus W_{j+1} \oplus W_{m+1}, W_{j+1}\oplus W_{j+2}, \cdots, W_{k-1}\oplus W_k, W_k \oplus W_{k+1}  \oplus W_{m+1}, W_{k+1}\oplus W_{k+2}, \cdots, W_{m-1} \oplus W_m, W_m \oplus W_1)$. Again, for $1\leq l \leq m$ and $t\in [1:n_l]$, the $l^{th}$ element of $f$, $f_l$, is the sum of the side information and $W_{l,t}$. So, these nodes can decode their message. For $t\in [1:n_{m+1}]$, node $(m+1, t)$ can consider:
\begin{align}
\displaystyle\bigoplus_{l=1}^{i-1}f_l
=&\displaystyle\bigoplus_{l=1, l \neq j}^{i-1}(W_l \oplus W_{l+1}) \oplus (W_j \oplus W_{j+1} \oplus W_{m+1})\\
=&W_1 \oplus W_i \oplus W_{m+1}\\
=& \left(\bigoplus_{l=1}^{n_1}W_{1,l}\right) \oplus \left(\bigoplus_{l=1}^{n_i}W_{i,l}\right) \oplus \left(\bigoplus_{l=1}^{n_{m+1}}W_{m+1,l}\right)
\end{align}
By definition of $\mathsf{G}''$, node $(m+1, t)$ knows $W_{1, 1}, \cdots, W_{1, n_1}, W_{i, 1}, \cdots, W_{i, n_i}, W_{m+1, 1}, \cdots, W_{m+1, t-1}, W_{m+1, t+1}$ $, \cdots, W_{m+1, n_{m+1}}$ as side information. Therefore, with the help of public message and its side information $(m+1, t)$ can decode its message.
Additionally, for $t \neq j, k$, we have:
$$\bigoplus_{l=1}^{m}f_l=0.$$
Thus, $f_1$ can be eliminated from the public message (and instead recovered from the rest) and the rate ${\overline{\bf r}}$ would be achieved.

\emph{(II) Unachievability after Edge Removal:}
Now, we want to show that by elimination of any edge in $\mathcal{E''}$, ${\overline{\bf r}}$ will not be achievable anymore. As discussed in part (a), it suffices to show that after removing any edge in $\mathcal{E''}$, there will be $A \subset \mathcal{V''}$ with at least $m$ vertices which does not contain any directed cycle. As we have two different types of edges in $\mathsf{G}''$, we analyze the impact of edge removal on the capacity region in two different cases.

\textbf{case 1)} \emph{$e = ((u, s), (v, t))$ where $u \neq v$}\\
By definition of $\mathsf{G}''$, we have $e' = (u, v) \in \mathcal{E'}$. In part (a), we proved that there exists $A' \subset \mathcal{V'}$ of size $m$ which does not contain any cycle in $\mathsf{G}'-e'$. Now, choose $A =\{(l, 1): s \in A', l \neq u, v\} \cup \{(u, s), (v, t)\}$. If $(x, y)$ and $(x', y') \in A$, then $x, x' \in A'$ and $x \neq x'$. Thus, $(x, y)$ has edge to $(x', y')$ in $\mathsf{G}''$ if and only if $x$ has edge to $x'$ in $\mathsf{G}'$ and because $A'$ has no cycle in $\mathsf{G}'$ then $A$ has no cycle in $\mathsf{G}''$. 

\textbf{case 2)} \emph{e=((u, s), (u, t))}\\
For $1 \leq u \leq m$, choose $A = \{ (u, s), (u, t)\} \cup \{((u+l)$ mod $m, 1), 1\leq l \leq m - 1\}$, and for $u = m + 1$. Choose $A = \{(l, 1): 1\leq l \leq m, l \leq j, k\} \cup \{(u, s), (u, t)\}$. It is straightforward to check that these two sets contain no cycle. 
\mbox{}\hspace*{\fill}\nolinebreak\mbox{$\rule{0.6em}{0.6em}$}
\subsection {Proof of Theorem \ref{thm6}}
\subsubsection{Proof of part (b)}
As mentioned in Remark \ref{rmk:gic-hypergraph}, the underlying digraph of the hypergraph that characterizes a unicast index coding problem equals the directed graph model we used for unicast index coding problem. Thus, the example we offered in part (b) of Theorem \ref{thm2} works for the groupcast scenario too.

\subsubsection{Proof of parts a and c}
Let $\mathsf{H}=(\mathcal{V}, \mathcal{E})$ be a hypergraph related to a groupcast index coding problem. Further, let $\mathsf{G}=(\mathcal{V}, \mathcal{E}_{\mathsf{G}})$ be the underlying directed graph of $\mathsf{H}$. To prove this theorem, it suffices to show that for every edge $e = (W_i, W_j)$ in $\mathsf{G}$ which is not located in any cycles, elimination of $e$ will not change the capacity region. It should be noted that elimination of $e$ from the underlying digraph is interpreted as removing $W_j$ from the side information set of all receivers who intend to find $W_i$. From now on, we will show the hypergraph obtained by the elimination of $e$ from $\mathsf{G}$ using the notation $\mathsf{H} - e$.

Now, we are going to show that for any valid coding function $f$ for side information hypergraph $\mathsf{H}$, there exists a coding function $f'$ which is valid for the groupcast index coding problem introduced by $\mathsf{H} - e$ and its output size equals the output size of $f$. Let  $\mathcal{V}_1$ be the set of vertices that are reachable from $W_j$ and $\mathcal{V}_2=\mathcal{V}-\mathcal{V}_1$ where $\mathcal{V}$ is the vertex set of the graph. Using the assumption that $e$ is not located in any cycles,  one can conclude that $\mathcal{V}$ is partitioned into two parts $\mathcal{V}_1$  and $\mathcal{V}_2$ such that:
\begin{itemize}
\item $W_i \in \mathcal{V}_2$
\item $W_j \in \mathcal{V}_1$
\item $\nexists (u, v) \in \mathcal{E}_{\mathsf{G}}: \quad u \in \mathcal{V}_1 \wedge v \in \mathcal{V}_2$
\end{itemize}
In other words, any receiver who wants to find a message in $\mathcal{V}_1$ does not have any side information about the messages in $\mathcal{V}_2$. This was the only assumption we used in the proof of part (a) of Theorem \ref{thm3new} in order to show that we can find two coding functions $f_1$ and $f_2$ such that all receivers in $\mathcal{V}_i$ be able to find $f$ using $f_1$, $f_2$, and their side information in $\mathcal{V}_i$ and the size of $(f_1, f_2)$ equals the output size of $f$. Hence, using same arguments, $(f_1, f_2)$ is a valid coding function for the groupcast index coding problem introduced by the hypergraph obtained by eliminating edges between $\mathcal{V}_1$ and $\mathcal{V}_2$ (including $e$).
\mbox{}\hspace*{\fill}\nolinebreak\mbox{$\rule{0.6em}{0.6em}$}

\appendices
\section{Two Useful Lemmas}\label{appendixA}
\begin{lemma}[\cite{ZivYitzhakTSTomer}]
\label{lemma:rates-sum}
Assume that $\mathcal{X}$ is a subset of the vertices of a graph $\mathsf{G} = (\mathcal{V}, \mathcal{E})$ which contains no directed cycle. Then in every rate vector ${\overline{\bf r}} = (r_1, \cdots, r_m)$ supported by $\mathsf{G}$ in non-linear asymptotic case, the following holds:\\
\begin{eqnarray}
\sum_{i\in X}r_i \leq 1
\end{eqnarray}
\end{lemma}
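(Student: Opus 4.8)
The plan is to combine zero-error decodability with a topological ordering of the acyclic subgraph induced on $\mathcal{X}$ to show that the messages $W_{\mathcal{X}} := (W_i)_{i\in\mathcal{X}}$ are jointly a deterministic function of the public message and of the messages outside $\mathcal{X}$; a short entropy estimate then gives the claim. First I would reduce to the one-shot zero-error setting: by Remark \ref{rmk:zeroerror} the asymptotic non-linear region is the same under vanishing or exactly zero error, so it suffices to prove that every one-shot zero-error index code on $\mathsf{G}$ with public alphabet of size $N$ and rates $r_i = \log|\mathcal{W}_i|/\log N$ satisfies $\sum_{i\in\mathcal{X}} r_i\le 1$; applying this to a sequence of zero-error codes whose rate vectors converge to $\overline{\bf r}$ then yields the asymptotic statement.

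Since the subgraph of $\mathsf{G}$ induced on $\mathcal{X}$ is acyclic, I would fix a topological order $v_1,v_2,\dots,v_t$ of its vertices so that every edge of $\mathsf{G}$ with both endpoints in $\mathcal{X}$ is of the form $(v_a,v_b)$ with $a<b$. Writing $S_a = \{\,j : (v_a,j)\in\mathcal{E}\,\}$ for the side information of node $v_a$, this ordering gives $S_a\cap\mathcal{X}\subseteq\{v_{a+1},\dots,v_t\}$, while of course $S_a\setminus\mathcal{X}\subseteq \mathcal{V}\setminus\mathcal{X}$. Using that node $v_a$ recovers $W_{v_a}$ (with zero error, hence as a deterministic function) from $K=f(W_1,\dots,W_m)$ and $W_{S_a}$, a downward induction on $a=t,t-1,\dots,1$ shows that each $W_{v_a}$ is a deterministic function of $(K,W_{\mathcal{V}\setminus\mathcal{X}})$: the base case $v_t$ has $S_t\subseteq\mathcal{V}\setminus\mathcal{X}$, and in the inductive step the side information $W_{S_a}$ involves, besides messages outside $\mathcal{X}$, only $W_{v_{a+1}},\dots,W_{v_t}$, which by the inductive hypothesis are already functions of $(K,W_{\mathcal{V}\setminus\mathcal{X}})$. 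Consequently $H(W_{\mathcal{X}}\mid K,W_{\mathcal{V}\setminus\mathcal{X}})=0$.

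It remains to chain the standard entropy inequalities. Because the $W_i$ are mutually independent and uniform,
\begin{align*}
\sum_{i\in\mathcal{X}}\log|\mathcal{W}_i| &= H(W_{\mathcal{X}}) = H(W_{\mathcal{X}}\mid W_{\mathcal{V}\setminus\mathcal{X}}) \\
&= I(W_{\mathcal{X}};K\mid W_{\mathcal{V}\setminus\mathcal{X}}) + H(W_{\mathcal{X}}\mid K,W_{\mathcal{V}\setminus\mathcal{X}}) \\
&= I(W_{\mathcal{X}};K\mid W_{\mathcal{V}\setminus\mathcal{X}}) \le H(K) \le \log N ,
\end{align*}
and dividing by $\log N$ gives $\sum_{i\in\mathcal{X}} r_i\le 1$, as desired.

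The only point requiring care is the bookkeeping in the downward induction — one must be sure that when node $v_a$ is processed, every element of its side information lying in $\mathcal{X}$ has already been rewritten in terms of $(K,W_{\mathcal{V}\setminus\mathcal{X}})$, which is exactly what the topological ordering guarantees. If one prefers to avoid invoking Remark \ref{rmk:zeroerror} and argue directly with a vanishing-error code, the same chain of inequalities goes through with Fano's inequality inserted at each decoding step, adding an $o(\log N)$ term that disappears upon normalizing by $\log N$ and passing to the limit.
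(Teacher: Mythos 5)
Your proof is correct and follows essentially the same strategy as the paper: both exploit a topological ordering of the acyclic induced subgraph on $\mathcal{X}$ to show, by a sequential decoding argument, that all of $W_{\mathcal{X}}$ is recoverable from the public message together with the messages outside $\mathcal{X}$, and then bound the resulting rate sum by $1$. The paper packages the decoding step operationally as a graph contraction — it merges $\mathcal{X}$ into a single super-node $\alpha$ and invokes the single-node rate bound — whereas you write out the concluding entropy chain $H(W_{\mathcal{X}}) = H(W_{\mathcal{X}}\mid W_{\mathcal{V}\setminus\mathcal{X}}) = I(W_{\mathcal{X}};K\mid W_{\mathcal{V}\setminus\mathcal{X}}) \le H(K) \le \log N$ explicitly; this is a cosmetic difference, not a change of method.
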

Although the lemma above is proved in \cite{ZivYitzhakTSTomer}, we will give a simple operational proof based on graph theory.
\begin{proof}
We construct a new graph $\mathsf{G}' = (\mathcal{V}', \mathcal{E}')$ by contracting the set $\mathcal{X}$ in $\mathsf{G}$.
Strictly speaking, the vertices of $\mathsf{G}'$ include the vertices of $\mathsf{G}$ when we replace all vertices in $\mathcal{X}$ with a single vertex labeled by $\alpha$:
\begin{eqnarray}
\mathcal{V}' = (\mathcal{V}-\mathcal{X})\cup \{\alpha\},
\end{eqnarray}
and edges connected to vertices in $X$ are now connected to $\alpha$ in $\mathsf{G}'$, i.e.
\begin{align*}
\mathcal{E}' =& \{(x,y)\in\mathcal{E}| x\in \mathcal{V}-\mathcal{X}, y\in \mathcal{V}-\mathcal{X}\}
\\&\cup \{(y, \alpha)| y \in \mathcal{V}- \mathcal{X}, \exists x\in \mathcal{X}: (y,x) \in \mathcal{E}\} 
\\&\cup \{(\alpha,y)| y \in \mathcal{V} - \mathcal{X}, \exists x\in \mathcal{X}: (x,y)\in \mathcal{E}\}.
\end{align*}
We prove that if we use the same coding scheme of $\mathsf{G}$ for $\mathsf{G}'$, the node $\alpha$ can decode all messages belonging to the vertices in $\mathcal{X}$. Since the set $\mathcal{X}$ does not contain any cycle, we can order the elements of $\mathcal{X}$ as
$$\mathcal{X}=\{x_1, \cdots, x_t\}$$ such that vertices have only edges to vertices with a higher index, i.e. an edge from vertex $x_i$ to vertex $x_j$ may only exist when $i < j$. Now the vertex $
\alpha$ in $\mathsf{G}'$ can decode the message of $x_t$ due to the fact that $x_t$ is the last element of the order, and therefore, it does not know the messages of the other vertices in $\mathcal{X}$. So, $\alpha$ has all side information of $x_t$ and can decode its message. Next, $x_{t-1}$ can have only the message of $x_t$ from the messages of the vertices in $X$, which has been decoded by now. Thus, $\alpha$ in $\mathsf{G}'$ can decode the message of $x_{t-1}$, too, and this process goes on. Therefore, we can prove by induction that $\alpha$ can obtain all the messages of the vertices in $\mathcal{X}$. In this coding scheme the rate of vertex $\alpha$ equals to:
\begin{eqnarray}
\sum_{i\in \mathcal{X}}r_i
\end{eqnarray}
and by considering the fact that the rate of each vertex cannot be more than $1$, we get our desired result. 
\end{proof}
\begin{lemma}[Tur\'{a}n]
\label{lemma:Turan}
A bidirectional $m$-vertex graph $\mathsf{G}$ that contains no clique of size $k + 1$ has at most $e(m, k)$ edges. Furthermore, the only graph (up to isomorphism) which satisfies the aforementioned condition is $T(m, k)$.
\end{lemma}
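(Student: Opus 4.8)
The plan is to prove this classical statement (Tur\'an's theorem, with uniqueness of the extremizer) by Zykov symmetrization, which yields the edge bound and the uniqueness simultaneously. Here ``edge'' refers to an undirected edge of the bidirectional graph, and it suffices to fix a $K_{k+1}$-free graph $\mathsf{G}$ on $m$ vertices having the maximum possible number of edges and show that $\mathsf{G}\cong T(m,k)$: this forces $e(\mathsf{G})=e(m,k)$ (since $T(m,k)$ is itself $K_{k+1}$-free with $e(m,k)$ edges, so the maximum is exactly $e(m,k)$) and shows that no non-isomorphic graph attains the bound. The case $m\le k$ is immediate, since then $T(m,k)=K_m$ and any $m$-vertex graph has at most $\binom{m}{2}$ edges, with equality only for $K_m$; so assume $m>k$.

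First I would show that non-adjacency is an equivalence relation on $\mathcal{V}(\mathsf{G})$. Reflexivity and symmetry are trivial. For the ``equal degree'' step: if $u\not\sim v$ and $\deg(u)>\deg(v)$, then deleting $v$ and adding a new vertex whose neighbourhood is exactly $N(u)$ keeps the graph $K_{k+1}$-free (no clique can contain both $u$ and its non-adjacent twin, so every clique of the new graph survives in $\mathsf{G}$) while strictly increasing the number of edges, contradicting maximality; hence $u\not\sim v$ implies $\deg(u)=\deg(v)$. For transitivity: if $u\not\sim v$, $v\not\sim w$, but $u\sim w$, then delete $u$ and $w$ and add two mutually non-adjacent twins of $v$; since $\deg(u)=\deg(v)=\deg(w)$, a careful count shows the net change in the number of edges is exactly $+1$, and the resulting graph is still $K_{k+1}$-free, again a contradiction. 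So non-adjacency is transitive, $\mathsf{G}$ is complete $t$-partite for some $t$, and $t\le k$ because $\mathsf{G}$ has no $K_{k+1}$.

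It then remains to balance the parts. Writing $V_1,\dots,V_t$ for the classes, $e(\mathsf{G})=\binom{m}{2}-\sum_{i=1}^t\binom{|V_i|}{2}$, so maximizing edges is the same as minimizing $\sum_i\binom{|V_i|}{2}$ over partitions of $m$ into at most $k$ positive parts. If $t<k$, then since $m>k\ge t$ some class has at least two vertices, and splitting that class into two nonempty parts strictly decreases the sum, so at the optimum $t=k$; and if $|V_i|\ge|V_j|+2$ for some $i,j$, moving one vertex from $V_i$ to $V_j$ strictly decreases the sum. Hence at the optimum the part sizes pairwise differ by at most one, which are precisely the part sizes of $T(m,k)$, and such a partition is unique up to relabelling the classes. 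Therefore $\mathsf{G}\cong T(m,k)$, which proves both $e(\mathsf{G})\le e(m,k)$ and the uniqueness of the extremal graph.

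I expect the main obstacle to be the transitivity step, i.e. making the bookkeeping of the ``double replacement'' precise: deleting $u$ and $w$ removes $\deg(u)+\deg(w)-1$ edges (the edge $uw$ being double-counted), the two added twins of $v$ are mutually non-adjacent and each contributes $\deg(v)$ new edges, and $u,w\notin N(v)$ (because $u\not\sim v\not\sim w$), so the deletion does not affect $N(v)$; combining these with $\deg(u)=\deg(v)=\deg(w)$ gives net change exactly $+1$. Everything else — the ``a twin creates no new clique'' observation and the convexity argument for balancedness — is routine.
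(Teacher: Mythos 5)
Your proposal is a correct and self-contained proof of Tur\'an's theorem via Zykov symmetrization; the counting in the transitivity step (delete $u$ and $w$, removing $2d-1$ edges since $u\sim w$, then add two pairwise non-adjacent twins of $v$ contributing $2d$ edges, for a net gain of one) is right, the observation that adding a non-adjacent twin cannot create a new clique is right, and the convexity argument forcing $t=k$ balanced parts and the uniqueness of that partition are standard and correct. The paper itself, however, does not prove this lemma at all: it simply records the statement and cites it to West \cite[Thm.~5.2.9]{Douglas}, treating it as a known black box. So there is nothing in the paper's proof to compare against; your argument is a full proof where the paper only has a citation. If you wanted to match the paper's economy you could do the same, but as a proof your Zykov-symmetrization route is one of the cleanest ways to obtain both the edge bound and the uniqueness of the extremizer simultaneously, which is exactly what the lemma asserts and what the paper later uses (in the proof of Theorem~\ref{thm1}) to pin down the unique minimal graph.
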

The above lemma is known as the Tur\'{a}n Theorem, and its proof can be found in many graph theory books such as \cite[Thm. 5.2.9]{Douglas}.\\

\section{C++ Code for the proof of Theorem \ref{thm2} part (b) }\label{appendixB}

\lstset{
language=C++,
numbers=left,
stepnumber=1,
numbersep=5pt,
backgroundcolor=\color{white},
showspaces=false,
showstringspaces=false,
showtabs=false,
tabsize=2,
captionpos=b,
breaklines=true,
breakatwhitespace=true,
title=\lstname,
}
\lstinputlisting{cycle.cpp}

\section {All symmetric rate critical graphs on 5 nodes}\label{appendixC}
This section provides all symmetric rate critical graphs on 5 nodes using the list given on the website of Young-Han Kim\cite{YHKpersonalwebsite}. There are a total of 9608 graphs listed on the website, among which 32 are critical, appearing from the next page.
\newpage
\begin{figure}\centering
\begin{tikzpicture}
[->, >=stealth', shorten >=1pt,auto, node distance=2.5cm, thick, main node/.style={circle,fill=blue!20,draw,font=\sffamily\Large\bfseries}]
\node[main node] (1)at(0,3){1};
\node[main node] (2)at(3,1){2};
\node[main node] (3)at(1.8,-2.4){3};
\node[main node] (4)at(-1.8,-2.4){4};
\node[main node] (5)at(-3,1){5};
\path[every node/.style={font=\sffamily\small}]
;
\end{tikzpicture}
\caption{$\beta$ = 5} \label{fig:2}
\end{figure}

\begin{figure}\centering
\begin{tikzpicture}
[->, >=stealth', shorten >=1pt,auto, node distance=2.5cm, thick, main node/.style={circle,fill=blue!20,draw,font=\sffamily\Large\bfseries}]
\node[main node] (1)at(0,3){1};
\node[main node] (2)at(3,1){2};
\node[main node] (3)at(1.8,-2.4){3};
\node[main node] (4)at(-1.8,-2.4){4};
\node[main node] (5)at(-3,1){5};
\path[every node/.style={font=\sffamily\small}]
(4) edge node {}(5)
(5) edge node {}(4)
;
\end{tikzpicture}
\centering
\caption{$\beta = 4$} \label{fig:5}

\end{figure}
\begin{figure}\centering
\begin{tikzpicture}
[->, >=stealth', shorten >=1pt,auto, node distance=2.5cm, thick, main node/.style={circle,fill=blue!20,draw,font=\sffamily\Large\bfseries}]
\node[main node] (1)at(0,3){1};
\node[main node] (2)at(3,1){2};
\node[main node] (3)at(1.8,-2.4){3};
\node[main node] (4)at(-1.8,-2.4){4};
\node[main node] (5)at(-3,1){5};
\path[every node/.style={font=\sffamily\small}]
(3) edge node {}(5)
(4) edge node {}(3)
(5) edge node {}(4)
;
\end{tikzpicture}
\caption{$\beta$ = 4} \label{fig:18}
\end{figure}
\begin{figure}\centering
\begin{tikzpicture}
[->, >=stealth', shorten >=1pt,auto, node distance=2.5cm, thick, main node/.style={circle,fill=blue!20,draw,font=\sffamily\Large\bfseries}]
\node[main node] (1)at(0,3){1};
\node[main node] (2)at(3,1){2};
\node[main node] (3)at(1.8,-2.4){3};
\node[main node] (4)at(-1.8,-2.4){4};
\node[main node] (5)at(-3,1){5};
\path[every node/.style={font=\sffamily\small}]
(3) edge node {}(4)
edge node {}(5)
(4) edge node {}(3)
edge node {}(5)
(5) edge node {}(3)
edge node {}(4)
;
\end{tikzpicture}
\caption{$\beta$ = 3} \label{fig:279}
\end{figure}
\begin{figure}\centering
\begin{tikzpicture}
[->, >=stealth', shorten >=1pt,auto, node distance=2.5cm, thick, main node/.style={circle,fill=blue!20,draw,font=\sffamily\Large\bfseries}]
\node[main node] (1)at(0,3){1};
\node[main node] (2)at(3,1){2};
\node[main node] (3)at(1.8,-2.4){3};
\node[main node] (4)at(-1.8,-2.4){4};
\node[main node] (5)at(-3,1){5};
\path[every node/.style={font=\sffamily\small}]
(2) edge node {}(5)
(3) edge node {}(4)
(4) edge node {}(2)
(5) edge node {}(3)
;
\end{tikzpicture}
\caption{$\beta$ = 4} \label{fig:78}
\end{figure}
\begin{figure}\centering
\begin{tikzpicture}
[->, >=stealth', shorten >=1pt,auto, node distance=2.5cm, thick, main node/.style={circle,fill=blue!20,draw,font=\sffamily\Large\bfseries}]
\node[main node] (1)at(0,3){1};
\node[main node] (2)at(3,1){2};
\node[main node] (3)at(1.8,-2.4){3};
\node[main node] (4)at(-1.8,-2.4){4};
\node[main node] (5)at(-3,1){5};
\path[every node/.style={font=\sffamily\small}]
(2) edge node {}(5)
(3) edge node {}(4)
(4) edge node {}(3)
(5) edge node {}(2)
;
\end{tikzpicture}
\caption{$\beta$ = 3} \label{fig:76}
\end{figure}
\begin{figure}\centering
\begin{tikzpicture}
[->, >=stealth', shorten >=1pt,auto, node distance=2.5cm, thick, main node/.style={circle,fill=blue!20,draw,font=\sffamily\Large\bfseries}]
\node[main node] (1)at(0,3){1};
\node[main node] (2)at(3,1){2};
\node[main node] (3)at(1.8,-2.4){3};
\node[main node] (4)at(-1.8,-2.4){4};
\node[main node] (5)at(-3,1){5};
\path[every node/.style={font=\sffamily\small}]
(2) edge node {}(5)
(3) edge node {}(2)
edge node {}(4)
(4) edge node {}(2)
edge node {}(3)
(5) edge node {}(3)
edge node {}(4)
;
\end{tikzpicture}
\caption{$\beta$ = 3} \label{fig:893}
\end{figure}
\begin{figure}\centering
\begin{tikzpicture}
[->, >=stealth', shorten >=1pt,auto, node distance=2.5cm, thick, main node/.style={circle,fill=blue!20,draw,font=\sffamily\Large\bfseries}]
\node[main node] (1)at(0,3){1};
\node[main node] (2)at(3,1){2};
\node[main node] (3)at(1.8,-2.4){3};
\node[main node] (4)at(-1.8,-2.4){4};
\node[main node] (5)at(-3,1){5};
\path[every node/.style={font=\sffamily\small}]
(2) edge node {}(5)
(3) edge node {}(4)
edge node {}(5)
(4) edge node {}(2)
edge node {}(3)
(5) edge node {}(3)
edge node {}(4)
;
\end{tikzpicture}
\caption{$\beta$ = 3} \label{fig:771}
\end{figure}
\begin{figure}\centering
\begin{tikzpicture}
[->, >=stealth', shorten >=1pt,auto, node distance=2.5cm, thick, main node/.style={circle,fill=blue!20,draw,font=\sffamily\Large\bfseries}]
\node[main node] (1)at(0,3){1};
\node[main node] (2)at(3,1){2};
\node[main node] (3)at(1.8,-2.4){3};
\node[main node] (4)at(-1.8,-2.4){4};
\node[main node] (5)at(-3,1){5};
\path[every node/.style={font=\sffamily\small}]
(2) edge node {}(3)
edge node {}(4)
edge node {}(5)
(3) edge node {}(2)
edge node {}(4)
edge node {}(5)
(4) edge node {}(2)
edge node {}(3)
edge node {}(5)
(5) edge node {}(2)
edge node {}(3)
edge node {}(4)
;
\end{tikzpicture}
\caption{$\beta$ = 2} \label{fig:7174}
\end{figure}
\begin{figure}\centering
\begin{tikzpicture}
[->, >=stealth', shorten >=1pt,auto, node distance=2.5cm, thick, main node/.style={circle,fill=blue!20,draw,font=\sffamily\Large\bfseries}]
\node[main node] (1)at(0,3){1};
\node[main node] (2)at(3,1){2};
\node[main node] (3)at(1.8,-2.4){3};
\node[main node] (4)at(-1.8,-2.4){4};
\node[main node] (5)at(-3,1){5};
\path[every node/.style={font=\sffamily\small}]
(1) edge node {}(5)
(2) edge node {}(4)
(3) edge node {}(2)
(4) edge node {}(1)
(5) edge node {}(3)
;
\end{tikzpicture}
\caption{$\beta$ = 4} \label{fig:239}
\end{figure}
\begin{figure}\centering
\begin{tikzpicture}
[->, >=stealth', shorten >=1pt,auto, node distance=2.5cm, thick, main node/.style={circle,fill=blue!20,draw,font=\sffamily\Large\bfseries}]
\node[main node] (1)at(0,3){1};
\node[main node] (2)at(3,1){2};
\node[main node] (3)at(1.8,-2.4){3};
\node[main node] (4)at(-1.8,-2.4){4};
\node[main node] (5)at(-3,1){5};
\path[every node/.style={font=\sffamily\small}]
(1) edge node {}(5)
(2) edge node {}(4)
(3) edge node {}(2)
(4) edge node {}(3)
(5) edge node {}(1)
;
\end{tikzpicture}
\caption{$\beta$ = 3} \label{fig:238}
\end{figure}
\begin{figure}\centering
\begin{tikzpicture}
[->, >=stealth', shorten >=1pt,auto, node distance=2.5cm, thick, main node/.style={circle,fill=blue!20,draw,font=\sffamily\Large\bfseries}]
\node[main node] (1)at(0,3){1};
\node[main node] (2)at(3,1){2};
\node[main node] (3)at(1.8,-2.4){3};
\node[main node] (4)at(-1.8,-2.4){4};
\node[main node] (5)at(-3,1){5};
\path[every node/.style={font=\sffamily\small}]
(1) edge node {}(5)
(2) edge node {}(4)
(3) edge node {}(1)
edge node {}(2)
(4) edge node {}(1)
edge node {}(3)
(5) edge node {}(2)
edge node {}(3)
;
\end{tikzpicture}
\caption{$\beta$ = 3} \label{fig:2312}
\end{figure}
\begin{figure}\centering
\begin{tikzpicture}
[->, >=stealth', shorten >=1pt,auto, node distance=2.5cm, thick, main node/.style={circle,fill=blue!20,draw,font=\sffamily\Large\bfseries}]
\node[main node] (1)at(0,3){1};
\node[main node] (2)at(3,1){2};
\node[main node] (3)at(1.8,-2.4){3};
\node[main node] (4)at(-1.8,-2.4){4};
\node[main node] (5)at(-3,1){5};
\path[every node/.style={font=\sffamily\small}]
(1) edge node {}(5)
(2) edge node {}(4)
(3) edge node {}(1)
edge node {}(2)
(4) edge node {}(3)
edge node {}(5)
(5) edge node {}(2)
edge node {}(3)
;
\end{tikzpicture}
\caption{$\beta$ = 3} \label{fig:2292}
\end{figure}
\begin{figure}\centering
\begin{tikzpicture}
[->, >=stealth', shorten >=1pt,auto, node distance=2.5cm, thick, main node/.style={circle,fill=blue!20,draw,font=\sffamily\Large\bfseries}]
\node[main node] (1)at(0,3){1};
\node[main node] (2)at(3,1){2};
\node[main node] (3)at(1.8,-2.4){3};
\node[main node] (4)at(-1.8,-2.4){4};
\node[main node] (5)at(-3,1){5};
\path[every node/.style={font=\sffamily\small}]
(1) edge node {}(5)
(2) edge node {}(4)
(3) edge node {}(1)
edge node {}(2)
(4) edge node {}(3)
edge node {}(5)
(5) edge node {}(3)
edge node {}(4)
;
\end{tikzpicture}
\caption{$\beta$ = 3} \label{fig:2290}
\end{figure}
\begin{figure}\centering
\begin{tikzpicture}
[->, >=stealth', shorten >=1pt,auto, node distance=2.5cm, thick, main node/.style={circle,fill=blue!20,draw,font=\sffamily\Large\bfseries}]
\node[main node] (1)at(0,3){1};
\node[main node] (2)at(3,1){2};
\node[main node] (3)at(1.8,-2.4){3};
\node[main node] (4)at(-1.8,-2.4){4};
\node[main node] (5)at(-3,1){5};
\path[every node/.style={font=\sffamily\small}]
(1) edge node {}(5)
(2) edge node {}(4)
(3) edge node {}(2)
edge node {}(5)
(4) edge node {}(1)
edge node {}(3)
(5) edge node {}(2)
edge node {}(3)
;
\end{tikzpicture}
\caption{$\beta$ = 3} \label{fig:2202}
\end{figure}
\begin{figure}\centering
\begin{tikzpicture}
[->, >=stealth', shorten >=1pt,auto, node distance=2.5cm, thick, main node/.style={circle,fill=blue!20,draw,font=\sffamily\Large\bfseries}]
\node[main node] (1)at(0,3){1};
\node[main node] (2)at(3,1){2};
\node[main node] (3)at(1.8,-2.4){3};
\node[main node] (4)at(-1.8,-2.4){4};
\node[main node] (5)at(-3,1){5};
\path[every node/.style={font=\sffamily\small}]
(1) edge node {}(5)
(2) edge node {}(4)
(3) edge node {}(2)
edge node {}(5)
(4) edge node {}(1)
edge node {}(3)
(5) edge node {}(3)
edge node {}(4)
;
\end{tikzpicture}
\caption{$\beta$ = 3} \label{fig:2200}
\end{figure}
\begin{figure}\centering
\begin{tikzpicture}
[->, >=stealth', shorten >=1pt,auto, node distance=2.5cm, thick, main node/.style={circle,fill=blue!20,draw,font=\sffamily\Large\bfseries}]
\node[main node] (1)at(0,3){1};
\node[main node] (2)at(3,1){2};
\node[main node] (3)at(1.8,-2.4){3};
\node[main node] (4)at(-1.8,-2.4){4};
\node[main node] (5)at(-3,1){5};
\path[every node/.style={font=\sffamily\small}]
(1) edge node {}(5)
(2) edge node {}(4)
(3) edge node {}(4)
edge node {}(5)
(4) edge node {}(1)
edge node {}(3)
(5) edge node {}(2)
edge node {}(3)
;
\end{tikzpicture}
\caption{$\beta$ = 3} \label{fig:2120}
\end{figure}
\begin{figure}\centering
\begin{tikzpicture}
[->, >=stealth', shorten >=1pt,auto, node distance=2.5cm, thick, main node/.style={circle,fill=blue!20,draw,font=\sffamily\Large\bfseries}]
\node[main node] (1)at(0,3){1};
\node[main node] (2)at(3,1){2};
\node[main node] (3)at(1.8,-2.4){3};
\node[main node] (4)at(-1.8,-2.4){4};
\node[main node] (5)at(-3,1){5};
\path[every node/.style={font=\sffamily\small}]
(1) edge node {}(5)
(2) edge node {}(5)
(3) edge node {}(2)
edge node {}(4)
(4) edge node {}(1)
edge node {}(3)
(5) edge node {}(3)
edge node {}(4)
;
\end{tikzpicture}
\caption{$\beta$ = 3} \label{fig:2055}
\end{figure}
%\newpage

\begin{figure}\centering
\begin{tikzpicture}
[->, >=stealth', shorten >=1pt,auto, node distance=2.5cm, thick, main node/.style={circle,fill=blue!20,draw,font=\sffamily\Large\bfseries}]
\node[main node] (1)at(0,3){1};
\node[main node] (2)at(3,1){2};
\node[main node] (3)at(1.8,-2.4){3};
\node[main node] (4)at(-1.8,-2.4){4};
\node[main node] (5)at(-3,1){5};
\path[every node/.style={font=\sffamily\small}]
(1) edge node {}(5)
(2) edge node {}(3)
edge node {}(4)
(3) edge node {}(2)
edge node {}(4)
(4) edge node {}(1)
(5) edge node {}(2)
edge node {}(3)
;
\end{tikzpicture}
\caption{$\beta$ = 3} \label{fig:2469}
\end{figure}
\begin{figure}\centering
\begin{tikzpicture}
[->, >=stealth', shorten >=1pt,auto, node distance=2.5cm, thick, main node/.style={circle,fill=blue!20,draw,font=\sffamily\Large\bfseries}]
\node[main node] (1)at(0,3){1};
\node[main node] (2)at(3,1){2};
\node[main node] (3)at(1.8,-2.4){3};
\node[main node] (4)at(-1.8,-2.4){4};
\node[main node] (5)at(-3,1){5};
\path[every node/.style={font=\sffamily\small}]
(1) edge node {}(5)
(2) edge node {}(3)
edge node {}(4)
(3) edge node {}(2)
edge node {}(4)
(4) edge node {}(2)
edge node {}(3)
(5) edge node {}(1)
;
\end{tikzpicture}
\caption{$\beta$ = 2} \label{fig:2467}
\end{figure}
\begin{figure}\centering
\begin{tikzpicture}
[->, >=stealth', shorten >=1pt,auto, node distance=2.5cm, thick, main node/.style={circle,fill=blue!20,draw,font=\sffamily\Large\bfseries}]
\node[main node] (1)at(0,3){1};
\node[main node] (2)at(3,1){2};
\node[main node] (3)at(1.8,-2.4){3};
\node[main node] (4)at(-1.8,-2.4){4};
\node[main node] (5)at(-3,1){5};
\path[every node/.style={font=\sffamily\small}]
(1) edge node {}(5)
(2) edge node {}(4)
edge node {}(5)
(3) edge node {}(1)
(4) edge node {}(2)
edge node {}(3)
(5) edge node {}(2)
edge node {}(4)
;
\end{tikzpicture}
\caption{$\beta$ = 3} \label{fig:2404}
\end{figure}
\begin{figure}\centering
\begin{tikzpicture}
[->, >=stealth', shorten >=1pt,auto, node distance=2.5cm, thick, main node/.style={circle,fill=blue!20,draw,font=\sffamily\Large\bfseries}]
\node[main node] (1)at(0,3){1};
\node[main node] (2)at(3,1){2};
\node[main node] (3)at(1.8,-2.4){3};
\node[main node] (4)at(-1.8,-2.4){4};
\node[main node] (5)at(-3,1){5};
\path[every node/.style={font=\sffamily\small}]
(1) edge node {}(5)
(2) edge node {}(4)
edge node {}(5)
(3) edge node {}(2)
edge node {}(4)
(4) edge node {}(1)
edge node {}(2)
(5) edge node {}(3)
;
\end{tikzpicture}
\caption{$\beta$ = 3} \label{fig:2397}
\end{figure}
\begin{figure}\centering
\begin{tikzpicture}
[->, >=stealth', shorten >=1pt,auto, node distance=2.5cm, thick, main node/.style={circle,fill=blue!20,draw,font=\sffamily\Large\bfseries}]
\node[main node] (1)at(0,3){1};
\node[main node] (2)at(3,1){2};
\node[main node] (3)at(1.8,-2.4){3};
\node[main node] (4)at(-1.8,-2.4){4};
\node[main node] (5)at(-3,1){5};
\path[every node/.style={font=\sffamily\small}]
(1) edge node {}(5)
(2) edge node {}(1)
edge node {}(3)
edge node {}(4)
(3) edge node {}(1)
edge node {}(2)
edge node {}(4)
(4) edge node {}(1)
edge node {}(2)
edge node {}(3)
(5) edge node {}(2)
edge node {}(3)
edge node {}(4)
;
\end{tikzpicture}
\caption{$\beta$ = 2} \label{fig:8625}
\end{figure}
\begin{figure}\centering
\begin{tikzpicture}
[->, >=stealth', shorten >=1pt,auto, node distance=2.5cm, thick, main node/.style={circle,fill=blue!20,draw,font=\sffamily\Large\bfseries}]
\node[main node] (1)at(0,3){1};
\node[main node] (2)at(3,1){2};
\node[main node] (3)at(1.8,-2.4){3};
\node[main node] (4)at(-1.8,-2.4){4};
\node[main node] (5)at(-3,1){5};
\path[every node/.style={font=\sffamily\small}]
(1) edge node {}(5)
(2) edge node {}(3)
edge node {}(4)
edge node {}(5)
(3) edge node {}(1)
edge node {}(2)
edge node {}(4)
(4) edge node {}(1)
edge node {}(2)
edge node {}(3)
(5) edge node {}(2)
edge node {}(3)
edge node {}(4)
;
\end{tikzpicture}
\caption{$\beta$ = 2} \label{fig:8495}
\end{figure}
\begin{figure}\centering
\begin{tikzpicture}
[->, >=stealth', shorten >=1pt,auto, node distance=2.5cm, thick, main node/.style={circle,fill=blue!20,draw,font=\sffamily\Large\bfseries}]
\node[main node] (1)at(0,3){1};
\node[main node] (2)at(3,1){2};
\node[main node] (3)at(1.8,-2.4){3};
\node[main node] (4)at(-1.8,-2.4){4};
\node[main node] (5)at(-3,1){5};
\path[every node/.style={font=\sffamily\small}]
(1) edge node {}(5)
(2) edge node {}(3)
edge node {}(4)
edge node {}(5)
(3) edge node {}(2)
edge node {}(4)
edge node {}(5)
(4) edge node {}(1)
edge node {}(2)
edge node {}(3)
(5) edge node {}(2)
edge node {}(3)
edge node {}(4)
;
\end{tikzpicture}
\caption{$\beta$ = 2} \label{fig:8436}
\end{figure}
\begin{figure}\centering
\begin{tikzpicture}
[->, >=stealth', shorten >=1pt,auto, node distance=2.5cm, thick, main node/.style={circle,fill=blue!20,draw,font=\sffamily\Large\bfseries}]
\node[main node] (1)at(0,3){1};
\node[main node] (2)at(3,1){2};
\node[main node] (3)at(1.8,-2.4){3};
\node[main node] (4)at(-1.8,-2.4){4};
\node[main node] (5)at(-3,1){5};
\path[every node/.style={font=\sffamily\small}]
(1) edge node {}(4)
edge node {}(5)
(2) edge node {}(1)
edge node {}(3)
(3) edge node {}(1)
edge node {}(2)
(4) edge node {}(2)
edge node {}(3)
edge node {}(5)
(5) edge node {}(2)
edge node {}(3)
edge node {}(4)
;
\end{tikzpicture}
\caption{$\beta$ = 2} \label{fig:8285}
\end{figure}
\begin{figure}\centering
\begin{tikzpicture}
[->, >=stealth', shorten >=1pt,auto, node distance=2.5cm, thick, main node/.style={circle,fill=blue!20,draw,font=\sffamily\Large\bfseries}]
\node[main node] (1)at(0,3){1};
\node[main node] (2)at(3,1){2};
\node[main node] (3)at(1.8,-2.4){3};
\node[main node] (4)at(-1.8,-2.4){4};
\node[main node] (5)at(-3,1){5};
\path[every node/.style={font=\sffamily\small}]
(1) edge node {}(4)
edge node {}(5)
(2) edge node {}(3)
edge node {}(5)
(3) edge node {}(2)
edge node {}(4)
(4) edge node {}(1)
edge node {}(3)
(5) edge node {}(1)
edge node {}(2)
;
\end{tikzpicture}
\caption{$\beta$ = 2.5} \label{fig:5626}
\end{figure}
\begin{figure}\centering
\begin{tikzpicture}
[->, >=stealth', shorten >=1pt,auto, node distance=2.5cm, thick, main node/.style={circle,fill=blue!20,draw,font=\sffamily\Large\bfseries}]
\node[main node] (1)at(0,3){1};
\node[main node] (2)at(3,1){2};
\node[main node] (3)at(1.8,-2.4){3};
\node[main node] (4)at(-1.8,-2.4){4};
\node[main node] (5)at(-3,1){5};
\path[every node/.style={font=\sffamily\small}]
(1) edge node {}(4)
edge node {}(5)
(2) edge node {}(3)
edge node {}(5)
(3) edge node {}(2)
edge node {}(5)
(4) edge node {}(1)
edge node {}(3)
(5) edge node {}(1)
edge node {}(2)
edge node {}(4)
;
\end{tikzpicture}
\caption{$\beta$ = 2.5} \label{fig:7016}
\end{figure}
\begin{figure}\centering
\begin{tikzpicture}
[->, >=stealth', shorten >=1pt,auto, node distance=2.5cm, thick, main node/.style={circle,fill=blue!20,draw,font=\sffamily\Large\bfseries}]
\node[main node] (1)at(0,3){1};
\node[main node] (2)at(3,1){2};
\node[main node] (3)at(1.8,-2.4){3};
\node[main node] (4)at(-1.8,-2.4){4};
\node[main node] (5)at(-3,1){5};
\path[every node/.style={font=\sffamily\small}]
(1) edge node {}(4)
edge node {}(5)
(2) edge node {}(3)
edge node {}(5)
(3) edge node {}(2)
edge node {}(5)
(4) edge node {}(1)
edge node {}(2)
edge node {}(3)
(5) edge node {}(1)
edge node {}(4)
;
\end{tikzpicture}
\caption{$\beta$ = 2} \label{fig:7026}
\end{figure}
\begin{figure}\centering
\begin{tikzpicture}
[->, >=stealth', shorten >=1pt,auto, node distance=2.5cm, thick, main node/.style={circle,fill=blue!20,draw,font=\sffamily\Large\bfseries}]
\node[main node] (1)at(0,3){1};
\node[main node] (2)at(3,1){2};
\node[main node] (3)at(1.8,-2.4){3};
\node[main node] (4)at(-1.8,-2.4){4};
\node[main node] (5)at(-3,1){5};
\path[every node/.style={font=\sffamily\small}]
(1) edge node {}(4)
edge node {}(5)
(2) edge node {}(3)
edge node {}(5)
(3) edge node {}(1)
edge node {}(2)
edge node {}(4)
(4) edge node {}(1)
edge node {}(2)
edge node {}(3)
(5) edge node {}(2)
edge node {}(3)
;
\end{tikzpicture}
\caption{$\beta$ = 2} \label{fig:8161}
\end{figure}
\begin{figure}\centering
\begin{tikzpicture}
[->, >=stealth', shorten >=1pt,auto, node distance=2.5cm, thick, main node/.style={circle,fill=blue!20,draw,font=\sffamily\Large\bfseries}]
\node[main node] (1)at(0,3){1};
\node[main node] (2)at(3,1){2};
\node[main node] (3)at(1.8,-2.4){3};
\node[main node] (4)at(-1.8,-2.4){4};
\node[main node] (5)at(-3,1){5};
\path[every node/.style={font=\sffamily\small}]
(1) edge node {}(4)
edge node {}(5)
(2) edge node {}(3)
edge node {}(4)
edge node {}(5)
(3) edge node {}(1)
edge node {}(2)
(4) edge node {}(2)
edge node {}(3)
edge node {}(5)
(5) edge node {}(2)
edge node {}(3)
edge node {}(4)
;
\end{tikzpicture}
\caption{$\beta$ = 2} \label{fig:8847}
\end{figure}
\begin{figure}\centering
\begin{tikzpicture}
[->, >=stealth', shorten >=1pt,auto, node distance=2.5cm, thick, main node/.style={circle,fill=blue!20,draw,font=\sffamily\Large\bfseries}]
\node[main node] (1)at(0,3){1};
\node[main node] (2)at(3,1){2};
\node[main node] (3)at(1.8,-2.4){3};
\node[main node] (4)at(-1.8,-2.4){4};
\node[main node] (5)at(-3,1){5};
\path[every node/.style={font=\sffamily\small}]
(1) edge node {}(2)
edge node {}(3)
edge node {}(4)
edge node {}(5)
(2) edge node {}(1)
edge node {}(3)
edge node {}(4)
edge node {}(5)
(3) edge node {}(1)
edge node {}(2)
edge node {}(4)
edge node {}(5)
(4) edge node {}(1)
edge node {}(2)
edge node {}(3)
edge node {}(5)
(5) edge node {}(1)
edge node {}(2)
edge node {}(3)
edge node {}(4)
;
\end{tikzpicture}
\caption{$\beta$ = 1} \label{fig:9609}
\end{figure}

% use section* for acknowledgement
%\section*{Acknowledgment}

%The authors would like to thank...

% Can use something like this to put references on a page
% by themselves when using endfloat and the captionsoff option.
\ifCLASSOPTIONcaptionsoff
  \newpage
\fi

\end{document}